\newcommand{\newjointcountertheorem}[3]{
	\newaliascnt{#1}{#2}
	\newtheorem{#1}[#1]{#3}
	\aliascntresetthe{#1}
}
\newtheorem{thm}{Theorem}[section]
\theoremstyle{definition}
\theoremstyle{remark}
\DeclareMathOperator{\sign}{sign}
\DeclareMathOperator{\diag}{diag}
\DeclareMathOperator{\Sym}{Sym}
\newcommand{\Alt}{\Lambda}
\newcommand{\DuHe}{\mathrm{DH}}
\DeclareMathOperator{\vol}{vol}
\DeclareMathOperator{\Hom}{Hom}
\DeclareMathOperator{\SU}{SU}
\DeclareMathOperator{\U}{U}
\DeclareMathOperator{\SO}{SO}
\DeclareMathOperator{\conv}{conv}
\DeclareMathOperator{\supp}{supp}
\DeclareMathOperator{\linspan}{span}
\DeclareMathOperator{\Res}{Res}
\DeclareMathOperator{\spec}{eig}
\def\Snospace~{\S{}}
\newcommand{\qbinom}[2]{\genfrac{[}{]}{0pt}{}{#1}{#2}}
\newcommand{\su}{\mathfrak{su}}
\newcommand{\tr}[1]{\mathrm{tr}\left(#1\right)}
\renewcommand{\phi}{\varphi}
\renewcommand{\epsilon}{\varepsilon}
\newcommand{\restrict}[2]{\left.#1\vphantom{\big|}\right|_{#2}}
\renewcommand{\matrix}[1]{\left(\begin{smallmatrix}#1\end{smallmatrix}\right)}
\newcommand{\norm}[1]{\lVert#1\rVert}
\newcommand{\abs}[1]{\lvert#1\rvert}
\newcommand{\PP}{\mathbb P}  % standard algebraic/geometric objects
\newcommand{\CC}{\mathbb C}
\newcommand{\RR}{\mathbb R}
\newcommand{\QQ}{\mathbb Q}
\newcommand{\ZZ}{\mathbb Z}
\newcommand{\NN}{\mathbb N}
\newcommand{\Prob}{\mathbf P}  % probability measures
\newcommand{\Hbf}{\mathbf H}  % the affine hyperplane {\sum t_k = 1}
\newcommand{\Id}{\mathbf 1}
\title{Eigenvalue Distributions of Reduced Density Matrices}
\author{Matthias Christandl\inst{1} \and Brent Doran\inst{2} \and Stavros Kousidis\inst{3} \and Michael Walter\inst{4}}
\institute{
  Institute for Theoretical Physics, ETH Zurich, Wolfgang--Pauli--Strasse 27, 8093 Zurich, Switzerland, \\
  Department of Mathematical Sciences, University of Copenhagen, Universitetsparken 5, 2100 Copenhagen, Denmark,
  \email{christandl@math.ku.dk}
  \and
  Department of Mathematics, ETH Zurich, R\"amistrasse 101, 8092 Zurich, Switzerland, \\
  \email{brent.doran@math.ethz.ch}
  \and
  Institute for Theoretical Physics, ETH Zurich, Wolfgang--Pauli--Strasse 27, 8093 Zurich, Switzerland, \\
  Institute of Physics, University of Freiburg,
  Rheinstrasse 10, 79104 Freiburg, Germany, \\ Rosenthalstr. 17, 53859 Niederkassel, Germany
  \email{st.kousidis@googlemail.com}
  \and
  Institute for Theoretical Physics, ETH Zurich, Wolfgang--Pauli--Strasse 27, 8093 Zurich, Switzerland, \\
  \email{mwalter@phys.ethz.ch}
}
\begin{document}

\maketitle

\begin{abstract}
Given a random quantum state of multiple distinguishable or indistinguishable particles, we provide an effective method, rooted in symplectic geometry, to compute the joint probability distribution of the eigenvalues of its one-body reduced density matrices.
As a corollary, by taking the distribution's support, which is a convex moment polytope, we recover a complete solution to the one-body quantum marginal problem.
We obtain the probability distribution by reducing to the corresponding distribution of diagonal entries (i.e., to the quantitative version of a classical marginal problem), which is then determined algorithmically. This reduction applies more generally to symplectic geometry, relating invariant measures for the coadjoint action of a compact Lie group to their projections onto a Cartan subalgebra, and can also be quantized to provide an efficient algorithm for computing bounded height Kronecker and plethysm coefficients.
\end{abstract}

%%%%%%%%%%%%%%%%%%%%%%%%%%%%%%%%%%%%%%%%%%%%%%%%%%%%%%%%%%%%%%%%%%%%%%%%%%%%%%%
\section{Introduction}
%%%%%%%%%%%%%%%%%%%%%%%%%%%%%%%%%%%%%%%%%%%%%%%%%%%%%%%%%%%%%%%%%%%%%%%%%%%%%%%

The pure state of a quantum system is described by a vector in a complex Hilbert space, or more precisely by a point in the corresponding projective space. Herein we consider the finite dimensional case, for instance a spin system. Since the Hilbert space for multiple particles is given by the tensor product of the Hilbert spaces of the individual particles, its dimension grows exponentially with the number of particles. This exponential behavior is therefore the key obstruction to classical modeling of quantum systems.  The observation is as old as quantum theory itself, and physicists ever since have tried to find ways around it.  In that spirit, our paper presents an effective method to extract those physical features that ``only depend on the one-body eigenvalues'' associated to randomly-chosen quantum states of any fixed number of particles. Our methods are rooted in geometry, and so further a nascent dialog between algebraic and symplectic geometry on the one hand, and the theory of quantum computation and quantum information on the other, with new results for each subject. In particular, we focus on moment polytopes and Duistermaat--Heckman measures, which in recent years have arisen in many mathematical contexts and yet are difficult to compute. We therefore begin with some context, and make an effort in the body of the paper to build a correspondence in terminologies.

Typically, to address the aforementioned exponential complexity, physicists make use of a very simple yet powerful observation: Important properties such as energy and entropy often do not depend on the whole wavefunction but rather on only a small part, namely the \emph{reduced density matrix}, or quantum marginal, of a few particles. For instance, the binding energy of a molecule is given as a minimization over two-electron reduced density matrices arising from $N$-electron wavefunctions. Mathematically, the reduced density matrix is given as the contraction of (or trace over) the indices of the projection operator onto the wavefunction over the remaining particles (see \eqref{definition reduced density matrix} for the precise definition). The problem of characterizing the set of possible reduced density matrices, known as the \emph{quantum marginal problem} in quantum information theory and as the $N$-representability problem in quantum chemistry \cite{ruskai69,colemanyukalov00}, has therefore been considered one of the most fundamental problems in quantum theory \cite{stillinger95}.  The general problem is computationally intractable, even on a quantum computer; more precisely it is QMA-complete and NP-hard \cite{liu06, liuchristandlverstraete07}.   However, the characterization of the one-body reduced density matrices of a pure global quantum state \cite{klyachko04, daftuarhayden04, klyachko06} admits a very elegant mathematical interpretation: it amounts to a description of the possible eigenvalues of the reduced density matrices; and this requires the computation of moment polytopes for coadjoint orbits of unitary groups, as first observed in \cite{christandlmitchison06, daftuarhayden04, klyachko04}.  An interesting consequence is that defining linear inequalities for the polytope can have physical interpretation: for instance, the Pauli principle is simply one linear inequality bounding the polytope of a fermionic system (see \cite{coleman63,borlanddennis72,ruskai07,klyachkoaltunbulak08,klyachko09}, \cite{higuchi03,higuchisudberyszulc03,bravyi04}, \cite{eiserttycrudolphetal08} and \cite{walterdorangrossetal12} for further examples).

Random states play a fundamental role in physics.  In classical \emph{statistical mechanics}, the canonical state (or, ``canonical ensemble'') is the marginal probability distribution (Gibbs measure, or Boltzmann distribution) of states of the system arising from a uniformly random configuration of system and bath, subject to an energy constraint and fixed particle numbers for both system and bath \cite{huang90}. In quantum statistical mechanics, the canonical state of the system is the reduced density matrix of the uniform state on a subspace (encoding the energy constraint) of system and bath. In fact it can be shown that the canonical state almost always approximates the reduced density matrix of a random pure state in this subspace, as for large systems a concentration of measure occurs~\cite{popescushortwinter06, lloyd06, goldsteinlebowitztumulkaetal06}. Earlier, and also out of thermodynamic considerations, Lloyd and Pagels in a seminal work computed the distribution of eigenvalues of the reduced density matrix of the system when system and bath are in a random pure state~\cite{lloydpagels88}, i.e., in the simplest case of two large ``particles'', one representing the system and one the bath.  In this paper we are concerned with computing exact eigenvalue distributions for any number of particles, rather than with asymptotic results or with simple model systems of two particles.  More precisely, we are concerned with the following question:
\begin{quote}
 \emph{What is the probability distribution of the eigenvalues of the one-body reduced density matrices of a pure many-particle quantum state drawn at random from the unitarily invariant distribution?}
\end{quote}
In this article, we answer this question completely by describing an explicit algorithm to compute the joint eigenvalue distribution of the reduced density matrices for an arbitrary number of particles of any statistics (distinguishable, Bose, Fermi). As a special case we easily recover the Lloyd--Pagels result for two distinguishable particles.  As a corollary, our work naturally leads to a solution of the one-body quantum marginal problem in terms of a finite union of polyhedral chambers, thereby providing a complementary perspective on the work of Klyachko and of Berenstein--Sjamaar, and more recently of Ressayre, who each instead provided procedures based on geometric invariant theory (in particular the Hilbert--Mumford criterion) to list characterizing linear inequalities \cite{berensteinsjamaar00,klyachko04,ressayre10} for the moment polytope.

We emphasize that the computed eigenvalue distributions can be directly used to infer distributions of R\'enyi and von Neumann entropies, both of which play a fundamental role in statistical physics and quantum information theory and are functions of the eigenvalues only. In particular, one can recover the average entropy of a subsystem \cite{lubkin78,page93}, which featured in an analysis of the black hole entropy paradox~\cite{page94,HaydenPreskill}. For applications to the study of quantum entanglement, see \autoref{physical applications}.

\medskip

From a mathematical perspective, the eigenvalue distributions that we compute are \emph{Duister\-maat--Heck\-man measures}, which are defined using the push-forward of the Liouville measure on a symplectic manifold along the moment map (see \autoref{notation} for the precise definition which we use in this article) \cite{heckman82,guilleminsternberg82b,guilleminsternberg84,guilleminlermansternberg88,guilleminlermansternberg96,guilleminprato90}.  The support of such a measure is a moment polytope, which in our physical context is the solution to the one-body quantum marginal problem.
We will work and establish our results in this more general symplectic setting, in similar spirit to the existing solution to the one-body quantum marginal problem \cite{berensteinsjamaar00,klyachko04}.  We remark that, in particular, the one-body quantum marginal problem subsumes the well-known problem of determining the possible eigenvalues of the sum $A + B$ of two Hermitian matrices $A$, $B$ with fixed eigenvalues, which goes back at least to Weyl \cite{Weyl12,helmkerosenthal95,klyachko98,knutsontao99,fulton00,knutsontao01,knutsontaowoodward03}. The corresponding eigenvalue distribution for randomly-chosen matrices $A$ and $B$ has also been studied in the literature, and our methods allow us to recover the main results of \cite{dooleyrepkawildberger93} (\autoref{main theorem horn}); see also \cite{frumkingoldberger06} for a more concrete approach.
However, one striking difference between existing approaches to the one-body quantum marginal problem and the corollary to our approach is that the subtleties associated with the non-Abelian nature of their solutions can be completely bypassed.  For instance, their reliance upon cohomology of Schubert cycles, and the interplay of different Weyl groups and sub-tori that feature because of repeated use of the Hilbert--Mumford criterion, may be seen as incidental.  The problem, even in the full generality of the symplectic setting, is at heart an Abelian one whose essential combinatorics is encoded in the maximal torus action together with taking finitely many explicit derivatives.  Remarkably, this is more than philosophy, as it has real import for computation.

\medskip

The first mathematical contribution of this paper is an effective technique for providing explicit Duister\-maat--Heck\-man measures under rather weaker assumptions than appear in the literature (\autoref{algorithms}).  The fact that this subsumes our main question above, and in particular the one-body quantum marginal problem, while the existing literature does not, is the crucial second point of this work (\autoref{purification}).  A third feature is our statement of the Abelianization procedure via the derivative principle for invariant measures (\autoref{main theorem measures}); though we have not seen this principle so formulated in the literature, there certainly are predecessors and the result should be equivalent to one of Harish-Chandra \cite{harishchandra57}.  That a ``quantized'' version of our algorithm can be used to compute (efficiently, unlike existing algorithms, see \autoref{multiplicities for projective space}) multiplicities for the branching problem of representation theory may be seen as the fourth mathematical consequence of our approach.

The basic strategy we follow to address our main question above is a sequence of reductions.  Firstly, our general quantum problem is replaced by an equivalent but more tractable one by ``purification'' of the quantum state (\autoref{math phys dictionary}).  This reduced problem is seen to satisfy a weak non-degeneracy assumption, so that the image of the moment map does not lie entirely on a wall in the relevant Weyl chamber.  Under this assumption we can reduce via a \emph{derivative principle} to the Duister\-maat--Heck\-man measure for the maximal torus action (\autoref{derivative principle}), which we evaluate by a ``single-summand'' algorithm along the lines of Boysal--Vergne \cite{boysalvergne09} (\autoref{projective space}).
This derivative principle holds for more general $K$-invariant measures on the dual of the Lie algebra, $\mathfrak k^*$:
Every invariant measure can be reconstructed from its projection onto the dual of the Lie algebra of the maximal torus by taking partial derivatives in the direction of negative roots (\autoref{main theorem measures}).  Again, we remark that such a reduction is not possible on the level of the supports of the Duister\-maat--Heck\-man measures, i.e., on the moment polytopes.  We also remark that in the case of the one-body quantum marginal problem, the Duister\-maat--Heck\-man measure for the maximal torus action is the joint distribution of the diagonal entries of the one-body reduced density matrices (as opposed to their eigenvalues). Amusingly, this distribution can be viewed as the solution to the quantitative version of a classical marginal problem (\autoref{classical}).
The single-summand algorithm alluded to above is an effective method for computing Duister\-maat--Heck\-man measures for torus actions on projective spaces (\autoref{projective space algorithm}). Its name stems from the fact that it amounts to evaluating a single summand of the kind that occurs in the well-known Heckman formula of Guillemin--Lerman--Sternberg \cite{guilleminlermansternberg88}, which expresses the measure as an alternating sum of iterated convolutions of Heaviside measures. We also describe an algorithm based on this latter formula (\autoref{abelian heckman algorithm}) that can in particular be applied to projections of coadjoint orbits (\autoref{heckman for coadjoint orbit reductions}), and hence to the setting of Berenstein--Sjamaar \cite{berensteinsjamaar00}.

Whenever a Hamiltonian group action can be quantized in a certain technical sense, Duister\-maat--Heck\-man measures have an interpretation as the asymptotic limit of associated rep\-re\-sen\-ta\-tion-theoretic multiplicities \cite{heckman82,guilleminsternberg82b,sjamaar95,meinrenken96,meinrenkensjamaar99,vergne98}.
%\cite{alickirudnickisadowski88,duffield90,keylwerner01}.
In the second part of this article (\autoref{multiplicities section} and \autoref{kronecker section}), we thus study the representation theory connected to the one-body quantum marginal problem; here, the relevant multiplicities include the Kronecker coefficients, which play a major role in the representation theory of the unitary and symmetric groups \cite{fulton97}, as well as in Mulmuley and Sohoni's geometric complexity theory approach to the P vs.\ NP problem in computer science \cite{mulmuleysohoni01,mulmuleysohoni08,mulmuley07,burgisserlandsbergmaniveletal11}.
It has been observed that the existence of a pure tripartite quantum state with given marginal eigenvalue spectra is equivalent to the asymptotic non-vanishing of an associated sequence of Kronecker coefficients \cite{christandlmitchison06,klyachko04,christandlharrowmitchison07}, see also \cite{daftuarhayden04,knutson09,burgisserchristandlikenmeyer11,burgisserchristandlikenmeyer11b}. For a similar connection in the context of Horn's conjecture and the Littlewood--Richardson coefficients, see \cite{lidskii82,knutson00,christandl08}. Interestingly, the behavior of these multiplicities can also be
related to thermodynamics and statistical physics following the lines of Okounkov \cite{okounkov00}.

%The ``semi-classical limit'' alluded to above quantifies this relation: The joint eigenvalue distribution of the reduced density matrices can be recovered as the weak limit of the Kronecker coefficients, when the latter are scaled appropriately:
%\begin{equation}
 %\frac 1 {n^\text{const}} \sum_{\lambda_A,\lambda_B,\lambda_C}
%g_{\lambda_A,\lambda_B,\lambda_C} ~
%\delta_{\lambda_A/n,\lambda_B/n,\lambda_C/n} \rightarrow \frac 1 {p_A
%p_B p_C}\Prob_{\spec}.
%\end{equation}
%In particular, this connection leads to an extremely simple proof of
%the bipartite marginal eigenvalue formula from
%\cite{lloydpagels88,zyczkowskisommers01}.

Our main results in this context are quantized versions of our earlier theorems: Kronecker
coefficients can be computed by applying finite difference operators
to weight multiplicities which are related
to the classical marginal problem: Instead of measuring the volume of a polytope, one
has to count the number of lattice points in the polytope. This can be computed efficiently using Barvinok's algorithm
\cite{barvinok94}, and so leads to an efficient algorithm for
computing Kronecker coefficients for Young diagrams with bounded
height. Again, we shall establish the results in greater generality
and recover the version for Kronecker coefficients as a special case.

\subsection{Notation and Conventions}
\label{notation}

Throughout this article, $K$ will denote a compact, connected Lie group with maximal torus $T \subseteq K$, rank $r = \dim T$, Weyl group $W$, respective Lie algebras $\mathfrak k$ and $\mathfrak t$, and integral lattice $\Lambda = \ker \restrict \exp {\mathfrak t}$ \cite{cartersegalmacdonald95,kirillov08}.
We write $\pi_{K,T} \colon \mathfrak k^* \rightarrow \mathfrak t^*$ for the projection dual to the inclusion $\mathfrak t \subseteq \mathfrak k$.
We will think of weights as elements of the dual lattice, $\Lambda^* = \Hom_{\ZZ}(\Lambda, \ZZ) \subseteq \mathfrak t^*$, and identify a character $\chi \colon T \rightarrow \U(1)$ with the weight $d\chi/{2 \pi i} \in \mathfrak t^*$.
We denote by $d\lambda$ the Lebesgue measure on $\mathfrak t^*$ that is normalized in such a way that any fundamental domain of the weight lattice $\Lambda^*$ has unit measure. Let us also choose a positive Weyl chamber $\mathfrak t^*_+ \subseteq \mathfrak t^*$; this determines a set of positive roots $\{ \alpha > 0 \} = \{ \alpha_1, \ldots, \alpha_R \} \subseteq \mathfrak t^*$.
The set of negative roots is by definition $\{ -\alpha : \alpha > 0 \}$.
We write $\mathfrak t^*_{>0}$ for the interior of the positive Weyl chamber, which contains the strictly dominant weights.
We will often identify $\mathfrak k$ and its dual $\mathfrak k^*$, as well as $\mathfrak t$ and $\mathfrak t^*$, via some fixed $K$-invariant inner product $\braket{-,-}$ on $\mathfrak k$.

For the special unitary group $\SU(d)$, which we always take to be the group of unitary $d\times{}d$-matrices with unit determinant, we use the maximal torus consisting of diagonal matrices, on which the Weyl group $S_d$ acts by permuting diagonal entries. The Lie algebra $\su(d)$ consists of anti-Hermitian matrices with trace zero, and our choice of invariant inner product is $\braket{X, Y} = -\tr{X Y}$. Using it to identify $\mathfrak t$ and $\mathfrak t^*$, a positive Weyl chamber $\mathfrak t^*_+$ is given by the set of diagonal matrices $\lambda = \diag(\lambda_1, \ldots, \lambda_d)$ with purely imaginary entries, summing to zero and arranged in such a way that $i \lambda_1 \geq \ldots \geq i \lambda_d$. This corresponds to choosing the positive roots $\alpha_{j,k}(\lambda) = i(\lambda_j - \lambda_k)$ with $j < k$. The points in the interior $\mathfrak t^*_{>0}$ are those $\lambda \in \mathfrak t^*_+$ with all distinct eigenvalues, i.e., $i \lambda_1 > \ldots > i \lambda_d$.

% The quotient map $\tau_{\SU(d)} \colon \su(d)^* \rightarrow \mathfrak t^*_+$ amounts to sending a matrix to the unique diagonal matrix in $\mathfrak t^*_+$ with the same spectrum. Therefore, coadjoint orbits $\mathcal O_\lambda$ consist precisely of the set of anti-Hermitian matrices with the same spectrum as $\lambda$. The non-Abelian moment map $\Phi_{\SU(d)}$ is given by the inclusion map $\mathcal O_\lambda \subseteq \su(d)^*$, whereas the Abelian moment map $\Phi_T$ projects a matrix onto its diagonal.

\medskip

Let $M$ be a compact, connected Hamiltonian $K$-manifold of dimension $2n$, with symplectic form $\omega_M$ and a choice of moment map $\Phi_K \colon M \rightarrow \mathfrak k^*$ \cite{cannasdasilva08,guilleminsternberg84}. The intersection $\Delta_K(M) = \Phi_K(M) \cap \mathfrak t^*_+$ of its image with the
positive Weyl chamber is a compact convex polytope, called the \emph{moment polytope} or \emph{Kirwan polytope} \cite{guilleminsternberg82,kirwan84b,guilleminsjamaar05}.
If $M$ is a coadjoint $K$-orbit $\mathcal O_\lambda$ through some $\lambda \in \mathfrak t^*_+$, it will always be equipped with the Kirillov--Kostant--Souriau symplectic form and the moment map induced by the inclusion $\mathcal O_\lambda \subseteq \mathfrak k^*$. Evidently, $\Delta_K(\mathcal O_\lambda) = \{\lambda\}$.
Throughout this article, we will always impose the following non-degeneracy condition:

\begin{asm}
  \label{main assumption}
  The moment polytope $\Delta_K(M)$ has non-empty intersection with the interior of the positive Weyl chamber, $\mathfrak t^*_{>0}$.
\end{asm}

In view of \cite[Lemma 3.9]{lermanmeinrenkentolmanetal98} and well-known facts about compact Lie group actions, this assumption in fact implies the following: The set $\Phi_K^{-1}(K \cdot \mathfrak t^*_{>0})$ is an open, dense subset of $M$ whose complement has Liouville measure zero.
% Proof: use facts about non-principal orbit strate in \cite{bredon72,onishchik97}; combine with the fact that positive codimension submanifolds have measure zero \cite[(8.18)]{knapp02}
%
We show in \autoref{purification} that \autoref{main assumption} does \emph{not} restrict the applicability of our techniques to the problem of computing eigenvalue distributions of reduced density matrices.

\medskip

The \emph{Duister\-maat--Heck\-man measure} $\DuHe^K_M$ is then defined as follows \cite{duistermaatheckman82}:
Push forward the Liouville measure $\mu_M = \omega_M^n / ((2 \pi)^n n!)$ on $M$ along the moment map $\Phi_K$, compose with the push-forward along the quotient map $\tau_K \colon \mathfrak k^* \rightarrow \mathfrak t^*_+$ which sends all points in a coadjoint orbit $\mathcal O_\lambda$ to $\lambda$ in the positive Weyl chamber; then divide the resulting measure by the polynomial $p_K(\lambda) = \prod_{\alpha > 0} {\langle \lambda, \alpha \rangle} / {\langle \rho, \alpha \rangle}$, where $\rho$ is half the sum of positive roots. That is,
\begin{equation}
  \label{definition duistermaat-heckman measure}
  \DuHe^K_M = \frac 1 {p_K} (\tau_K)_* (\Phi_K)_* (\mu_M).
\end{equation}
\autoref{main assumption} ensures that $\DuHe^K_M$ is a locally finite measure on the interior of the positive Weyl chamber. Its support is equal to the moment polytope.
% Proof: since if $U \subseteq \mathfrak t^*_+$ is an open set which intersects the moment polytope then $\Phi_K^{-1}(\tau_K^{-1}(U))$ is a non-empty, open subset of $X$ and therefore has positive Liouville measure; since the density "form" of course vanishes nowhere.
Note that $p_K(\lambda)$ is equal to the Liouville volume of a maximal-dimensional coadjoint orbit $\mathcal O_\lambda$ \cite[Proposition 7.26]{berlinegetzlervergne03}. Therefore, the above is a natural definition to use in our context: It is normalized so that the Duister\-maat--Heck\-man measure associated with the action of $K$ on a generic coadjoint orbit $\mathcal O_\lambda$ is a probability distribution concentrated at the point $\lambda$.

If $H$ is another compact, connected Lie group, with Lie algebra $\mathfrak h$, acting on $M$ via a group homomorphism $\phi \colon H \rightarrow K$, then this action is also Hamiltonian, with a moment map given by the composition
\begin{equation}
  \label{restriction to subgroups}
  \Phi_H = (d\phi)^* \circ \Phi_K \colon M \rightarrow \mathfrak k^* \rightarrow \mathfrak h^*.
\end{equation}
This in turn determines a Duister\-maat--Heck\-man measure $\DuHe^H_M$.
In particular, we can associate moment maps and Duister\-maat--Heck\-man measures with all closed subgroups of $K$. In the case of the maximal torus $T \subseteq K$, we shall call $\Phi_T$ the \emph{Abelian moment map} and $\DuHe^T_M$ the \emph{Abelian Duister\-maat--Heck\-man measure}, in order to distinguish them from the \emph{non-Abelian moment map} $\Phi_K$ and the \emph{non-Abelian Duister\-maat--Heck\-man measure} $\DuHe^K_M$, respectively. Explicitly,
\begin{equation}
  \label{definition duistermaat-heckman measure abelian}
  \DuHe^T_M = (\Phi_T)_* (\mu_M) = (\pi_{K,T})_* (\Phi_K)_* (\mu_M).
\end{equation}

Throughout this paper we shall assume for simplicity that the Abelian moment polytope $\Delta_T(M)$ is of maximal dimension.
This can always be arranged for by replacing $T$ by the quotient $T / \bigcap_{m \in M} T_m$, where $T_m$ denotes the $T$-stabilizer of a point $m \in M$.
If $T$ is the maximal torus of a semisimple Lie group, it follows already from \autoref{main assumption} that $\Delta_T(M)$ is of maximal dimension.
As a consequence,
% $T$ acts locally effective on $M$, and since $M$ is compact and connected
% [Assume on the contrary that there exists a one-parameter subgroup $e^{\RR X} \subseteq T$ stabilizing every point in $M$. Then the moment map property implies that every point in $\Delta_T(M)$ is critical for the Abelian moment map. In view of Sard's theorem, this contradicts the fact that $\Delta_T(M)$ has maximal dimension.]
$T$ acts locally freely on a dense, open subset whose complement has Liouville measure zero \cite[Lemma 3.1]{duistermaatheckman82}. In particular, generic points in $M$ are regular for the Abelian moment map. Therefore the Abelian Duister\-maat--Heck\-man measure is absolutely continuous with respect to Lebesgue measure on $\mathfrak t^*$.
% Proof: densities push forward along submersions
By the Duister\-maat--Heck\-man Theorem, $\DuHe^T_M$ in fact has a polynomial density function of degree at most $n-r$
on each connected component of the set of regular values \cite[Corollary 3.3]{duistermaatheckman82}. We shall call these components the \emph{regular chambers}, and one can show that, except for the unbounded one, every such chamber is an open convex polytope. If the closures of two regular chambers have a common boundary of maximal dimension (i.e., of codimension one) then we shall say that the two chambers are \emph{adjacent} and call the common boundary a \emph{critical wall}.

All Hilbert spaces which we consider in this article are complex and finite-dimensional.
We write $P_\psi$ for the orthogonal projection onto a one-dimensional subspace $\CC \psi$, and $\norm{X}_2 = \sqrt{\tr{X^*X}} = \sum_j s_j^2$ for the Hilbert--Schmidt norm of an operator $X$ with singular values $(s_j)$.
We use $\braket{-,-}$ to denote inner products as well as the pairing between measures (or more general distributions) and test functions.
We write $\delta_p$ for the \emph{Dirac measure} at $p$, i.e., the probability measure concentrated at the point $p$, and $H_\omega$ for the \emph{Heaviside measure} which is defined by $\langle H_\omega, f \rangle = \int_0^\infty f(t \omega) dt$.
We sometimes use the letter $\Prob$ for probability distributions.

Throughout the paper when we speak of the quantum marginal problem we always refer to its one-body version as described in \autoref{math phys dictionary}.
Finally, we offer a word of caution for people acquainted with the theory of geometric quantization \cite{guilleminsternberg77,woodhouse92}:
Our quantum states do not arise via some quantization procedure from a classical symplectic phase space. In contrast, herein, as detailed in \autoref{math phys dictionary} below, the spaces of quantum states themselves are Hamiltonian manifolds. Probability distributions can be realized as quantum states of a special form, and the passage from quantum to classical is related to passing from a non-Abelian group to its maximal torus (see \autoref{classical} for precise statements). The ``semiclassical limit'' well-known in geometric quantization does not have an analogous physical meaning in our setting; its significance is solely to connect the symplectic geometry with representation theory (see \autoref{semiclassicallimit subsection}).

%%%%%%%%%%%%%%%%%%%%%%%%%%%%%%%%%%%%%%%%%%%%%%%%%%%%%%%%%%%%%%%%%%%%%%%%%%%%%%%
\section{Density Matrices and Purification}
\label{math phys dictionary}
%%%%%%%%%%%%%%%%%%%%%%%%%%%%%%%%%%%%%%%%%%%%%%%%%%%%%%%%%%%%%%%%%%%%%%%%%%%%%%%

The applicability of symplectic geometry to the quantum marginal problem relies on the close relation between the Lie algebra of $\SU(d)$ and the density matrices of quantum mechanics, and on the fact that restricting to certain subgroups has the physical meaning of passing to reduced density matrices, which describe the quantum state of subsystems. In this section we will describe this relationship in some detail (\autoref{density matrices}, \autoref{reduced density matrices}), and show how one can reduce the general problem of computing joint eigenvalue distributions of reduced density matrices to the case of globally pure quantum states, that is, to the Duister\-maat--Heck\-man measure for a projective space (\autoref{purification}). We briefly discuss how probability distributions and the classical marginal problem are embedded in our setup (\autoref{classical}) and describe some immediate physical applications (\autoref{physical applications}).

\subsection{Density Matrices}
\label{density matrices}

\begin{dfn}
  A \emph{density matrix} is a positive-semidefinite Hermitian operator $\rho$ of trace one acting on a finite-dimensional Hilbert space $\mathcal H$. We will often choose coordinates and think of $\rho$ as a matrix.
  If $\rho$ is the orthogonal projection onto a one-dimensional subspace then we say that $\rho$ is a \emph{pure state}; otherwise, it is a \emph{mixed state}.
  An \emph{observable} is an arbitrary Hermitian operator acting on $\mathcal H$.
\end{dfn}

Density matrices on $\mathcal H$ describe the state of a quantum system modeled by the Hilbert space $\mathcal H$: According to the postulates of quantum mechanics, the expectation value of an observable $O$ is given by the pairing $\tr{O \rho} \in \RR$. Of course, $\rho$ is characterized by these expectation values, even if we use anti-Hermitian observables instead and restrict to trace zero (since the trace of $\rho$ is fixed). That is, we have an injection
\begin{equation}
  \label{density matrices to functionals}
  \rho \mapsto \left( X \mapsto i \tr{X \rho} \right) \in \su(\mathcal H)^*
\end{equation}
which extends to an isomorphism between the affine space of trace-one Hermitian operators on $\mathcal H$ and $\su(\mathcal H)^*$.
This isomorphism is $\SU(\mathcal H)$-equivariant; its inverse sends a coadjoint orbit $\mathcal O_\lambda$ to the set of Hermitian operators with eigenvalues
\begin{equation}
  \label{spectra to positive weyl chamber}
  \hat\lambda_j = \frac 1 {\dim \mathcal H} + i \lambda_j
  \quad
  (j=1,\ldots,\dim \mathcal H),
\end{equation}
where the $\lambda_j$ are the eigenvalues of $\lambda$ (eigenvalues are repeated according to their multiplicity).
Let us choose coordinates $\mathcal H \cong \CC^d$ and identify $\SU(\mathcal H) \cong \SU(d)$ accordingly. Then the eigenvalues $\lambda_i$ are just the diagonal entries of the matrix $\lambda \in \mathfrak t^*_+$ labeling the coadjoint orbit (see \autoref{notation} for our conventions).
It follows that \eqref{spectra to positive weyl chamber} defines a bijection between the positive Weyl chamber $\mathfrak t^*_+$ and the set of eigenvalue spectra of trace-one Hermitian operators, which we think of as elements of the set $\{ \hat\lambda \in \mathbf R^d : \hat\lambda_1 \geq \ldots \geq \hat\lambda_d, \sum_j \hat\lambda_j = 1 \}$.
% Proof: For traceless and anti-Hermitian X, we have
%    i tr(X \hat\lambda)
%   = -tr(X (-i\hat\lambda))
%   = -tr(X (-i\hat\lambda + i/d))
% Therefore the functional gets identified with the element
%   lambda = -i\hat\lambda + i/d
% so that
%   i\hat\lambda = -lambda + i/d
%   \hat\lambda = 1/d + i lambda
Note that the set of pure states is identified with the coadjoint orbit through the highest weight of the defining representation of $\SU(\mathcal H)$, that is, with projective space $\PP(\mathcal H)$: The density matrix corresponding to a point $[\psi] \in \PP(\mathcal H)$ is simply the orthogonal projection $P_\psi$ onto $\CC \psi$.

The Liouville measure on a coadjoint orbit $\mathcal O_\lambda$ is identified via \eqref{density matrices to functionals} with the unique $\SU(\mathcal H)$-invariant measure on the set of Hermitian matrices with spectrum $\hat\lambda$, normalized to total volume
\begin{equation}
  \label{volume coadjoint su orbit}
  p_{\SU(\mathcal H)}(\lambda) =
  \prod_{j < k} \frac{i (\lambda_j - \lambda_k)}{k - j} =
  \prod_{j < k} \frac{(\hat\lambda_j - \hat\lambda_k)}{k - j}.
\end{equation}
%where $\lambda_1, \ldots, \lambda_d$ are the eigenvalues of $\lambda$.
% ...since the Weyl vector is equal to $\rho = \frac 1 2 \sum_{i < j} \alpha_{j,k} = 1/2 \diag(d-1, d-3, \ldots, 3-d, 1-d})$.
% Note that the numerator is a Vandermonde determinant!
The non-Abelian Duister\-maat--Heck\-man measure of $\mathcal O_\lambda$ corresponds to the Dirac measure at $\hat\lambda$, while the Abelian Duister\-maat--Heck\-man measure corresponds to the distribution of the diagonal entries of a density matrix with spectrum $\hat\lambda$ chosen according to the invariant measure.

\begin{exl}
  \label{bloch ball}
  The Lie algebra $\su(2)$ is three-dimensional, generated by $-\frac i 2$ times the Pauli matrices $\sigma_x, \sigma_y, \sigma_z$. Functionals in its dual $\su(2)^*$ can be identified with points in $\RR^3$ by evaluating them at these generators. In this picture, the inverse of \eqref{density matrices to functionals} associates to a vector $\vec r = (x,y,z)$ the Hermitian matrix $\rho(\vec r) = \frac 1 2 ( \Id + \vec r \cdot \vec\sigma )$, where $\vec\sigma$ is the Pauli vector $(\sigma_x, \sigma_y, \sigma_z)$. The $z$-axis is identified with the Lie algebra of the maximal torus, $\RR \sigma_z$, its positive half-axis with our choice of positive Weyl chamber $\mathfrak t^*_+$, and $(0,0,2)$ with the corresponding positive root $\alpha > 0$.
  The coadjoint action of elements in $\SU(2)$ amounts to rotating the \emph{Bloch vector} $\vec r$ via the two-fold covering map $\SU(2) \rightarrow \SO(3)$. Therefore, coadjoint orbits are spheres, commonly called \emph{Bloch spheres} in quantum mechanics. They can be labeled by their radius $r$, that is, by their intersection with the positive half of the $z$-axis. Points on such a sphere correspond to Hermitian matrices with eigenvalue spectrum $(\frac {1+r} 2, \frac {1-r} 2)$. Note that $\rho(\vec r)$ is positive-semidefinite (i.e., a density matrix) if and only if $\vec r$ is contained in the unit ball of $\RR^3$.

  The non-Abelian moment map is just the inclusion map of a Bloch sphere into $\RR^3$. Hence its composition with the quotient map $\tau_{\SU(2)}$ sends all points in a Bloch sphere of radius $r > 0$ to $r$, while the Abelian moment map projects all points onto the $z$-axis. The Liouville measure is equal to the usual round measure, normalized to total volume $r$.
  Therefore, the non-Abelian Duister\-maat--Heck\-man measure of a Bloch sphere with radius $r$ is equal to the Dirac measure $\delta_r$, while the Abelian Duister\-maat--Heck\-man measure is obtained by pushing forward the Liouville measure onto the $z$-axis (see \autoref{bloch ball figure}). As already observed by Archimedes, any two zones of the same height on a sphere have the same area. Hence this latter measure is proportional to Lebesgue measure on the interval $[-r,r]$. An analogous statement holds for arbitrary projective spaces (\autoref{projective space abelian via standard simplex}).

  \begin{figure}
    \centering
    \includegraphics[width=6cm]{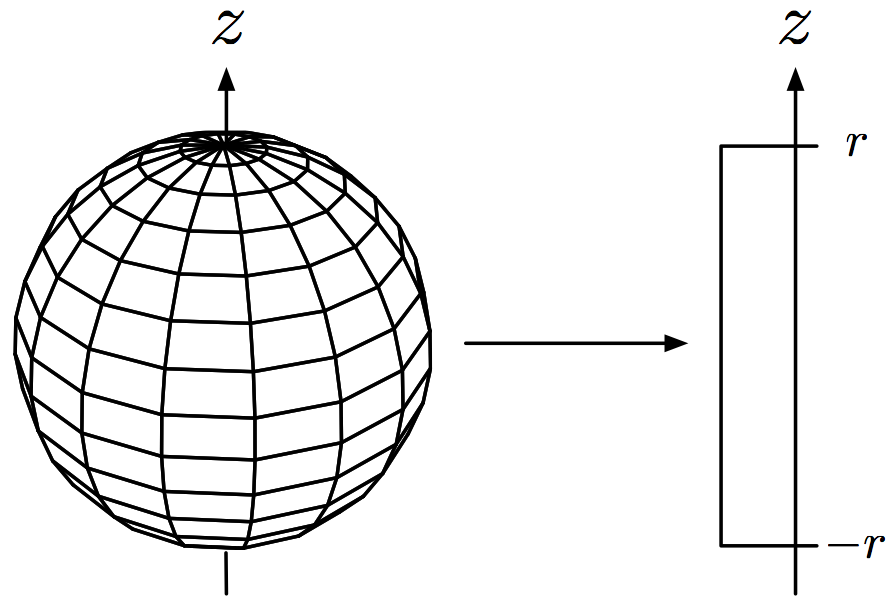}
    \caption{A Bloch sphere, its height function and the induced measure: a generic coadjoint $\SU(2)$-orbit, its Abelian moment map and the Abelian Duister\-maat--Heck\-man measure.}
    \label{bloch ball figure}
  \end{figure}
\end{exl}

\begin{rem}
  Observe that the components $\braket{X, \Phi_{\SU(\mathcal H)}} \colon \mathcal O_\lambda \rightarrow \RR$ of the moment map send a quantum state to the expectation value of the corresponding observable $-i X$. Without loss of generality, we may assume that $X$ generates a one-dimensional torus and that $X$ has one-dimensional eigenspaces. Then $\braket{X, \Phi_{\SU(\mathcal H)}}$ is just the moment map for the action of the torus generated by $X$ and its distribution can be computed immediately by using the Abelian Heckman formula (\autoref{abelian heckman}).
  This gives a short and conceptual proof of the formula derived recently in \cite{venutizanardi12}.
\end{rem}

\subsection{Reduced Density Matrices}
\label{reduced density matrices}

Composite quantum systems are modeled by the tensor product of the Hilbert spaces describing their constituents.
It is useful to think of these subsystems as individual particles, although they can be of more general nature; for instance, the subsystems can describe different degrees of freedom such as position and spin.
Depending on whether the particles are in principle distinguishable or indistinguishable, we distinguish two basic classes of composite systems, which are of fundamentally different nature.

\medskip

If the quantum system is composed of $N$ \emph{distinguishable particles}, its global quantum state is described by a density matrix on the tensor product $\mathcal H = \mathcal H_1 \otimes \ldots \otimes \mathcal H_N$, where the $\mathcal H_j$ are the Hilbert spaces describing the individual particles. Quantum mechanics also tells us that observables $O_j$ acting on a single subsystem $\mathcal H_j$ correspond to tensor product observables $\Id_{\mathcal H_1 \otimes \ldots \otimes \mathcal H_{j-1}} \otimes O_j \otimes \Id_{\mathcal H_{j+1} \otimes \ldots \otimes \mathcal H_N}$, which act by the identity on all other subsystems. By non-degeneracy of the inner product, there exists a unique density matrix $\rho_j$ on $\mathcal H_j$ such that
\begin{equation}
  \label{definition reduced density matrix}
  \tr{(\Id_{\mathcal H_1 \otimes \ldots \otimes \mathcal H_{j-1}} \otimes O_j \otimes \Id_{\mathcal H_{j+1} \otimes \ldots \otimes \mathcal H_N}) \rho} = \tr{O_j \rho_j}
\end{equation}
for all observables $O_j$. It describes the quantum state of the $j$-th subsystem.

\begin{dfn}
  The density matrix $\rho_j$ is called the \emph{(one-body) reduced density matrix} or \emph{quantum marginal} for the $j$-th particle of the quantum system.
\end{dfn}

Note that we can embed $\SU(\mathcal H_j)$ into $\SU(\mathcal H)$ by $U_j \mapsto \Id_{\mathcal H_1 \otimes \ldots \otimes \mathcal H_{j-1}} \otimes U_j \otimes \Id_{\mathcal H_{j+1} \otimes \ldots \otimes \mathcal H_N}$. This induces an embedding on the level of Lie algebras. The dual projection $\su(\mathcal H)^* \rightarrow \su(\mathcal H_j)^*$, given by restricting functionals to the subalgebra, is identified by \eqref{density matrices to functionals} with the map $\rho \mapsto \rho_j$. Similarly, the group homomorphism from the Cartesian product $\SU(\mathcal H_1) \otimes \ldots \otimes \SU(\mathcal H_N)$ to $\SU(\mathcal H)$ given by $(U_1,\ldots,U_N) \mapsto U_1 \otimes \ldots \otimes U_N$ induces the map $\rho \mapsto (\rho_1,\ldots,\rho_N)$ sending a density matrix to the tuple of all its one-body reduced density matrices.

The \emph{(one-body) quantum marginal problem for distinguishable particles} asks for the possible tuples of one-body reduced density matrices $(\rho_1, \ldots, \rho_N)$ of an arbitrary density matrix $\rho$ with fixed spectrum, or, equivalently, for the possible tuples of their eigenvalues.
By the above discussion, this is precisely equivalent to determining the moment polytope $\Delta_K(M)$ associated with the Hamiltonian action of the subgroup $K = \SU(\mathcal H_1) \otimes \ldots \otimes \SU(\mathcal H_N)$ on a coadjoint orbit $M = \mathcal O_{\tilde\lambda}$ for $\SU(\mathcal H)$, with moment map as defined in \eqref{restriction to subgroups}. The quantum marginal problem for globally pure states is the special case where $M = \PP(\mathcal H)$.
Moreover, the \emph{joint eigenvalue distribution of reduced density matrices} we set out to compute in this article corresponds to the non-Abelian Duister\-maat--Heck\-man measure $\DuHe^K_M$ as defined in \eqref{definition duistermaat-heckman measure}: Up to the identification $\hat\lambda \mapsto \lambda$ between spectra of trace-one Hermitian operators and the positive Weyl chamber as defined in \eqref{spectra to positive weyl chamber}, it is given by
\begin{equation}
  \label{eigenvalue distribution dist}
  (\tau_K)_* (\Phi_K)_*\left(\frac {\mu_M} {\vol M}\right) =
  \frac 1 {\vol M} p_K \DuHe^K_M.
\end{equation}
Note that we divide by the Liouville volume of $M$, which is just $p_{\SU(\mathcal H)}(\tilde\lambda)$, to obtain a probability measure.
Similarly, the joint distribution of the diagonal entries of the reduced density matrices corresponds to the Abelian Duister\-maat--Heck\-man measure $\DuHe^T_M$.
% \frac 1 {\vol M} \DuHe^T_M = \Prob_{\diag}.

\medskip

If the quantum system is composed of \emph{indistinguishable particles}, each particle is of course modeled by the same Hilbert space $\mathcal H_1$. The global state of the system is described by a density matrix supported on an irreducible sub-representation $\mathcal H \subseteq \mathcal H_1^{\otimes N}$, namely $\mathcal H = \Sym^N(\mathcal H_1)$ for \emph{bosons} and $\Alt^N(\mathcal H_1)$ for \emph{fermions} (but we can in principle also consider other irreducible sub-representations which correspond to more exotic statistics). Note that since every such density matrix commutes with permutations, all the one-body reduced density matrices are equal.
We can let single-particle observables $O$ act more intrinsically by the symmetric expression $\frac 1 N (O \otimes \Id_{\mathcal H_1^{\otimes (N-1)}} + \ldots + \Id_{\mathcal H_1^{\otimes (N-1)}} \otimes O)$, without changing their expectation values. Up to a factor $N$, this corresponds to the embedding of Lie algebras induced by the diagonal map $U \mapsto U \otimes \ldots \otimes U$, which is of course precisely the action of $\SU(\mathcal H_1)$ on the representation $\mathcal H$. This embedding therefore induces the map $\rho \mapsto \sum_j \rho_j = N \rho_1$ in the same way as described above. It follows that the \emph{(one-body) quantum marginal problem for indistinguishable particles} amounts to determining $\frac 1 N \Delta_K(M)$ for the induced action of $K = \SU(\mathcal H_1)$ on a coadjoint orbit $M = \mathcal O_{\tilde\lambda}$ of $\SU(\mathcal H)$, and that, up to the identification $\hat\lambda \mapsto \lambda$,  the \emph{eigenvalue distribution of the reduced density matrix} is given by
\begin{equation}
  \label{eigenvalue distribution indist}
  \kappa_* (\tau_K)_* (\Phi_K)_*\left(\frac {\mu_M} {\vol M}\right) =
  \frac 1 {\vol M} p_K \kappa_*(\DuHe^K_M),
\end{equation}
where the linear map $\kappa(\lambda) = \frac \lambda N$ counteracts the factor $N$ in the moment map.
% Quantum chemists in fact like the convention that $\tr \rho_1 = N$, hence pushing forward along $\kappa$ is not necessary for them.

\begin{table}
  \centering
  \begin{tabular}{r|cc}
    Setting & Hilbert space $\mathcal H$ & Group $K$ \\
    \hline
    $N$ distinguishable particles & $\CC^{d_1} \otimes \ldots \otimes \CC^{d_N}$ & $\SU(d_1) \times \ldots \times \SU(d_N)$ \\
    $N$ bosons & $\Sym^N(\CC^d)$ & $\SU(d)$ \\
    $N$ fermions & $\Alt^N(\CC^d)$ & $\SU(d)$ \\
    \hline
  \end{tabular}
  \caption{The quantum marginal problem is modeled by the action of the group $K$ on a coadjoint $\SU(\mathcal H)$-orbit $M = \mathcal O_{\tilde\lambda}$.}
  \label{QMP summary table}
\end{table}

\begin{rem}
  In \autoref{QMP summary table} we have summarized the groups and spaces relevant for the quantum marginal problems of main physical interest, and we will focus on these in the remainder of this article. One can also combine both cases, e.g., to describe a quantum system composed of two different sorts of indistinguishable particles (as happens for the purified double of a bosonic or fermionic quantum marginal problem as defined in \autoref{purification} below), or a number of indistinguishable particles each of which have multiple internal degrees of freedom. In the latter case, arbitrary irreducible representations of the special unitary group can appear if one restricts to the reduced density matrices corresponding to only some of the degrees of freedom (see, e.g., \cite{klyachkoaltunbulak08}).
\end{rem}

\subsection{Purification}
\label{purification}

Let $\mathcal H$ be an arbitrary finite-dimensional Hilbert space.
It is well-known that every density matrix $\rho$ on $\mathcal H$ is the reduced density matrix of a pure state in $[\psi] \in \PP(\mathcal H \otimes \mathcal H)$, called a \emph{purification} of the quantum state $\rho$. Indeed, if $\rho = \sum_i p_i P_{v_i}$ is the spectral decomposition of $\rho$ then we can simply choose $\psi = \sum \sqrt{p_i} v_i \otimes v_i$. In this sense, the global state of a quantum system can always be described by a pure state; reduced density matrices occur only in the description of the states of its subsystems. This motivates the following definition:

\begin{dfn}
  For any unitary $K$-representation $\mathcal H$, we define the \emph{purified double} to be the Hamiltonian $K \times \tilde K$-manifold $\PP(\mathcal H \otimes \mathcal H)$, where $\tilde K = \SU(\mathcal H)$, equipped with the moment map constructed in the usual way by embedding into $\mathfrak u(\mathcal H \otimes \mathcal H)^*$ and ``restricting'' the functionals to elements in $\mathfrak k \oplus \mathfrak {\tilde k}$.
\end{dfn}

Observe that if $\mathcal H$ is one of the representations of \autoref{reduced density matrices} modeling a setup of the quantum marginal problem, then the purified double corresponds to the pure-state quantum marginal problem where one has adjoined a single distinguishable particle modeled by $\mathcal H$.

The purification $[\psi] \in \PP(\mathcal H \otimes \mathcal H)$ of a quantum state $\rho$ on $\mathcal H$ is unique up to a unitary acting on the second copy of $\mathcal H$. Evidently, such operations do not change the reduced density matrix $\rho = (P_\psi)_1$ and they leave the eigenvalue spectrum of $(P_\psi)_2$ invariant. In particular, the eigenvalue spectra of the reduced density matrices $(P_\psi)_1$ and $(P_\psi)_2$ are always equal. This implies that we can reduce the quantum marginal problem to the case of globally pure states, both for distinguishable and indistinguishable particles:

\begin{prp}
  \label{purification polytope}
  Let $\mathcal O_{\tilde\lambda}$ be a coadjoint orbit of $\tilde K = \SU(\mathcal H)$, with $\tilde\lambda \in \mathfrak {\tilde t}^*_+$ corresponding to the eigenvalue spectrum of a density operator.
  Then $\lambda \in \Delta_K(\mathcal O_{\tilde\lambda})$ if and only if $(\lambda,\tilde\lambda) \in \Delta_{K \times \tilde K}(\PP(\mathcal H \otimes \mathcal H))$.
\end{prp}

In other words,
\begin{equation*}
  \Delta_{K \times \tilde K}(\PP(\mathcal H \otimes \mathcal H)) = \bigcup_{\tilde\lambda \in \tilde\Delta} \Delta_K(\mathcal O_{\tilde\lambda}) \times \{\tilde\lambda\},
\end{equation*}
where $\tilde\Delta = \{ \tilde\lambda \in \mathfrak {\tilde t}^*_+ : \widehat{\tilde\lambda}_j \geq 0 \}$ is the convex subset of the positive Weyl chamber corresponding to the eigenvalue spectra of density operators.
We can similarly reduce the problem of determining the joint eigenvalue distribution to the case of globally pure states:
For this, let us define probability measures
\begin{equation}
\begin{aligned}
  \label{measures to be purified}
  \Prob &= (\tau_{K \times \tilde K})_* (\Phi_{K \times \tilde K})_* \left( \frac {\mu_{\PP(\mathcal H \otimes \mathcal H)}} {\vol \PP(\mathcal H \otimes \mathcal H)} \right),\\
  \Prob_{\tilde\lambda} &= (\tau_K)_* (\Phi_K)_* \left( \frac {\mu_{\mathcal O_{\tilde\lambda}}} {\vol O_{\tilde\lambda}} \right),
\end{aligned}
\end{equation}
where $\vol$ denotes the Liouville volume.

\begin{prp}
  \label{purification measure}
  The measures in \eqref{measures to be purified} are related by
  \begin{equation*}
    \langle \Prob, f \rangle
  = \frac 1 Z \int_{\tilde\Delta} d\tilde\lambda ~
    p^2_{\tilde K}(\tilde\lambda) ~
    \langle \Prob_{\tilde\lambda}, f(-,\tilde\lambda) \rangle,
  \end{equation*}
  for all test functions $f \in C_b(\mathfrak t^*_+ \oplus \mathfrak {\tilde t^*}_+)$,
  where $d\tilde\lambda$ is Lebesgue measure on $\mathfrak {\tilde t}^*_+$ and $Z$ a suitable normalization constant.
\end{prp}
\begin{proof}
  Each of the one-body reduced density matrices of a Liouville-distributed bipartite pure state in $\PP(\mathcal H \otimes \mathcal H)$ is distributed according to the Hilbert--Schmidt measure restricted to the set of density matrices. In particular, its eigenvalues are distributed according to the well-known formula of \cite{lloydpagels88,zyczkowskisommers01}, so that
  \begin{equation*}
      (\tau_{\tilde K})_* (\Phi_{\tilde K})_* \left( \frac {\mu_{\PP(\mathcal H \otimes \mathcal H)}} {\vol \PP(\mathcal H \otimes \mathcal H)} \right)
    = \frac 1 Z \, p^2_{\tilde K}(\tilde\lambda) \, \Id_{\tilde\Delta}(\tilde\lambda) \, d\tilde\lambda,
  \end{equation*}
  with $\Id_{\tilde\Delta}$ the indicator function of $\tilde\Delta$ and $Z$ a suitable normalization constant. We have just seen that both reduced density matrices necessarily have equal eigenvalue spectrum. This implies that
  \begin{equation*}
      \langle (\Phi_{\tilde K \times \tilde K})_* \left( \frac {\mu_{\PP(\mathcal H \otimes \mathcal H)}} {\vol \PP(\mathcal H \otimes \mathcal H)} \right), g \rangle
    = \frac 1 Z \int_{\tilde\Delta} d\tilde\lambda \int_{\mathcal O_{\tilde\lambda} \times \mathcal O_{\tilde\lambda}} g.
  \end{equation*}
  See \autoref{lloyd pagels diagonal} for an independent derivation using the techniques of this paper.
  It follows that
  \begin{align*}
    \langle \Prob, f \rangle
    &= \langle (\tau_{K \times \tilde K})_* (\Phi_{K \times \tilde K})_* \left( \frac {\mu_{\PP(\mathcal H \otimes \mathcal H)}} {\vol \PP(\mathcal H \otimes \mathcal H)} \right), f \rangle \\
    &= \langle (\tau_{K} \Phi_{K} \times \tau_{\tilde K})_* (\Phi_{\tilde K \times \tilde K})_* \left( \frac {\mu_{\PP(\mathcal H \otimes \mathcal H)}} {\vol \PP(\mathcal H \otimes \mathcal H)} \right), f \rangle\\
    &= \frac 1 Z \int_{\tilde\Delta} d\tilde\lambda \int_{\mathcal O_{\tilde\lambda} \times \mathcal O_{\tilde\lambda}} (\tau_{K} \Phi_{K} \times \tau_{\tilde K})^* \big( f \big)\\
    &= \frac 1 Z \int_{\tilde\Delta} d\tilde\lambda ~ p_{\tilde K}(\tilde\lambda) ~ \int_{\mathcal O_{\tilde\lambda}} (\tau_{K} \Phi_{K})^* \left( f(-,\tilde\lambda) \right)\\
    &= \frac 1 Z \int_{\tilde\Delta} d\tilde\lambda ~ p^2_{\tilde K}(\tilde\lambda) ~ \langle \Prob_{\tilde\lambda}, f(-,\tilde\lambda) \rangle.
  \end{align*}
\end{proof}

Note that $\Prob_{\tilde\lambda}$, and in particular the eigenvalue distributions \eqref{eigenvalue distribution dist} and \eqref{eigenvalue distribution indist}, vary continuously with the global spectrum $\tilde\lambda$. \autoref{purification measure} therefore implies that we can reconstruct them from the eigenvalue distribution for the purified double by taking limits.
%If the latter distribution has a continuous Lebesgue density, as will often be the case, then we can simply restrict this density function to the global spectrum $\tilde\lambda$ of interest.

\medskip

We will now show that \autoref{main assumption} is always satisfied when working with the purified double. In quantum-mechanical terms, we have to show that there exists a global pure state $[\psi] \in \PP(\mathcal H \otimes \mathcal H)$ such that the eigenvalue spectra of all the reduced density matrices are non-degenerate (with respect to the quantum marginal problem where we have added a single distinguishable particle with Hilbert space $\mathcal H$).

For distinguishable particles, where $\mathcal H \cong \CC^{d_1} \otimes \ldots \otimes \CC^{d_N}$, this follows from the following general criterion, since the purified double is constructed by adding an additional Hilbert space of dimension $d_{N+1} = \dim \mathcal H = d_1 \cdots d_N$:

\begin{lem}
  \label{main assumption qmp}
  Let $N \geq 1$ and $d_1 \leq \ldots \leq d_N \leq d_{N+1}$. Then there exists a global pure state in $\PP(\CC^{d_1} \otimes \ldots \otimes \CC^{d_N} \otimes \CC^{d_{N+1}})$ whose one-body reduced density matrices have non-degenerate eigenvalue spectra if and only if
  \begin{equation*}
    d_{N+1} \leq \left( \prod_{i=1}^N d_i \right) + 1.
  \end{equation*}
\end{lem}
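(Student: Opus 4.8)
The plan is to address the two directions separately. For the ``only if'' direction, I would count dimensions: a pure state $[\psi]\in\PP(\mathcal H_1\otimes\cdots\otimes\mathcal H_N\otimes\mathcal H_{N+1})$ with $\mathcal H_j\cong\CC^{d_j}$ can be viewed, after grouping the first $N$ factors, as a bipartite state in $\CC^{D}\otimes\CC^{d_{N+1}}$ with $D=\prod_{i=1}^N d_i$. Its Schmidt rank is at most $\min(D,d_{N+1})$, so the reduced density matrix $\rho_{N+1}$ on $\CC^{d_{N+1}}$ has rank at most $D$. If $d_{N+1}>D+1$, then $\rho_{N+1}$ has at least two zero eigenvalues, hence a degenerate spectrum, and no state can have all marginals non-degenerate. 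This gives the necessity of $d_{N+1}\le D+1$.

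For the ``if'' direction, assuming $d_{N+1}\le D+1$, I would exhibit a pure state with all $N+1$ one-body marginals non-degenerate. A clean way is to argue that the set of states with \emph{some} marginal degenerate is a proper closed subvariety, so its complement is nonempty provided we can produce even one witness. The natural witness: pick a generic vector in a suitable subspace. Concretely, since $d_{N+1}\le D+1$, write $D=d_{N+1}-1$ or $D\ge d_{N+1}$ and in either case choose an injective (or surjective) linear map realizing a state of full Schmidt rank $\min(D,d_{N+1})$ across the $(1{\cdots}N\,|\,N{+}1)$ cut whose Schmidt coefficients are all distinct and nonzero; this makes $\rho_{N+1}$ non-degenerate (when $D\ge d_{N+1}$, all eigenvalues are positive and distinct; when $D=d_{N+1}-1$, exactly one eigenvalue is $0$ and the rest are positive and distinct, which is still non-degenerate). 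The remaining task is to simultaneously arrange non-degeneracy of $\rho_1,\ldots,\rho_N$. Here I would invoke \autoref{main assumption qmp} inductively, or more directly a genericity argument: the conditions ``$\rho_j$ has a repeated eigenvalue'' are Zariski-closed and proper on $\PP(\mathcal H)$ (properness for each $j$ following from the $N=1$ base case together with the fact that we may always tensor in an extra ancilla), so a generic $[\psi]$ in the full projective space — or in the subvariety of states of the prescribed Schmidt type across the last cut, which is irreducible and of large enough dimension — avoids all of them at once.

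The main obstacle is making the ``if'' direction rigorous without circularity: I want to produce a state that is simultaneously of controlled Schmidt rank across the last bipartition (to pin down $\rho_{N+1}$) \emph{and} generic enough that $\rho_1,\ldots,\rho_N$ are non-degenerate. The subvariety $V$ of states with Schmidt rank $\le D$ across the $(1{\cdots}N\,|\,N{+}1)$ cut is automatically all of $\PP(\mathcal H)$ when $d_{N+1}\le D$, so there the issue disappears and one just needs generic $[\psi]\in\PP(\mathcal H)$; the genuinely delicate case is the boundary $d_{N+1}=D+1$, where one must check that $V$ (states of Schmidt rank exactly $D$ with one forced zero eigenvalue on the last factor) still surjects, under $[\psi]\mapsto(\rho_1,\ldots,\rho_N)$, onto a neighborhood of a non-degenerate tuple. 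I would handle this by an explicit parametrization: fix an orthonormal basis witnessing the Schmidt decomposition across the last cut with distinct nonzero coefficients, and then act by the residual $\SU(D)$ (acting on the first $N$ factors) — the orbit is large enough that generic points have all of $\rho_1,\ldots,\rho_N$ non-degenerate by the same closed-and-proper argument applied now on this orbit, whose dimension comfortably exceeds $\sum_{j=1}^N (d_j-1)$.

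A cleaner alternative, which I would pursue if the above bookkeeping becomes unwieldy, is induction on $N$: the statement for $(d_1,\ldots,d_N,d_{N+1})$ with the last factor split off as the ``ancilla'' reduces, via purification of a generic density matrix on $\CC^{d_1}\otimes\cdots\otimes\CC^{d_N}$, to the statement for fewer particles — precisely the reduction philosophy of \autoref{purification polytope} — with the base case $N=1$, $d_1\le d_2\le d_1+1$, being an elementary check that a generic vector in $\CC^{d_1}\otimes\CC^{d_2}$ has both marginals non-degenerate.
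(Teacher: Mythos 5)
Your necessity argument coincides with the paper's: the Schmidt rank across the $(1\cdots N\mid N{+}1)$ cut is at most $D=\prod_{i=1}^N d_i$, so $d_{N+1}>D+1$ forces at least two zero eigenvalues of $\rho_{N+1}$. For sufficiency you take a genuinely different route. The paper exhibits an explicit witness: it chooses $d_{N+1}-1$ product basis vectors $e_{i_1(k)}\otimes\cdots\otimes e_{i_N(k)}$ so that in each slot $j$ at least $d_j-1$ of the $d_j$ labels occur, and sets $\psi=\sum_{k=1}^{d_{N+1}-1}2^{-k}\,e_{i_1(k)}\otimes\cdots\otimes e_{i_N(k)}\otimes e_k$; every marginal is then diagonal in the chosen basis with entries given by subset sums of the weights $4^{-k}$, which are pairwise distinct, so all spectra are simple. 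Your genericity argument is also valid and more conceptual: for each $j$ the locus where $\rho_j$ is degenerate is the zero set of a real polynomial (the discriminant of the characteristic polynomial, after clearing the $\lVert\psi\rVert^2$ normalization), hence nowhere dense as soon as one state with $\rho_j$ non-degenerate exists; such per-subsystem witnesses are bipartite Schmidt states across the cut $j$ versus the rest (and for $j=N+1$ this is precisely where $d_{N+1}\le D+1$ is used, since one zero eigenvalue together with $D$ distinct positive ones is still a simple spectrum); a generic $[\psi]$ then avoids the finite union of bad loci. The paper's construction buys complete elementarity; yours buys brevity once standard facts about real-algebraic sets are granted.

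Two caveats on your write-up, neither fatal. First, the ``delicate boundary case'' $d_{N+1}=D+1$ is a non-issue: every state has Schmidt rank at most $D$ across the last cut, so the ``subvariety'' you propose to restrict to is all of $\PP(\mathcal H)$, and the $\SU(D)$-orbit detour is unnecessary --- the genericity argument on the full projective space already covers system $N+1$. Second, the inductive alternative you sketch at the end does not close as stated: the inductive hypothesis concerns pure states, whereas the purification step would require an existence statement for \emph{mixed} states with controlled rank and simple spectrum on $\CC^{d_1}\otimes\cdots\otimes\CC^{d_N}$, so the induction would have to be strengthened. Your main genericity argument does not rely on it, so the proof stands without that paragraph.
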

\begin{proof}
  The condition is clearly necessary, since it follows from the singular value decomposition that at most $\prod_{i=1}^N d_i$ eigenvalues of $\rho_{N+1}$ can be non-zero.

  For sufficiency, let us construct a state with the desired property:
  For this, we consider the standard tensor product basis vectors $e_{i_1} \otimes \ldots \otimes e_{i_N}$ of $\CC^{d_1} \otimes \ldots \otimes \CC^{d_N}$, labelled by integers $i_j \in \{1,\ldots,d_j\}$, $j=1,\ldots,N$.
  We choose a subset of $d_{N+1}-1$ many such basis vectors $e_{i_1(k)} \otimes \ldots \otimes e_{i_N(k)}$ in such a way that, for each subsystem $j=1,\ldots,N$, at least $d_j - 1$ of the $d_j$ integers occur. This clearly is possible by our assumptions. Finally, we set
  \begin{equation*}
    \psi = \sum_{k=1}^{d_{N+1}-1} 2^{-k} e_{i_1(k)} \otimes \ldots \otimes e_{i_N(k)} \otimes e_k.
  \end{equation*}
  Then $[\psi]$ is a pure state such that all of its one-body reduced density matrices have non-degenerate eigenvalue spectrum.
\end{proof}

%In other words, the obvious constraint from the singular value decomposition is the only constraint on the dimensions of the subsystems.

For bosons and fermions, we will use the following lemma to show that \autoref{main assumption} is satisfied:

\begin{lem}
  \label{boson fermion lemma}
  The convex hull of the weights of $\Sym^N(\CC^d)$ has maximal dimension.
  The same is true for $\Alt^N(\CC^d)$ if $N<d$.
\end{lem}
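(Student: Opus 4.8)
The plan is to reduce \autoref{boson fermion lemma} to a finite combinatorial check. View $\mathcal H$ as a $\U(d)$-representation: the weights of $\Sym^N(\CC^d)$ are the vectors in $\ZZ_{\geq 0}^d$ with coordinate sum $N$, and the weights of $\Alt^N(\CC^d)$ are the $0/1$-vectors with exactly $N$ ones. In either case every weight lies in the affine hyperplane $\{x \in \RR^d : \sum_i x_i = N\}$, which has dimension $d-1$, and the restriction map $\RR^d \to \su(d)^*$ is, on this hyperplane, the translation $x \mapsto x - \tfrac Nd (1,\ldots,1)$, hence an affine isomorphism onto $\su(d)^* = \{\sum_i x_i = 0\}$. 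Since $\dim \su(d)^* = d-1$, ``maximal dimension'' is equivalent to the weights affinely spanning that hyperplane, i.e.\ to the existence of $d$ affinely independent weights, equivalently $d-1$ linearly independent differences of weights. So it is enough to exhibit such a configuration in each case.

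For $\Sym^N(\CC^d)$ (with $N\geq 1$) this is immediate: the monomials $x_j^N$ give the weights $N e_1,\ldots,N e_d$, and the differences $N(e_2-e_1),\ldots,N(e_d-e_1)$ are linearly independent, so the convex hull is the dilated standard simplex $N\cdot\conv\{e_1,\ldots,e_d\}$, of dimension $d-1$. For $\Alt^N(\CC^d)$ the weights are exactly the characteristic vectors $\chi_S$ of the $N$-element subsets $S\subseteq\{1,\ldots,d\}$. Fix $S_0=\{1,\ldots,N\}$ and build $d-1$ differences from it: for $k=N+1,\ldots,d$ put $S_k=\{2,\ldots,N\}\cup\{k\}$, so that $\chi_{S_k}-\chi_{S_0}=e_k-e_1$; for $j=2,\ldots,N$ put $T_j=(\{1,\ldots,N\}\setminus\{j\})\cup\{N+1\}$, so that $\chi_{T_j}-\chi_{S_0}=e_{N+1}-e_j$. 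Reading off coordinates in the order $e_{N+2},\ldots,e_d$ (which forces the coefficients of the $e_k-e_1$ with $k>N+1$ to vanish), then $e_2,\ldots,e_N$ (which kills the coefficients of the $e_{N+1}-e_j$), then $e_{N+1}$ (which kills the remaining coefficient), shows these $d-1$ vectors are linearly independent. Hence the convex hull $\conv\{\chi_S : |S|=N\}$ is $(d-1)$-dimensional.

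The only step requiring any care, and the sole place the hypothesis $N<d$ enters, is verifying that the sets $S_k$ and $T_j$ really are $N$-element subsets of $\{1,\ldots,d\}$: this uses both that $N+1\leq d$ and that $\{1,\ldots,d\}\setminus S_0$ is nonempty, i.e.\ precisely $N<d$. The bound is sharp, since $\Alt^N(\CC^d)$ is one-dimensional for $N=d$ and zero for $N>d$. An alternative that avoids the explicit subsets altogether: over $\RR$, the differences $\chi_{S_0\triangle\{i,j\}}-\chi_{S_0}=e_j-e_i$, as $i$ ranges over $S_0$ and $j$ over its complement, already span $\{\sum_i x_i = 0\}$ as soon as $1\leq N\leq d-1$ (these are root vectors, and the condition guarantees both index sets are nonempty), which gives the same conclusion. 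I expect no genuine obstacle beyond getting this index bookkeeping exactly right.
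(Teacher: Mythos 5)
Your proof is correct and follows essentially the same route as the paper: for $\Sym^N(\CC^d)$ one uses the extremal weights $Ne_1,\ldots,Ne_d$, and for $\Alt^N(\CC^d)$ one fixes the weight of $J=\{1,\ldots,N\}$ and observes that the differences to weights obtained by swapping one element are the root vectors $e_k-e_j$ with $j\in J$, $k\notin J$, which span the trace-zero hyperplane precisely when $1\leq N<d$. Your explicit choice of $d-1$ linearly independent such differences is just a more careful bookkeeping of the spanning claim the paper states directly.
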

\begin{proof}
  In the following we write $\omega_j$ for the weight corresponding to the character $\diag(t_1, \ldots, t_d) \mapsto t_j$, with $j=1, \ldots, d$.

  (1) $\Sym^N(\CC^d)$: The vectors $e_j \otimes \ldots \otimes e_j$ are weight vectors of weight $N \omega_j$, with $j=1,\ldots,d$. Clearly, the convex hull of these weights already has maximal dimension.

  (2) $\Alt^N(\CC^d)$, where $1 \leq N < d$: It is well-known that the weights are given by $\sum_{j \in J} \omega_j$ for all $N$-element subsets $J \subseteq \{1,\ldots,d\}$. (The corresponding weight vectors are the well-known occupation number basis vectors for fermions.)
  %Clearly, there are ${\binom d N} \geq d$ such weights.
  Fix any such weight, say, the one corresponding to $J = \{1,\ldots,N\}$. The difference vectors between this weight and the weights obtained by replacing a single element of $J$ are proportional to the positive roots $\alpha_{j,k}(\lambda) = i(\lambda_j - \lambda_k)$ with $j \in J$ and $k \in \{1,\ldots,d\} \setminus J$. There are at least $d-1$ such roots, and they form a basis of $\mathfrak t^*$.
\end{proof}

\autoref{boson fermion lemma} implies that \autoref{main assumption} is also satisfied for the purified double of the bosonic and fermionic quantum marginal problems: Indeed, it clearly suffices to show that in each case there exists a density operator $\rho$ on $\mathcal H$ such that both $\rho$ and its one-body reduced density matrix $\rho_1$ have non-degenerate eigenvalue spectrum (then any purification of $\rho$ has the desired properties). By \autoref{boson fermion lemma}, there exists a convex combination of weights $\sum_k p_k \omega_k \in \mathfrak t^*_{>0}$. By perturbing slightly, we can arrange for the weights $(p_k)$ to be mutually disjoint. Choose corresponding (orthogonal) weight vectors $v_k \in \mathcal H$ and consider the density matrix $\rho = \sum_k p_k P_{v_k}$. Clearly, both $\rho$ and its one-body reduced density matrix $\rho_1$have non-degenerate eigenvalue spectrum.

To summarize, we have shown that the problem of computing the joint eigenvalue distribution of reduced density matrices is equivalent to the computation of Duister\-maat--Heck\-man measures associated with certain Hamiltonian group actions (cf.\ \autoref{QMP summary table}). Moreover, by passing to the purified double, we can always reduce to the case where $M = \PP(\mathcal H)$ is a projective space satisfying \autoref{main assumption}.

\subsection{Probability Distributions}
\label{classical}

Under the identification \eqref{density matrices to functionals}, elements of the dual of the Lie algebra of the maximal torus correspond to diagonal density matrices. These are precisely the diagonal matrices with non-negative entries summing to one, and can therefore be interpreted as \emph{probability distributions} of a random variable $Z$ with values in the orthonormal basis $(e_i)$ we have chosen.
This interpretation is in agreement with quantum mechanics: If we perform an actual measurement of a density matrix $\rho$ with respect to this orthonormal basis then the probability of getting outcome $e_i$ is given precisely by the diagonal element $\tr{P_{e_i} \rho} = \braket{e_i, \rho \, e_i} = \rho_{i,i}$.

Note that the moment map for the action of the maximal torus $\tilde T \subseteq \SU(\mathcal H)$ on the projective space $\PP(\mathcal H)$ corresponds to sending a pure state $[\psi]$ onto its diagonal.
As we vary $[\psi]$ over all pure states in $\PP(\mathcal H)$, the diagonal entries attain all possible probability distributions.
In other words, the Abelian moment polytope $\Delta_{\tilde T}(\PP(\mathcal H))$ is just the simplex $\tilde\Delta$ defined in \autoref{purification}.
The corresponding Duister\-maat--Heck\-man measure is equal to a suitably normalized Lebesgue measure on $\tilde\Delta$ (this is a special case of \autoref{projective space abelian via standard simplex} below).

\medskip

Now consider as in \autoref{reduced density matrices} the case of $N$ distinguishable particles. Choose orthonormal bases to identify $\mathcal H_k \cong \CC^{d_k}$, and therefore $\mathcal H \cong \CC^{d_1} \otimes \ldots \otimes \CC^{d_N}$ using the tensor product basis.
Note that we can interprete diagonal density matrices $\rho$ on $\mathcal H$ as the joint probability distribution of a tuple of random variables $(Z_1,\ldots,Z_N)$, where each $Z_k$ takes values in the standard basis of the corresponding $\CC^{d_k}$, by setting
\begin{equation*}
  \Prob(Z_1 = e_{i_1}, \ldots, Z_N = e_{i_N}) =
  \tr{P_{e_{i_1} \otimes \ldots \otimes e_{i_N}} \rho}.
\end{equation*}
The marginal distributions of the random variables $Z_k$ in the sense of probability theory are then given by
\begin{align*}
  &\Prob(Z_k = e_{i_k}) \\
  = &\sum_{i_1, \ldots, \check{i_k}, \ldots, i_N} \tr{P_{e_{i_1} \otimes \ldots \otimes e_{i_N}} \rho}\\
  = &\tr{ \left( \Id_{\CC^{d_1} \otimes \ldots \otimes \CC^{d_{k-1}}} \otimes P_{e_{i_k}} \otimes \Id_{\CC^{d_{k+1}} \otimes \ldots \otimes \CC^{d_N}} \right) \rho} \\
   = &\tr{P_{e_{i_k}} \rho_k},
\end{align*}
where for the second identity we have used that $\rho$ is a diagonal matrix. That is, the marginal distributions of the $Z_k$ are precisely described by the reduced density matrices $\rho_k$ (i.e., by the quantum marginals), which are also diagonal if $\rho$ is diagonal.

Accordingly, the moment polytope $\Delta_T(\PP(\mathcal H))$ for the action of the maximal torus $T \subseteq \SU(d_1) \times \ldots \SU(d_N)$ on the set of pure states describes the tuples of marginal probability distributions that arise from joint distributions of the $(Z_1,\ldots,Z_N)$. This \emph{(univariate) classical marginal problem} is of course trivial, since there are no constraints on the joint distribution. However, its quantitative version, which corresponds to computing the Abelian Duister\-maat--Heck\-man measure $\DuHe^T_{\PP(\mathcal H)}$, is interesting and not at all trivial to solve.
In fact, the problem of computing joint eigenvalue distributions of reduced density matrices, which we set out to solve in this article, can be reduced to the computation of $\DuHe^T_{\PP(\mathcal H)}$. This reduction, or rather the generalization which we describe in \autoref{derivative principle} below, is at the core of the algorithms presented in \autoref{algorithms}.

\subsection{Physical Applications}
\label{physical applications}

As indicated in the introduction, the eigenvalue distributions \eqref{eigenvalue distribution dist} and \eqref{eigenvalue distribution indist} have direct applications to quantum physics.
In quantum statistical mechanics, among others, one typically studies bipartite setups $\mathcal H = \mathcal H_S \otimes \mathcal H_E$ composed of a system $S$ and an environment (or bath) $E$. Randomly-chosen pure states give rise to a distribution of reduced density matrices $\rho_S$, whose properties vary with the size of the environment. Physical motivations have lead to the computation of the corresponding eigenvalue distribution \cite{lloydpagels88}, which we can easily re-derive using the techniques of this paper (\autoref{lloyd pagels}). Note that many basic physical quantities are functions of the eigenvalues, such as the \emph{von Neumann entropy}
\begin{equation*}
  H(S) = H(\rho_S) = -\tr{\rho_S \log \rho_S} = \sum_j -\hat\lambda_j \log \hat\lambda_j,
\end{equation*}
where $(\hat\lambda_j)$ are the eigenvalues of $\rho_S$, or more general R\'enyi entropies and purities (cf.\ \autoref{purity example}). The average von Neumann entropy of a subsystem \cite{lubkin78,page93} in particular has featured in the analysis of the black hole entropy paradox \cite{HaydenPreskill}.
We can also consider other coadjoint orbits such as Grassmannians: Here, the density matrix corresponding to a $d$-dimensional subspace $\mathcal H' \subseteq \mathcal H_A \otimes \mathcal H_E$ is the normalized projection operator $\rho = \Id_{\mathcal H'}/d$, and the reduced density matrix $\rho_A$ is interpreted as a \emph{canonical state} in the sense of statistical mechanics \cite{popescushortwinter06,lloyd06,goldsteinlebowitztumulkaetal06}. The probability distributions we compute can therefore be used to analyze the typical behavior of canonical states.

The tripartite case, in itself already interesting from the perspective of the quantum marginal problem, is also highly relevant to applications: It corresponds to the situation where $S$ itself is composed of two particles $A$ and $B$, so that $\mathcal H = \mathcal H_A \otimes \mathcal H_B \otimes \mathcal H_E$.
In the study of quantum entanglement, remarkable recent progress has been made by analyzing the entanglement properties of the two-body reduced density matrix $\rho_{AB}$ of a randomly-chosen pure state in large dimensions, where the concentration of measure phenomenon occurs \cite{HaydenRandomizing, haydenleungwinter06,aubrunszarekye11b,aubrunszarekye11,collinsnechitaye11}. In particular, a negative resolution of the additivity conjecture of quantum information theory~\cite{shor-additivity} has recently been obtained by related methods~\cite{hastings-additivity, aubrun-hastings}. The joint eigenvalue distribution of the reduced density matrices in particular determines \emph{quantum conditional entropies} and \emph{quantum mutual informations}, that is, the quantities
\begin{align*}
  H(A|B) &= H(AB) - H(B) = H(E) - H(B),\\
  I(A:B) &= H(A) + H(B) - H(AB) = H(A) + H(B) - H(E),
\end{align*}
since the eigenvalue spectra of $\rho_{AB}$ and $\rho_E$ are equal (cf.\ \autoref{purification}). They have immediate applications to entanglement theory; for example, the quantum mutual information provides an upper bound on the amount of entanglement that can be distilled from a quantum state \cite{christandlwinter04}.

% Probabilistic methods also underlie remarkable recent progress in \emph{quantum information theory} and the study of quantum entanglement.
% %The famous discrepancy between Bell's inequality (imposing constraints on the classical marginal problem) and Cirelson's bounds (which necessarily constrain the quantum marginal problem)~\cite{Cirelson} is related to Grothendieck's constant~\cite{Fishburn} and is currently under intense study~\cite{JungePalazuelos, Jop-Bell, BuhrmanRegev, Pisier}.  The phenomenon responsible for the difference in these bounds --- and therefore for the difference between classical and quantum --- is \emph{entanglement}, i.e., quantum correlations.
% For pure states chosen at random, the entanglement of their reduced density matrices has been analyzed approximately in large dimensions, where the concentration of measure phenomenon occurs \cite{HaydenRandomizing, haydenleungwinter06}.

In all these applications, most known results are for large Hilbert spaces, since the techniques employed rely on asymptotic features such as measure concentration. Our algorithms require no such assumption. In particular, they are well-suited for low-dimensional systems, which previously remained inaccessible.

%%%%%%%%%%%%%%%%%%%%%%%%%%%%%%%%%%%%%%%%%%%%%%%%%%%%%%%%%%%%%%%%%%%%%%%%%%%%%%%
\section{Derivative Principle for Invariant Measures}
\label{derivative principle}
%%%%%%%%%%%%%%%%%%%%%%%%%%%%%%%%%%%%%%%%%%%%%%%%%%%%%%%%%%%%%%%%%%%%%%%%%%%%%%%

In this section we will describe a fundamental property of $K$-invariant measures on $\mathfrak k^*$ that are concentrated on the union of the maximal-dimensional coadjoint orbits (that is, on $K \cdot \mathfrak t^*_{>0}$).
Every such invariant measure can be reconstructed from its projection onto $\mathfrak t^*$ by taking partial derivatives in the direction of negative roots (\autoref{main theorem measures}).
In particular, this implies that the non-Abelian Duister\-maat--Heck\-man measure $\DuHe^K_M$ can be recovered from the Abelian Duister\-maat--Heck\-man measure $\DuHe^T_M$ (\autoref{main theorem}).

For the invariant probability measure supported on a single coadjoint orbit $\mathcal O_\lambda$ of maximal dimension (i.e., $\lambda \in \mathfrak t^*_{>0}$), this follows from a well-known formula of Harish-Chandra, which states that the Fourier transform is given by
\begin{equation*}
  \langle \DuHe^T_{\mathcal O_\lambda}, e^{i \langle -, X \rangle} \rangle =
  \sum_{w \in W} (-1)^{l(w)} e^{i \langle w\lambda, X \rangle} \prod_{\alpha > 0} \frac 1 {i \langle \alpha, X \rangle}
\end{equation*}
for every $X \in \mathfrak t$ which is not orthogonal to any root \cite[Theorem 2]{harishchandra57}.
% (see, e.g., \cite[Corollary 7.25]{berlinegetzlervergne03} for a recent account).
Here, $l(w)$ is the length of the Weyl group element $w$.
This implies that the Abelian Duistermaat--Heckman measure is given by an alternating sum of convolutions,
  \begin{equation}
    \label{harish chandra}
    \DuHe^T_{\mathcal O_\lambda} = \sum_{w \in W} (-1)^{l(w)} \delta_{w \lambda} \star H_{-\alpha_1} \star \ldots \star H_{-\alpha_R}.
  \end{equation}
where we recall that $H_\omega$ is the Heaviside measure defined in \autoref{notation} by $\langle H_\omega, f \rangle = \int_0^\infty f(t \omega) dt$.
By the fundamental theorem of calculus, we have $\partial_\omega H_\omega = \delta_0$ (in the sense of distributions), so that
\begin{equation}
  \label{harish chandra derivative}
  \left( \prod_{\alpha > 0} \partial_{-\alpha} \right) \DuHe^T_{\mathcal O_\lambda} = \sum (-1)^{l(w)} \delta_{w \lambda},
\end{equation}
as was already observed by Heckman \cite[(6.5)]{heckman82}.
By restricting to the interior of the positive Weyl chamber, we thus obtain the basic relation
\begin{equation}
  \label{ableitungsformel coadjoint orbit}
  \restrict{\left( \prod_{\alpha > 0} \partial_{-\alpha} \right) \DuHe^T_{\mathcal O_\lambda}}{\mathfrak t^*_{>0}} = \delta_\lambda.
\end{equation}

\begin{exl}
  Every Bloch sphere of radius $r > 0$ is a coadjoint orbit of maximal dimension (cf.\ \autoref{bloch ball}).
  We have seen that $\DuHe^T_{\mathcal O_r}$ is equal to $\frac 1 2 \Id_{[-r,r]}(z) dz$, where $dz$ is Lebesgue measure on the $z$-axis. In agreement with \eqref{ableitungsformel coadjoint orbit}, we observe that
  \begin{equation*}
    \restrict{\partial_\alpha \DuHe^T_{O_r}}{\mathfrak t^*_{>0}} =
    \restrict{\partial_z \Id_{[-r,r]}(z) dz}{\RR_{> 0}} =
    \delta_r.
  \end{equation*}
\end{exl}

\begin{thm}
  \label{main theorem measures}
  Let $\nu$ be a $K$-invariant Radon measure on $\mathfrak k^*$ satisfying $\nu(\mathfrak k^* \setminus K \cdot \mathfrak t^*_{>0}) = 0$.
  % Radon, so that we have a distribution
  Then,
  \begin{equation*}
    \restrict{\left(\prod_{\alpha > 0} \partial_{-\alpha} \right) (\pi_{K,T})_*(\nu)}{\mathfrak t^*_{>0}} =
    \restrict{\frac 1 {p_K} (\tau_K)_*(\nu)}{\mathfrak t^*_{>0}},
  \end{equation*}
  where the partial derivatives and the restriction are in the sense of distributions.
\end{thm}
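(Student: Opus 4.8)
The plan is to reduce everything to the single-orbit identity already established in \autoref{ableitungsformel coadjoint orbit}, by writing $\nu$ as a superposition of Liouville measures on coadjoint orbits and observing that both sides of the asserted equality depend linearly on $\nu$ in a way compatible with that decomposition. Note first that the theorem is literally \autoref{ableitungsformel coadjoint orbit} in the case $\nu = \beta_\lambda$, the canonical $K$-invariant (Liouville) measure of total mass $p_K(\lambda)$ on a regular coadjoint orbit $\mathcal O_\lambda$, $\lambda \in \mathfrak t^*_{>0}$: here $(\pi_{K,T})_*(\beta_\lambda) = \DuHe^T_{\mathcal O_\lambda}$ by \eqref{definition duistermaat-heckman measure abelian}, while $\tau_K$ is constant on $\mathcal O_\lambda$, so $\frac 1 {p_K}(\tau_K)_*(\beta_\lambda) = \delta_\lambda$, and \autoref{ableitungsformel coadjoint orbit} is exactly the statement that $\restrict{\big(\prod_{\alpha>0}\partial_{-\alpha}\big)\DuHe^T_{\mathcal O_\lambda}}{\mathfrak t^*_{>0}} = \delta_\lambda$.

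For general $\nu$ I would disintegrate over the orbit map $\tau_K$. By hypothesis $\nu$ is concentrated on $K \cdot \mathfrak t^*_{>0}$, the union of the regular coadjoint orbits, on each of which the $K$-stabilizer is exactly the maximal torus, so every orbit carrying mass is of the form $\mathcal O_\lambda \cong K/T$ with $\lambda \in \mathfrak t^*_{>0}$. Since $\tau_K$ is proper, $(\tau_K)_*\nu$ is a Radon measure, and it is concentrated on $\mathfrak t^*_{>0}$ because $\nu$ gives zero mass to the non-regular orbits; set $\sigma := \frac 1 {p_K}(\tau_K)_*\nu$, still Radon on $\mathfrak t^*_{>0}$ as $1/p_K$ is locally bounded there. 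Disintegrating, $\nu = \int_{\mathfrak t^*_{>0}} \nu_\lambda\, d\big((\tau_K)_*\nu\big)(\lambda)$ with each $\nu_\lambda$ a probability measure on $\mathcal O_\lambda$; $K$-invariance of $\nu$ and of $\tau_K$, together with almost-everywhere uniqueness of the conditional measures (test against a countable dense subset of $K$ and use continuity), force $\nu_\lambda$ to be the unique $K$-invariant probability measure on the homogeneous space $K/T$, that is $\nu_\lambda = \beta_\lambda/p_K(\lambda)$. Hence $\nu = \int_{\mathfrak t^*_{>0}} \beta_\lambda\, d\sigma(\lambda)$.

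It then remains to commute $\prod_{\alpha>0}\partial_{-\alpha}$ and the restriction to $\mathfrak t^*_{>0}$ past the integral over orbits, which is immediate after moving the operator onto the test function. Fix $f \in C^\infty_c(\mathfrak t^*_{>0})$ and put $g := \big(\prod_{\alpha>0}\partial_\alpha\big)f \in C^\infty_c(\mathfrak t^*_{>0})$. Since $\partial_{-\alpha} = -\partial_\alpha$, the $R$ signs produced by transposing the derivatives cancel the sign $(-1)^R$ coming from $\partial_{-\alpha}$, and the decomposition of $\nu$ gives
\[
  \Big\langle \Big(\prod_{\alpha>0}\partial_{-\alpha}\Big)(\pi_{K,T})_*\nu,\; f \Big\rangle
  = \big\langle (\pi_{K,T})_*\nu,\; g \big\rangle
  = \int_{\mathfrak t^*_{>0}} \big\langle \DuHe^T_{\mathcal O_\lambda},\; g \big\rangle \, d\sigma(\lambda),
\]
the middle equality being the disintegration of $\nu$ paired with the test function $g \circ \pi_{K,T}$. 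By the single-orbit case, $\big\langle \DuHe^T_{\mathcal O_\lambda},\, g \big\rangle = \big\langle \big(\prod_{\alpha>0}\partial_{-\alpha}\big)\DuHe^T_{\mathcal O_\lambda},\, f \big\rangle = \langle \delta_\lambda, f\rangle = f(\lambda)$, so the right-hand side equals $\int_{\mathfrak t^*_{>0}} f(\lambda)\, d\sigma(\lambda) = \langle \sigma, f \rangle = \big\langle \restrict{\tfrac 1 {p_K}(\tau_K)_*\nu}{\mathfrak t^*_{>0}}, f \big\rangle$. As $f$ was arbitrary, the theorem follows.

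I expect the main obstacle to be purely measure-theoretic: checking that $(\pi_{K,T})_*\nu$ (hence the pairing above) is well defined and locally finite — automatic when $\nu$ is compactly supported, as for Duister\-maat--Heck\-man measures of a compact $M$, but requiring a little care in general since $\pi_{K,T}$ is not proper — and, above all, justifying the disintegration and identifying its conditional measures with the normalized orbit measures $\beta_\lambda/p_K(\lambda)$. The support hypothesis $\nu(\mathfrak k^*\setminus K\cdot\mathfrak t^*_{>0}) = 0$ is exactly what makes the latter identification work: it guarantees that every orbit carrying mass is of the maximal type $K/T$, which has a unique invariant probability measure, whereas on lower-dimensional orbits \autoref{ableitungsformel coadjoint orbit} would fail. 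Everything else — the cancellation of signs, and the interchange of the differential operator and the restriction with the integral over orbits — is formal once one records that a constant-coefficient differential operator acts on a distribution by pairing its signed transpose with the test function.
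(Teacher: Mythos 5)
Your proof is correct and follows essentially the same route as the paper's: both arguments hinge on reducing, via the $K$-invariance of $\nu$, to the single-orbit identity of \autoref{ableitungsformel coadjoint orbit} after moving $\prod_{\alpha>0}\partial_{-\alpha}$ onto the test function. The only difference is packaging: where you disintegrate $\nu$ over $\tau_K$ and identify the conditional measures as the normalized Liouville measures $\beta_\lambda/p_K(\lambda)$, the paper dualizes this by replacing the pulled-back test function $\left(\prod_{\alpha>0}\partial_\alpha f\right)\circ\pi_{K,T}$ with its $K$-average via Fubini, which on each regular orbit evaluates to $f(\lambda)/p_K(\lambda)$ by the same single-orbit formula --- thereby sidestepping the disintegration theorem and the a.e.\ identification of conditional measures that you correctly flag as the main technical burden of your version.
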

\begin{proof}
  Let $f \in C_c^\infty(\mathfrak t^*_{>0})$ be a test function, which we extend by zero to all of $\mathfrak t^*$, and set $g := (\pi_{K,T})^*\left(\left(\prod_{\alpha > 0} \partial_\alpha \right) f\right)$. By definition and assumption, respectively,
  \begin{equation*}
    \langle \restrict{ \left(\prod_{\alpha > 0} \partial_{-\alpha} \right) (\pi_{K,T})_*(\nu)}{\mathfrak t^*_{>0}}, f \rangle =
    \langle \nu, g \rangle =
    \langle {\nu}\big|_{K \cdot \mathfrak t^*_{>0}}, g \rangle.
  \end{equation*}
  Since $\nu$ is a $K$-invariant measure, we can use Fubini's theorem to replace $g$ by its $K$-average.
  % Let us denote the right-hand side function by $g$. By $K$-invariance of the Liouville measure and equivariance of the non-Abelian moment map, the measure $\nu$ is also $K$-invariant, so that
  % \begin{equation*}
  %   \langle \nu, g \rangle = \langle \nu, g(k \cdot -) \rangle \quad (\forall k \in K).
  % \end{equation*}
  % Noting that $(k,X) \mapsto g(k \cdot X)$ defines a smooth and compactly supported function on $K \times \mathfrak k^*$, Fubini's theorem therefore implies that we can replace $g$ by its average over coadjoint orbits,
  On each maximal-dimensional coadjoint orbit $\mathcal O_\lambda \subseteq K \cdot \mathfrak t^*_{>0}$, this average is given by
  \begin{equation*}
    \frac 1 {\vol \mathcal O_\lambda} \langle \mu_{\mathcal O_\lambda}, g \rangle =
    \frac 1 {p_K(\lambda)} \langle \mu_{\mathcal O_\lambda}, (\pi_{K,T})^*\left(\left(\prod_{\alpha > 0} \partial_\alpha \right) f\right) \rangle =
    \frac 1 {p_K(\lambda)} \langle \left(\prod_{\alpha > 0} \partial_{-\alpha} \right) \DuHe^T_{\mathcal O_\lambda}, f \rangle,
  \end{equation*}
  which by \eqref{ableitungsformel coadjoint orbit} is precisely equal to ${f(\lambda)}/{p_K(\lambda)}$.
  In other words, the averaged function is on $K \cdot \mathfrak t^*_{>0}$ equal to the pullback $(\tau_K)^* \left( f/{p_K} \right)$.
  We conclude that
  \begin{equation*}
    %\langle \restrict{ \left(\prod_{\alpha > 0} \partial_{-\alpha} \right) (\pi_{K,T})_*(\nu)}{\mathfrak t^*_{>0}}, f \rangle =
    \langle {\nu}\big|_{K \cdot \mathfrak t^*_{>0}}, g \rangle =
    \langle {\nu}\big|_{K \cdot \mathfrak t^*_{>0}}, (\tau_K)^* \left( \frac f {p_K} \right) \rangle =
    %\langle {(\tau_K)_*(\nu)}\bigg|_{\mathfrak t^*_{>0}}, \frac f {p_K} \rangle =
    \langle \restrict{\frac 1 {p_K} (\tau_K)_* (\nu)}{\mathfrak t^*_{>0}}, f \rangle.
    \qedhere
  \end{equation*}
\end{proof}

\begin{cor}
  \label{main theorem}
  The Duister\-maat--Heck\-man measures as defined in \autoref{notation} are related by
  \begin{equation*}
    \restrict{\left(\prod_{\alpha > 0} \partial_{-\alpha} \right) \DuHe^T_M}{\mathfrak t^*_{>0}} =
    {\DuHe^K_M}\bigg|_{\mathfrak t^*_{>0}}.
  \end{equation*}
\end{cor}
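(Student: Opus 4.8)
The plan is to deduce this immediately from \autoref{main theorem measures} by specializing the invariant measure $\nu$ to the push-forward of the Liouville measure along the non-Abelian moment map. Concretely, I would set $\nu := (\Phi_K)_*(\mu_M)$ and check that it satisfies the hypotheses of \autoref{main theorem measures}. Since $M$ is compact, $\mu_M$ is a finite measure, so $\nu$ is a finite — hence Radon — Borel measure on $\mathfrak k^*$. Because $\omega_M$ is $K$-invariant and $\Phi_K$ is equivariant for the coadjoint action on $\mathfrak k^*$, the measure $\nu$ is $K$-invariant. Finally, the vanishing condition $\nu(\mathfrak k^* \setminus K \cdot \mathfrak t^*_{>0}) = 0$ is precisely the assertion that $\Phi_K^{-1}(K \cdot \mathfrak t^*_{>0})$ has full Liouville measure in $M$, which is exactly the consequence of \autoref{main assumption} recorded in the paragraph following that assumption (via \cite[Lemma 3.9]{lermanmeinrenkentolmanetal98}).

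With the hypotheses in place, the proof is just an unwinding of the definitions in \autoref{notation}. By \eqref{definition duistermaat-heckman measure abelian} one has $(\pi_{K,T})_*(\nu) = (\pi_{K,T})_*(\Phi_K)_*(\mu_M) = \DuHe^T_M$, and by \eqref{definition duistermaat-heckman measure} one has $\frac{1}{p_K}(\tau_K)_*(\nu) = \frac{1}{p_K}(\tau_K)_*(\Phi_K)_*(\mu_M) = \DuHe^K_M$. Substituting these two identities into the conclusion of \autoref{main theorem measures} gives
\[
  \restrict{\left(\prod_{\alpha > 0} \partial_{-\alpha}\right)\DuHe^T_M}{\mathfrak t^*_{>0}} = \restrict{\DuHe^K_M}{\mathfrak t^*_{>0}},
\]
which is the assertion of the corollary.

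In this sense there is no genuine obstacle left: all of the analytic content — the use of Fubini to replace the test function by its $K$-average, and the reduction of the averaged computation to the single-orbit derivative identity of Harish--Chandra and Heckman (\autoref{ableitungsformel coadjoint orbit}) — has already been handled inside the proof of \autoref{main theorem measures}, and the present statement is simply its specialization to $\nu = (\Phi_K)_*(\mu_M)$. If anything, the only point worth a remark is why restricting $\DuHe^K_M$ to the open chamber $\mathfrak t^*_{>0}$ loses nothing essential: under \autoref{main assumption} the measure $\DuHe^K_M$ is locally finite on $\mathfrak t^*_{>0}$ with support the moment polytope $\Delta_K(M)$, so the identity on $\mathfrak t^*_{>0}$ is the natural and complete statement, and that is the form in which \autoref{main theorem measures} delivers it.
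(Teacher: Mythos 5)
Your proposal is correct and is precisely the paper's own argument: the paper's proof is a one-line appeal to \autoref{main theorem measures} applied to $\nu = (\Phi_K)_*(\mu_M)$, with \autoref{main assumption} guaranteeing the hypothesis $\nu(\mathfrak k^* \setminus K\cdot\mathfrak t^*_{>0})=0$. You have simply spelled out the verification of the hypotheses and the unwinding of the definitions in more detail than the paper does.
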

\begin{proof}
  \autoref{main assumption} guarantees that we can apply \autoref{main theorem measures} to the push-forward of the Liouville measure along the non-Abelian moment map $\Phi_K$.
\end{proof}

This is the \emph{derivative principle} alluded to in the title of this section. As we shall see in the following, it is a powerful tool for lifting results about the Duister\-maat--Heck\-man measure for torus actions to general compact Lie group actions.

\begin{rem}
  According to \cite[\S 3.5]{woodward05}, \autoref{main theorem} was already known to Paradan and also follows from a different result of Harish-Chandra. In \autoref{multiplicities of irreducibles} we will describe another way to establish it by using the connection between Duister\-maat--Heck\-man measures in algebraic geometry and multiplicities in group representations.
\end{rem}

\begin{rem}
  Note that \autoref{main theorem measures} completely determines the measure $\nu$ from its projection onto $\mathfrak t^*$, since $\nu$ is by assumption concentrated on the union of the coadjoint orbits of maximal dimension.
  Similarly, the non-Abelian Duister\-maat--Heck\-man measure $\DuHe^K_M$ can be fully reconstructed from $\DuHe^T_M$ by using \autoref{main theorem}.

  We stress that it is oftentimes not necessary to explicitly compute the non-Abelian Duister\-maat-Heck\-man measure.
  Indeed, \autoref{main theorem} is of course by definition equivalent to
  \begin{equation*}
    \langle \DuHe^K_M, f \rangle = \langle \DuHe^T_M, \left(\prod_{\alpha > 0} \partial_\alpha \right) f \rangle
  \end{equation*}
  for all $f \in C_c^\infty(\mathfrak t^*_{>0})$, so that we can reduce the computation of averages over $\DuHe^K_M$ directly to integrations with respect to the Abelian Duister\-maat--Heck\-man measure (cf.\ proof of \autoref{purity example}).
\end{rem}

\begin{rem}
  \label{finite union of regular chambers}
  It follows from \autoref{main theorem} and the discussion in \autoref{notation} that, on each (open) regular chamber, the non-Abelian Duister\-maat--Heck\-man measure also has a polynomial density, namely the partial derivative in the directions of the negative roots of the density of the Abelian measure.
  However, there could still be non-zero measure on the critical walls separating the regular chambers. If we would like to exclude this then we need to understand the smoothness properties of the Abelian density function in the vicinity of critical walls, or, equivalently, the nature of the term by which the polynomial density changes when crossing a critical wall. If this jump term vanishes to order at least $R$ on the wall, then the Abelian density function is at least $R$-times weakly differentiable in the vicinity of the wall, and therefore the non-Abelian Duister\-maat--Heck\-man density is also absolutely continuous there. This vanishing condition can be checked explicitly for each critical wall using the jump formula described in \autoref{projective space}.

  In case the vanishing condition is satisfied, the non-Abelian moment polytope $\Delta_K(M)$ is equal to the closure of a finite union of regular chambers for the Abelian moment map: Indeed, on each regular chamber the density polynomial is either equal to zero, or it is non-zero on an open, dense subset.
\end{rem}

We cannot resist giving an easy application of \autoref{main theorem} to the problem of describing the sum of two coadjoint orbits $\mathcal O_\lambda + \mathcal O_\mu$.
Mathematically, one considers the diagonal action of $K$ on $\mathcal O_\lambda \times \mathcal O_\mu$, which is Hamiltonian with moment map $(X,Y) \mapsto X+Y$, and one would like to describe the associated moment polytope or Duister\-maat--Heck\-man measure.

\begin{cor}[{\cite{dooleyrepkawildberger93}}] % [Theorem 3.4 and (4.12)]
  \label{main theorem horn}
  Let $\lambda \in \mathfrak t^*_{>0}$ and $\mu \in \mathfrak t^*_+$. Then,
  \begin{equation*}
    \DuHe^K_{\mathcal O_\lambda \times \mathcal O_\mu} =
    \sum_{w \in W} (-1)^{l(w)} \delta_{w \lambda} \star \DuHe^T_{\mathcal O_\mu},
  \end{equation*}
  where $l(w)$ is the length of the Weyl group element $w$.
\end{cor}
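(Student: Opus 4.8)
The plan is to combine the derivative principle (\autoref{main theorem}) with the known Abelian description of the Duistermaat--Heckman measure of a product of coadjoint orbits, and then "integrate back up" by running the derivative principle in reverse. First I would record the Abelian statement: for the diagonal $T$-action on $\mathcal O_\lambda \times \mathcal O_\mu$ the moment map is $(X,Y)\mapsto \pi_{K,T}(X)+\pi_{K,T}(Y)$, so the Abelian Duistermaat--Heckman measure is the convolution $\DuHe^T_{\mathcal O_\lambda \times \mathcal O_\mu} = \DuHe^T_{\mathcal O_\lambda} \star \DuHe^T_{\mathcal O_\mu}$. Since $\lambda \in \mathfrak t^*_{>0}$, the orbit $\mathcal O_\lambda$ is of maximal dimension and \autoref{ableitungsformel coadjoint orbit}, via the Harish-Chandra formula \eqref{harish chandra}, gives $\DuHe^T_{\mathcal O_\lambda} = \sum_{w \in W} (-1)^{l(w)} \delta_{w\lambda} \star H_{-\alpha_1} \star \cdots \star H_{-\alpha_R}$. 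Hence $\DuHe^T_{\mathcal O_\lambda \times \mathcal O_\mu} = \sum_{w\in W}(-1)^{l(w)} \delta_{w\lambda} \star H_{-\alpha_1}\star\cdots\star H_{-\alpha_R}\star \DuHe^T_{\mathcal O_\mu}$.

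Next I would apply \autoref{main theorem}. \autoref{main assumption} holds here because $\lambda \in \mathfrak t^*_{>0}$ forces the non-Abelian moment image to meet the open chamber, so
\begin{equation*}
  \restrict{\DuHe^K_{\mathcal O_\lambda \times \mathcal O_\mu}}{\mathfrak t^*_{>0}} = \restrict{\Bigl(\prod_{\alpha>0}\partial_{-\alpha}\Bigr)\DuHe^T_{\mathcal O_\lambda \times \mathcal O_\mu}}{\mathfrak t^*_{>0}}.
\end{equation*}
Now apply the operator $\prod_{\alpha>0}\partial_{-\alpha}$ to the convolution formula above. Each factor $\partial_{-\alpha_i}$ acts on the corresponding Heaviside factor $H_{-\alpha_i}$ via $\partial_{-\alpha_i} H_{-\alpha_i} = \delta_0$ (the fundamental theorem of calculus, exactly as in the proof of \autoref{ableitungsformel coadjoint orbit}), and convolution with $\delta_0$ is the identity. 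This collapses all the Heaviside factors and leaves $\sum_{w\in W}(-1)^{l(w)}\delta_{w\lambda}\star\DuHe^T_{\mathcal O_\mu}$, which is precisely the claimed expression.

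The last point to address — and the main obstacle — is the passage from the identity on $\mathfrak t^*_{>0}$ to an identity of measures on all of $\mathfrak t^*_+$: \autoref{main theorem} only controls the restriction to the open chamber. I would argue that both sides have no mass on the walls. For the right-hand side, $\DuHe^T_{\mathcal O_\mu}$ is absolutely continuous with respect to Lebesgue measure (as noted in \autoref{notation}, since $T$ acts locally freely on a dense open set of full Liouville measure on any coadjoint orbit), hence so is $\sum_w (-1)^{l(w)}\delta_{w\lambda}\star\DuHe^T_{\mathcal O_\mu}$, so it charges no hyperplane. For the left-hand side, $\DuHe^K_{\mathcal O_\lambda \times \mathcal O_\mu}$ is likewise absolutely continuous by \autoref{main assumption} and the Duistermaat--Heckman theorem applied to the $K$-reduced measure (or one may invoke \autoref{finite union of regular chambers}), so it too puts no mass on $\partial\mathfrak t^*_+$. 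Since two locally finite measures on $\mathfrak t^*_+$ that agree on the open chamber and each give zero mass to its complement must coincide, the identity holds as stated. (One should also note $\DuHe^T_{\mathcal O_\mu}$ is well-defined even when $\mu$ lies on a wall, interpreted as the pushforward of Liouville measure under $\Phi_T$; the only place genuine maximal-dimensionality is used is for $\mathcal O_\lambda$, which is why the hypothesis is $\lambda \in \mathfrak t^*_{>0}$ but merely $\mu \in \mathfrak t^*_+$.)
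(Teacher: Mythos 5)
Your argument is correct and is essentially the paper's proof: the Abelian measure of the product is the convolution $\DuHe^T_{\mathcal O_\lambda}\star\DuHe^T_{\mathcal O_\mu}$, \autoref{main assumption} holds because $\lambda+\mu\in\mathfrak t^*_{>0}$, and applying $\prod_{\alpha>0}\partial_{-\alpha}$ to the $\mathcal O_\lambda$ factor via the Harish-Chandra formula \eqref{harish chandra derivative} collapses the Heaviside factors. Your closing paragraph about extending from $\mathfrak t^*_{>0}$ to $\mathfrak t^*_+$ is not needed (by the conventions of \autoref{notation} the non-Abelian Duistermaat--Heckman measure is only a locally finite measure on the open chamber, so the identity is read there), and in fact the absolute-continuity claim for $\DuHe^T_{\mathcal O_\mu}$ can fail for degenerate $\mu$ (e.g.\ $\mathcal O_\mu$ a point), but this does not affect the core argument.
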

\begin{proof}
  Clearly, since $T$ is Abelian,
  \begin{equation*}
    \DuHe^T_{\mathcal O_\lambda \times \mathcal O_\mu} =
    \DuHe^T_{\mathcal O_\lambda} \star \DuHe^T_{\mathcal O_\mu}.
  \end{equation*}
  Since $\lambda + \mu \in \mathfrak t^*_{>0}$, \autoref{main assumption} is satisfied. Therefore, \autoref{main theorem} is applicable, and the assertion follows together with \eqref{harish chandra derivative},
  \begin{equation*}
    \DuHe^K_{\mathcal O_\lambda \times \mathcal O_\mu} =
    \left( \prod_{\alpha > 0} \partial_{-\alpha} \right) \DuHe^T_{\mathcal O_\lambda} \star \DuHe^T_{\mathcal O_\mu} =
    \sum_{w \in W} (-1)^{l(w)} \delta_{w \lambda} \star \DuHe^T_{\mathcal O_\mu}.
    \qedhere
  \end{equation*}
\end{proof}

The general case where both $\lambda$ and $\mu$ are contained in the boundary of the positive Weyl chamber can be treated as in \cite{dooleyrepkawildberger93} by taking limits. Of course we can also expand $\DuHe^T_{\mathcal O_\mu}$ as an alternating sum of convolutions by using \eqref{harish chandra} or its version for lower-dimensional coadjoint orbits \cite[Theorem 7.24]{berlinegetzlervergne03}.

%%%%%%%%%%%%%%%%%%%%%%%%%%%%%%%%%%%%%%%%%%%%%%%%%%%%%%%%%%%%%%%%%%%%%%%%%%%%%%%
\section{Algorithms for Duister\-maat--Heck\-man Measures}
\label{algorithms}
%%%%%%%%%%%%%%%%%%%%%%%%%%%%%%%%%%%%%%%%%%%%%%%%%%%%%%%%%%%%%%%%%%%%%%%%%%%%%%%

In this section we present two algorithms for computing Duister\-maat--Heck\-man measures. Both algorithms are based on the derivative principle from \autoref{derivative principle}, in that they first compute the Abelian measure and then take partial derivatives according to \autoref{main theorem}.

The first algorithm, the \emph{Heckman algorithm}, is based on the Heckman formula by Guillemin, Lerman and Sternberg, which expresses the Abelian measure as an alternating sum of iterated convolutions of Heaviside measures. The density function of each such convolution is piecewise polynomial and can be evaluated inductively using recent work of Boysal and Vergne. While very useful for computing low-dimensional examples, the resulting algorithm is rather inefficient due to the large number of summands.

Our second algorithm, the \emph{single-summand algorithm}, is based on another formula for the Abelian Duister\-maat--Heck\-man measure in the case where $M$ is the projective space of an arbitrary finite-dimensional representation. It turns out that this formula is equivalent to evaluating a single iterated convolution of the above form (hence the name of the algorithm). It can therefore be computed in a similar way, but much more efficiently. Since by passing to the purified double the quantum marginal problem can always be reduced to the case where $M$ is a projective space (\autoref{purification}), this solves the problem of computing eigenvalue distributions of reduced density matrices in complete generality.

\subsection{Heckman Algorithm}
\label{heckman}

Before stating the Heckman formula by Guillemin, Lerman and Sternberg, let us recall the following renormalization process as described in \cite{guilleminlermansternberg88}:

Suppose that there are only finitely many fixed points of the action of the maximal torus $T$ on $M$.
For each such fixed point $p \in M^T$, consider the induced representation of $T$ on the tangent space $T_p M$. The weights of this representation are called isotropy weights and we can always choose a vector $\gamma \in \mathfrak t^*$ which is non-orthogonal to all isotropy weights (for all tangent spaces). The process of multiplying by $-1$ those isotropy weights that have negative inner product with $\gamma$ is then called \emph{renormalization}, and the resulting weights are called renormalized weights. See \autoref{fixed-point data projective space} for a discussion of the case where $M$ is a projective space and \autoref{examples} for examples.

\begin{thm}[\cite{guilleminlermansternberg88}]
  \label{abelian heckman}
  Suppose that there are only finitely many torus fixed points $p \in M^T$. Denote by $n_p$ the number of isotropy weights in $T_p M$ that are multiplied by $-1$ during renormalization and by $\hat\omega_{p,1}, \ldots, \hat\omega_{p,n}$ the resulting renormalized weights. Then,
  \begin{equation*}
    \DuHe^T_M = \sum_{p \in M^T} (-1)^{n_p} \delta_{\Phi_T(p)} \star H_{\hat\omega_{p,1}} \star \ldots \star H_{\hat\omega_{p,n}},
  \end{equation*}
  with $H_{\hat\omega}$ the Heaviside measure defined by $\langle H_{\hat\omega}, f \rangle = \int_0^\infty dt f(\hat\omega t)$.
\end{thm}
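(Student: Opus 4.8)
The plan is to establish the Heckman formula by localizing the push-forward of the Liouville measure at the torus fixed points and then invoking the stationary-phase/localization machinery that underlies the Duistermaat--Heckman theorem. Concretely, I would work with Fourier transforms: for $X \in \mathfrak{t}$ consider $\langle \DuHe^T_M, e^{i\langle -, X\rangle}\rangle = \int_M e^{i\langle \Phi_T, X\rangle}\, \mu_M$. By the Duistermaat--Heckman exact stationary phase formula (equivalently, Atiyah--Bott--Berline--Vergne localization applied to the equivariantly closed form $e^{i(\omega_M + \langle\Phi_T,X\rangle)}$), for generic $X$ this integral equals a sum over the fixed points $p \in M^T$ of local contributions
\begin{equation*}
  \sum_{p \in M^T} e^{i\langle \Phi_T(p), X\rangle} \prod_{j} \frac{1}{i\langle \omega_{p,j}, X\rangle},
\end{equation*}
where the $\omega_{p,j}$ are the (un-renormalized) isotropy weights at $p$. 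The factor $\prod_j (i\langle\omega_{p,j},X\rangle)^{-1}$ is exactly the Fourier transform of an iterated convolution of Heaviside measures $H_{\omega_{p,1}} \star \cdots \star H_{\omega_{p,n}}$ — but this only makes literal sense once all the weights lie in an open half-space, which is why the renormalization step is needed.

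The second step is therefore to handle the renormalization bookkeeping. Choose $\gamma \in \mathfrak{t}^*$ non-orthogonal to every isotropy weight; for each $p$ let $S_p$ be the set of indices $j$ with $\langle\omega_{p,j},\gamma\rangle < 0$, so $n_p = |S_p|$, and set $\hat\omega_{p,j} = \omega_{p,j}$ if $j \notin S_p$ and $\hat\omega_{p,j} = -\omega_{p,j}$ if $j \in S_p$. Using the elementary identity $\frac{1}{i\langle\omega,X\rangle} = -\frac{1}{i\langle -\omega, X\rangle}$, one rewrites
\begin{equation*}
  \prod_{j} \frac{1}{i\langle \omega_{p,j}, X\rangle}
  = (-1)^{n_p} \prod_{j} \frac{1}{i\langle \hat\omega_{p,j}, X\rangle},
\end{equation*}
and now each $\prod_j (i\langle\hat\omega_{p,j},X\rangle)^{-1}$ genuinely is the Fourier transform of $H_{\hat\omega_{p,1}} \star \cdots \star H_{\hat\omega_{p,n}}$ (each $H_{\hat\omega_{p,j}}$ has Fourier transform $\int_0^\infty e^{it\langle\hat\omega_{p,j},X\rangle}dt$, which converges as a tempered distribution and equals $(i\langle\hat\omega_{p,j},X\rangle + 0)^{-1}$, and Fourier transform turns convolution into product). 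Multiplying by $e^{i\langle\Phi_T(p),X\rangle}$ corresponds to convolving with $\delta_{\Phi_T(p)}$. Summing over $p$ and matching Fourier transforms of tempered distributions on $\mathfrak{t}^*$ — which are determined by their values on a dense open set of $X$ — yields the asserted identity $\DuHe^T_M = \sum_{p} (-1)^{n_p}\, \delta_{\Phi_T(p)} \star H_{\hat\omega_{p,1}} \star \cdots \star H_{\hat\omega_{p,n}}$.

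The main obstacle is making the passage from the formal stationary-phase identity to an honest equality of measures rigorous: the individual local contributions $\prod_j(i\langle\omega_{p,j},X\rangle)^{-1}$ are singular where $X$ is orthogonal to some weight, so one must interpret each summand as a boundary value of a distribution (equivalently, regularize $H_{\hat\omega}$ with a decaying factor $e^{-\epsilon t}$ and take $\epsilon \to 0^+$) and verify that the full alternating sum, which is the Fourier transform of a genuine compactly supported measure, has no singularities and assembles correctly. The choice of $\gamma$ affects how each summand is split but not the total, because different choices differ by the identity above applied consistently. The hypothesis that $M^T$ is finite is what guarantees the localization formula has this simple form; the rest is the Fourier-analytic translation just described, together with the observation that $\DuHe^T_M = (\Phi_T)_*\mu_M$ by its very definition in the Abelian case (no $p_K$ or $\tau_K$ corrections enter).
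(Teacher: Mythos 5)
The paper offers no proof of this theorem: it is quoted directly from Guillemin--Lerman--Sternberg, with only the gloss that it expresses exactness of the stationary-phase approximation for the Fourier transform of $\DuHe^T_M$. Your proposal is exactly that standard argument (Atiyah--Bott--Berline--Vergne localization applied to $e^{i(\omega_M+\langle\Phi_T,X\rangle)}$ at the isolated fixed points, followed by the renormalization bookkeeping that turns each local contribution into the Fourier transform of an iterated Heaviside convolution), so it matches the intended route; the one detail to pin down is the sign convention for $\langle H_{\hat\omega}, e^{i\langle -,X\rangle}\rangle$, which must be chosen consistently with the normalization of the localization formula so that the prefactor comes out as $(-1)^{n_p}$ rather than $(-1)^{n-n_p}$ --- compare the paper's own use of \eqref{harish chandra}, where $H_{-\alpha}$ corresponds to the factor $1/(i\langle\alpha,X\rangle)$.
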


In other words, the stationary phase approximation for the Fourier transform of an Abelian Duister\-maat--Heck\-man measure is exact. This generalizes the Harish-Chandra formula for coadjoint orbits \eqref{harish chandra}, which we used to establish \eqref{ableitungsformel coadjoint orbit}.

Observe that each summand of the Heckman formula can be written as the push-forward of the standard Lebesgue measure $dt$ on $\RR^n_{\geq 0}$ along a linear map of the form $\hat P \colon (t_k) \mapsto \sum_{k=1}^n t_k \hat\omega_k$, translated by $\Phi_T(p)$, since
\begin{equation}
  \label{convolution translation}
  H_{\hat\omega_1} \star \ldots \star H_{\hat\omega_n} =
  {\hat P}_*(H_{e_1} \star \ldots \star H_{e_n}) =
  {\hat P}_*(\restrict{dt}{\RR^n_{\geq 0}}).
\end{equation}
In a recent paper \cite{boysalvergne09}, Boysal and Vergne have analyzed general push-forward measures of this form under the assumption that the vectors $\hat\omega_k$ span a proper convex cone (i.e., a convex cone of maximal dimension that does not contain any straight line). This ensures that the measure is locally finite and absolutely continuous with respect to Lebesgue measure on $\mathfrak t^*$.
This assumption is certainly satisfied for the renormalized isotropy weights occurring in the Heckman formula (by the very definition of renormalization and our assumption that the Abelian moment polytope has maximal dimension).

Let us briefly review their results: It is well-known that the push-forward measure has a piecewise homogeneous polynomial density function of degree $n-r$. Here, the \emph{chambers} are the connected components of the complement of the cones spanned by at most $r-1$ of the weights $(\hat\omega_k)$. Except for the unbounded chamber, they are open convex cones. \emph{Walls} are by definition the convex cones spanned by $r-1$ linearly independent weights.\footnote{\label{non-compact footnote}In fact, the $p$-th summand of \autoref{abelian heckman} is precisely the Duister\-maat--Heck\-man measure corresponding to the isotropy representation of $T$ on the symplectic vector space $T_p M$, which is of course a non-compact symplectic manifold and, strictly speaking, does not fit into our setup. The decomposition of $\mathfrak t^*$ into regular chambers for the moment map of $M$ is refined by the common refinement of the chamber decompositions for the $T_p M$ (cf.\ \autoref{notation}).} Similarly to \autoref{notation}, if the common boundary of the closure of two chambers is of maximal dimension then this common boundary is a wall; moreover, every wall arises in this way. Note that the union of the walls is precisely the complement of the union of the chambers.\footnote{This is our reason for choosing a different definition for walls than the one used in \cite{boysalvergne09}. There, walls were defined as linear hyperplanes spanned by $r-1$ linearly independent vectors.}

Let $\hat\Delta_\pm$ be two adjacent chambers which are separated by a wall $\hat W$, and choose a normal vector $\hat\xi \in \mathfrak t^*$ pointing from $\hat\Delta_-$ to $\hat\Delta_+$. Order the weights such that precisely $\hat\omega_1, \ldots, \hat\omega_m$ lie on the linear hyperplane spanned by $\hat W$. In the following, we shall freely identify differential forms and the measures induced by them. Denote by $d\hat w$ the Lebesgue measure on the hyperplanes parallel to $\hat W$, normalized in such a way that
\begin{equation}
  \label{wall quotient measure cone}
  d\lambda = d\hat w \wedge d\hat\xi,
\end{equation}
where $d\hat\xi$ is the pullback of the standard volume form of $\RR$ along the coordinate function $\langle -, \hat\xi\rangle$. Denote by $\hat f_\pm$ the homogeneous polynomials describing the density function $\hat f$ on $\hat\Delta_\pm$. Finally, consider the push-forward of Lebesgue measure on $\RR^m_{\geq 0}$ along the linear map $\hat P_{\hat W} \colon (u_k) \mapsto \sum_{k=1}^m u_k \hat\omega_k$. Its density with respect to $d\hat w$ is given by a single homogeneous polynomial on the wall $\hat W$, since $\hat W$ is always contained in the closure of a chamber for $\hat P_{\hat W}$. Denote by $\hat f_{\hat W}$ any polynomial function extending it to all of $\mathfrak t^*$. Then the result of Boysal and Vergne is the following \cite[Theorem 1.1]{boysalvergne09}: The jump of the density function across the wall is given by
\begin{equation}
  \label{boysal vergne}
    \hat f_+(\hat\lambda) - \hat f_-(\hat\lambda)
  = \restrict{\Res}{z=0} \left(
      \hat f_{\hat W}(\partial_{\hat x})
      \frac
        {e^{\langle \hat\lambda, {\hat x} + z\hat\xi \rangle}}
        {\prod_{k=m}^n \langle \hat\omega_k, {\hat x} + z \hat\xi \rangle}
    \right)_{{\hat x}=0},
\end{equation}
where $\restrict{\Res}{z=0} g = a_{-1}$ is the residue of a formal Laurent series $g = \sum_k a_k z^k$. (The residue appears as part of an inversion formula for the Laplace transform.)

In the case where only a minimal number of weights lie on the linear hyperplane spanned by $\hat W$ ($m=r-1$), the wall polynomial $\hat f_{\hat W}$ can be chosen as a constant, since the corresponding push-forward map is merely a change of coordinates:

\begin{lem}
  \label{minimal wall jump cones}
  Suppose that precisely $r-1$ weights $\hat\omega_1, \ldots, \hat\omega_{r-1}$ lie on $\linspan {\hat W}$. Then,
  \begin{equation*}
    \hat f_{\hat W}^{-1} \equiv |d\lambda\left(\hat\omega_1, \ldots, \hat\omega_{r-1}, \frac {\hat\xi} {\norm{\hat\xi}^2}\right)|.
  \end{equation*}
  %if the common boundary of $\hat\Delta_\pm$ is contained in the cone spanned by $\hat\omega_1, \ldots, \hat\omega_{r-1}$, and zero otherwise.
\end{lem}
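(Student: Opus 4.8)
The plan is a direct change-of-variables computation that exploits the fact that, in the minimal case $m=r-1$, the linear map $\hat P_{\hat W}$ is an isomorphism onto the hyperplane $\linspan{\hat W}$.

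First I would record that the hypothesis means $\hat\omega_1,\ldots,\hat\omega_{r-1}$ form a basis of $\linspan{\hat W}$, so that $\hat P_{\hat W}\colon\RR^{r-1}\to\linspan{\hat W}$, $(u_k)\mapsto\sum_k u_k\hat\omega_k$, is a linear isomorphism. Its push-forward of $\restrict{dt}{\RR^{r-1}_{\geq 0}}$ is therefore supported on $\cone(\hat\omega_1,\ldots,\hat\omega_{r-1})$ with \emph{constant} density (with respect to $d\hat w$) equal to the reciprocal of the $d\hat w$-Jacobian of $\hat P_{\hat W}$, i.e.\ of $\lvert d\hat w(\hat\omega_1,\ldots,\hat\omega_{r-1})\rvert$. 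Hence one may take $\hat f_{\hat W}\equiv 1/\lvert d\hat w(\hat\omega_1,\ldots,\hat\omega_{r-1})\rvert$, equivalently $\hat f_{\hat W}^{-1}\equiv\lvert d\hat w(\hat\omega_1,\ldots,\hat\omega_{r-1})\rvert$.

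Second I would rewrite the right-hand side of the asserted identity as the very same quantity. Since $\hat\xi$ is normal to the wall, it is orthogonal (with respect to the fixed invariant inner product) to $\linspan{\hat W}$, so $d\hat\xi(\hat\omega_k)=\langle\hat\omega_k,\hat\xi\rangle=0$ for each $k=1,\ldots,r-1$. Evaluating $d\lambda=d\hat w\wedge d\hat\xi$ from \eqref{wall quotient measure cone} on the tuple $(\hat\omega_1,\ldots,\hat\omega_{r-1},\hat\xi/\norm{\hat\xi}^2)$ and expanding the wedge product, every term pairing $d\hat\xi$ with some $\hat\omega_k$ drops out, leaving only $\pm\, d\hat w(\hat\omega_1,\ldots,\hat\omega_{r-1})\cdot d\hat\xi(\hat\xi/\norm{\hat\xi}^2)$. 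As $d\hat\xi(\hat\xi/\norm{\hat\xi}^2)=\langle\hat\xi,\hat\xi\rangle/\norm{\hat\xi}^2=1$, passing to absolute values gives $\lvert d\lambda(\hat\omega_1,\ldots,\hat\omega_{r-1},\hat\xi/\norm{\hat\xi}^2)\rvert=\lvert d\hat w(\hat\omega_1,\ldots,\hat\omega_{r-1})\rvert=\hat f_{\hat W}^{-1}$, which is the claim.

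The computation presents no real obstacle; what needs a little care is keeping the three normalizations $d\lambda$, $d\hat w$, $d\hat\xi$ consistent — in particular, recognizing that the normal direction must enter rescaled as $\hat\xi/\norm{\hat\xi}^2$ rather than as $\hat\xi$, precisely so that it pairs to $1$ under the one-form $d\hat\xi$ (which by \eqref{wall quotient measure cone} is the pullback of the unit length form along $\langle-,\hat\xi\rangle$). One should also note that $d\hat w(\hat\omega_1,\ldots,\hat\omega_{r-1})$ is unambiguous even though \eqref{wall quotient measure cone} pins down $d\hat w$ only modulo forms divisible by $d\hat\xi$: any such ambiguity annihilates tuples of vectors lying in $\linspan{\hat W}$, which is exactly our situation.
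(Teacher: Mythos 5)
Your proof is correct and follows essentially the same route as the paper's: identify $\hat f_{\hat W}$ as the reciprocal Jacobian of the linear isomorphism $\hat P_{\hat W}$, i.e.\ as $1/\lvert d\hat w(\hat\omega_1,\ldots,\hat\omega_{r-1})\rvert$, and then convert $d\hat w(\hat\omega_1,\ldots,\hat\omega_{r-1})$ into the stated $d\lambda$-determinant using $d\lambda=d\hat w\wedge d\hat\xi$ and the orthogonality of $\hat\xi$ to the wall. Your remarks on the $\hat\xi/\norm{\hat\xi}^2$ normalization and on the well-definedness of $d\hat w$ on tuples in $\linspan{\hat W}$ are accurate elaborations of what the paper leaves implicit.
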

\begin{proof}
  Since the map $\hat P_{\hat W} \colon \RR^{r-1} \rightarrow \linspan \hat W, (u_k) \mapsto \sum_{k=1}^{r-1} u_k \hat\omega_k$ along which we push forward is a linear isomorphism, the polynomial $\hat f_{\hat W}$ can be chosen as the constant of proportionality between the push-forward of Lebesgue measure on $\RR^{r-1}$ and the measure $dw$. %Clearly, it vanishes outside the cone spanned by the $(\hat\omega_k)$, and
  We can compute its value by comparing the volume of the parallelotope spanned by the $(\hat\omega_k)$ with respect to the two measure. For the former measure, this is of course one, while for the latter it follows from \eqref{wall quotient measure cone} that
  \begin{equation*}
    d\lambda(\hat\omega_1, \ldots, \hat\omega_{r-1}, \hat\xi) =
    dw(\hat\omega_1, \ldots, \hat\omega_{r-1}) \,\norm{\hat\xi}^2.
    \qedhere
  \end{equation*}
\end{proof}

This immediately gives rise to the following inductive algorithm:

\begin{alg}
  \label{boysal vergne algorithm}
  The following algorithm computes the piecewise polynomial density function of the push-forward of Lebesgue measure on $\RR^n_{\geq 0}$ along $(t_k) \mapsto \sum_{k=1}^n t_k \hat\omega_k$ with respect to $d\lambda$.
  \begin{enumerate}
    \item Start with the unbounded chamber, where $\hat f \equiv 0$.
    \item Iteratively jump over walls $\hat W$ separating the current chamber with an adjacent chamber:
      \begin{enumerate}
        \item Denote by $\hat\omega_1, \ldots, \hat\omega_m$ the weights which lie on the hyperplane through $\hat W$.
        \item If the wall is minimal ($m=r-1$), compute $\hat f_{\hat W}$ via \autoref{minimal wall jump cones}.
        \item Otherwise, recursively apply \autoref{boysal vergne algorithm} to compute the piecewise polynomial density function of the push-forward of Lebesgue measure on $\RR^m_{\geq 0}$ along $(u_k) \mapsto \sum_{k=1}^m u_k \hat\omega_k$ with respect to $d\omega$.%
\footnote{This density of course only depends on the hyperplane through $\hat W$, and can therefore re-used for all other walls that span the same hyperplane.}
        On $\hat W$ itself, it is given by a single homogeneous polynomial.
        Choose any polynomial extension $\hat f_{\hat W}$ to all of $\mathfrak t^*$.
        \item Compute the density on the adjacent chamber using \eqref{boysal vergne}.
      \end{enumerate}
  \end{enumerate}
\end{alg}

If the set of renormalized weights is not multiplicity-free then the one-dimensional walls are not necessarily minimal; \autoref{boysal vergne algorithm} can be modified in a straightforward way to include $r=1$ as an additional base case.

By combining \autoref{boysal vergne algorithm} with the Heckman formula, we arrive at the following algorithm for computing Duister\-maat--Heck\-man measures. We shall call it the \emph{(Abelian) Heckman algorithm}.

\begin{alg}
  \label{abelian heckman algorithm}
  Under the assumptions and using the notation of \autoref{abelian heckman}, the following algorithm computes the piecewise polynomial density function of the Abelian Duister\-maat--Heck\-man measure:
  \begin{enumerate}
    \item Compute the density of each of the $|M^T|$ iterated convolutions
      $\delta_{\Phi_K(p)} \star H_{\hat\omega_{p,1}} \star \ldots \star H_{\hat\omega_{p,n}}$
      using \autoref{boysal vergne algorithm}.
    \item Form their alternating sum according to \autoref{abelian heckman}.
  \end{enumerate}
  The non-Abelian Duister\-maat--Heck\-man measure can then be computed via \autoref{main theorem}.
  By passing to its support, we can also determine the non-Abelian moment polytope (cf.\ \autoref{finite union of regular chambers}).
\end{alg}

The algorithm as we have stated it assumes that the fixed-point data is part of the input. Let us describe it in the situations we are interested in:

\begin{rem}
  \label{fixed-point data projective space}
  Consider the projective space $M = \PP(V)$ associated with an arbitrary finite-dimensional, unitary $K$-representation $V$. Torus fixed points in $M$ correspond to weight vectors in $V$. Therefore, $M^T$ is finite if and only if all the weight spaces of $V$ are one-dimensional.
  If this is the case, let $V = \bigoplus_{k=0}^n \CC v_k$ be the weight-space decomposition, with $v_k$ weight vectors of pairwise distinct weight $\omega_k$, so that the torus fixed points are precisely the points $[v_0], \ldots, [v_n] \in M$. Then, \emph{before renormalization}, the isotropy weights in $T_{[v_k]} M$ are given by the vectors $\omega_l - \omega_k$ for $l \neq k$.

  Note that the representations associated with the pure-state quantum marginal problems displayed in \autoref{QMP summary table} indeed have one-dimensional weight spaces, so that \autoref{abelian heckman algorithm} is directly applicable: This is obvious for $\CC^{d_1} \otimes \ldots \otimes \CC^{d_N}$ and can also be verified for $\Sym^N(\CC^d)$ and $\Alt^N(\CC^d)$ (e.g., by observing that any single-row or single-column semistandard tableaux is already determined by its weight vector). However, other irreducible representations of $\SU(d)$, which correspond to indistinguishable particles of more exotic statistics, typically have weight spaces of dimension larger than one \cite{fulton97}.
\end{rem}

\begin{rem}
  \label{heckman for coadjoint orbit reductions}
  Consider more generally the action of $T$ on a coadjoint $\tilde K$-orbit $M = \mathcal O_{\tilde\lambda}$ induced by a group homomorphism $\varphi \colon T \rightarrow \tilde T \subseteq \tilde K$. Even though this action might have infinitely many fixed points, there is an obvious way to write down an alternating sum formula for $\DuHe^T_{\mathcal O_{\tilde\lambda}}$: Note that it follows directly from \eqref{restriction to subgroups} that
  \begin{equation*}
    \DuHe^T_{\mathcal O_{\tilde\lambda}} =
    \pi_* \DuHe^{\tilde T}_{\mathcal O_{\tilde\lambda}},
  \end{equation*}
  where $\pi = (d\varphi)^*$ is the dual map $\mathfrak {\tilde t}^* \rightarrow \mathfrak t^*$.
  Therefore, we can simply take the Abelian Heckman formula for the $\tilde T$-action (which is always applicable since the fixed point set of $\tilde T$ is the Weyl orbit of $\tilde\lambda$, hence finite), and push forward each summand along $\pi$. In the case of a maximal-dimensional coadjoint orbit and for a suitable choice of renormalization direction, the result is just the push-forward of the Harish-Chandra formula \eqref{harish chandra},
  \begin{equation}
    \label{heckman for maximal dimensional coadjoint orbit reductions}
    \DuHe^T_{\mathcal O_{\tilde\lambda}} = \sum_{\tilde w \in \tilde W} (-1)^{l(\tilde w)} \delta_{\pi(\tilde w \tilde\lambda)} \star H_{-\pi(\tilde \alpha_1)} \star \ldots \star H_{-\pi(\tilde \alpha_{\tilde R})},
  \end{equation}
  with $\tilde\alpha_1, \ldots, \tilde\alpha_{\tilde R}$ the positive roots of $\tilde K$.
  The formula for lower-dimensional coadjoint orbits can be obtained by using \cite[Theorem 7.24]{berlinegetzlervergne03} instead of \eqref{harish chandra}.

  In particular, this approach allows the computation of the Abelian Duister\-maat--Heck\-man measure for arbitrary setups of the quantum marginal problem by an obvious variant of \autoref{abelian heckman algorithm}.
\end{rem}

While \autoref{abelian heckman algorithm} and the variant described in \autoref{heckman for coadjoint orbit reductions} are useful for computing low-dimensional examples, any approach relying on the Heckman formula has the major problem that the number of summands in the Heckman formula is typically very large (e.g., it is exponential in the number of distinguishable particles or fermions). Moreover, even though the Boysal--Vergne algorithm computes the density of a single summand chamber-by-chamber, this is less straightforward for the alternating sum, where all summands have to be evaluated in parallel. In \autoref{projective space} below we will therefore derive an algorithm which does not suffer from these problems.

\medskip

There is also a non-Abelian Heckman formula due to Guillemin and Prato \cite{guilleminprato90} (which suffers from the same problems). It can be deduced directly from the Abelian one by applying the derivative principle:

\begin{thm}[{\cite[(2.15)]{guilleminprato90}}]
  \label{non-abelian heckman}
  Suppose that there are only finitely many torus fixed points $p \in M^T$ and that in each tangent space $T_p M$ each positive root $\alpha > 0$ or its negative occurs as an isotropy weight. Denote by $n_p$ the number of isotropy weights in $T_p M$ that are multiplied by $-1$ during renormalization. For each positive root $\alpha > 0$ and in each $T_p M$, remove either $\alpha$ or $-\alpha$ from the list of renormalized isotropy weights. Denote the remaining weights by $\hat\omega_{p,1}, \ldots, \hat\omega_{p,n-R}$, and let $k_p$ be the number of negative roots that have been removed. Then,
  \begin{equation*}
    \DuHe^K_M = \sum_{p \in M^T} (-1)^{n_p+k_p} \delta_{\Phi_K(p)} \star H_{\hat\omega_{p,1}} \star \ldots \star H_{\hat\omega_{p,n-R}} \Big|_{\mathfrak t^*_+}.
  \end{equation*}
  In particular, the second assumption is satisfied when the moment map $\Phi_K$ sends each torus fixed points to the interior of a Weyl chamber.
\end{thm}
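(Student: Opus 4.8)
The plan is to obtain this formula directly from the Abelian Heckman formula (\autoref{abelian heckman}) by applying the derivative principle. Recall that \autoref{main theorem} gives
\begin{equation*}
  \DuHe^K_M\big|_{\mathfrak t^*_{>0}} = \restrict{\left( \prod_{\alpha > 0} \partial_{-\alpha} \right) \DuHe^T_M}{\mathfrak t^*_{>0}},
\end{equation*}
while the first hypothesis is exactly what is needed to apply \autoref{abelian heckman}, which expresses $\DuHe^T_M$ as the alternating sum $\sum_{p \in M^T}(-1)^{n_p}\,\delta_{\Phi_K(p)} \star H_{\hat\omega_{p,1}} \star \cdots \star H_{\hat\omega_{p,n}}$. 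Substituting and using that the differential operator is linear and commutes with convolution, the whole argument reduces to computing the effect of $\prod_{\alpha > 0}\partial_{-\alpha}$ on one translated convolution of Heaviside measures.

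For a single negative root $-\alpha$, the second hypothesis ensures that $\alpha$ or $-\alpha$ occurs among the isotropy weights at $p$; since renormalization only multiplies a weight by $\pm 1$, one of $\alpha,-\alpha$ still occurs among the renormalized weights $\hat\omega_{p,j}$, so there is a factor $H_{\pm\alpha}$ to differentiate against. Using $\partial_{-\alpha}H_{-\alpha} = \delta_0$ and $\partial_{-\alpha}H_{\alpha} = -\partial_\alpha H_\alpha = -\delta_0$ (the fundamental theorem of calculus, exactly as in the proof of \autoref{ableitungsformel coadjoint orbit}), that factor is deleted from the convolution, a sign $-1$ appearing precisely when the deleted weight was $\alpha$ rather than $-\alpha$. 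One then iterates over the $R$ positive roots: the operators $\partial_{-\alpha}$ commute, and for distinct positive roots the weights $\pm\alpha$ are distinct, so the deletions do not interfere (in the degenerate case where both $\alpha$ and $-\alpha$ are isotropy weights, renormalization makes the two renormalized weights coincide, so the root simply occurs with multiplicity and one copy is removed). After the $R$ deletions one is left with $\delta_{\Phi_K(p)}\star H_{\hat\omega_{p,1}}\star\cdots\star H_{\hat\omega_{p,n-R}}$ over the remaining renormalized weights, times an additional sign which, together with $(-1)^{n_p}$, assembles into the stated coefficient $(-1)^{n_p+k_p}$, with $k_p$ recording the negative roots removed. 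Summing over $p$ and invoking the display above proves the formula on $\mathfrak t^*_{>0}$; since $\DuHe^K_M$ is supported in $\mathfrak t^*_+$, restricting the alternating sum to $\mathfrak t^*_+$ gives the statement as displayed.

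For the final assertion, suppose $\Phi_K(p)$ lies in the interior of a Weyl chamber, hence is a regular element of $\mathfrak k^*$. By equivariance of the moment map, the stabilizer $K_p \subseteq K$ of $p$ is contained in the coadjoint stabilizer of $\Phi_K(p)$, which is a maximal torus by regularity; since $p \in M^T$ forces $T \subseteq K_p$, maximality of $T$ gives $K_p = T$, hence $\mathfrak k_p = \mathfrak t$. Therefore the orbit tangent space satisfies $T_p(K\cdot p) \cong \mathfrak k/\mathfrak t = \bigoplus_{\alpha}\mathfrak k_\alpha$ as $T$-representations, so all roots $\pm\alpha$ occur among its weights; as $T_p(K\cdot p)$ is a $T$-subrepresentation of $T_p M$, every $\pm\alpha$ is an isotropy weight at $p$, which is precisely the second hypothesis.

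I expect the main obstacle to be the sign and multiplicity bookkeeping in the second paragraph: verifying that the second hypothesis genuinely supplies, at every fixed point and for every positive root, a \emph{renormalized} isotropy weight equal to $\alpha$ or to $-\alpha$ to differentiate against (handling the degenerate coincidence correctly), and checking that the signs accumulated from $\partial_{-\alpha}H_{\pm\alpha} = \pm\delta_0$ across all $R$ roots and all fixed points collate to exactly $(-1)^{n_p+k_p}$. The remaining ingredients — commuting $\partial$ past $\star$, the identity $\partial_\omega H_\omega = \delta_0$, and the appeal to \autoref{main theorem} — are immediate from what has already been established.
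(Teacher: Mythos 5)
Your overall route is exactly the paper's: the published proof is a one-liner that combines \autoref{main theorem} with \autoref{abelian heckman} via the identity $\partial_{\hat\omega}H_{\pm\hat\omega}=\pm\delta_0$, and your proof of the final remark (equivariance of the moment map forces $K_p=T$, so a copy of $\mathfrak k/\mathfrak t$ sits inside $T_pM$ as a $T$-module and all roots occur as isotropy weights) is also precisely the argument the paper gives. So in structure there is nothing to object to, and expanding the one-liner is the right thing to attempt.

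The gap is exactly where you predicted it: the sign assembly. Your local identities are correct --- $\partial_{-\alpha}H_{-\alpha}=\delta_0$ and $\partial_{-\alpha}H_{\alpha}=-\delta_0$ --- so a factor $-1$ is produced each time a \emph{positive} root is deleted, and the accumulated sign at $p$ is $(-1)^{n_p}\,(-1)^{\#\{\text{positive roots removed}\}}=(-1)^{n_p+R-k_p}$. You then assert without computation that this ``assembles into'' $(-1)^{n_p+k_p}$; the two expressions differ by $(-1)^R$ and agree only when the total number $R$ of positive roots is even (as in the paper's two-qubit illustration, which is why that example cannot detect the issue). Your closing line ``$\partial_{-\alpha}H_{\pm\alpha}=\pm\delta_0$'' also reverses the signs relative to your own earlier, correct statement. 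The rank-one case makes the problem concrete: for $K=\SU(2)$ acting on $M=\PP(\CC^2)=\mathcal O_1$ one must obtain $\DuHe^K_M=\delta_1$ on $\mathfrak t^*_+$; taking $\gamma=-1$, the fixed point mapping to $+1$ has $n_p=0$, the single renormalized isotropy weight is the negative root $-\alpha$, hence $k_p=1$ and the asserted coefficient $(-1)^{n_p+k_p}=-1$ yields $-\delta_1$, whereas the derivative computation $\partial_{-\alpha}\bigl(\delta_1\star H_{-\alpha}-\delta_{-1}\star H_{-\alpha}\bigr)=\delta_1-\delta_{-1}$ yields $+\delta_1$. So the final step as written fails: you must either supply an argument producing the sign $(-1)^{n_p+k_p}$ as stated (your computation does not), or conclude that the exponent should count the \emph{positive} roots removed, i.e.\ be $(-1)^{n_p+R-k_p}$.
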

\begin{proof}
  Since $\partial_{\hat\omega} H_{\pm{\hat\omega}} = \pm\delta_0$ (cf.\ the proof of \eqref{ableitungsformel coadjoint orbit}), the asserted formula follows at once by combining \autoref{main theorem} with \autoref{abelian heckman}.

  Only the final remark needs elaboration: As observed by Guillemin and Prato, the assumption that $\Phi_K(p) \in W \cdot \mathfrak t^*_{>0}$ implies that the $K$-stabilizer at each fixed point $p$ is precisely $T$, so that the infinitesimal action of $K$ generates a copy of $\mathfrak k / \mathfrak t$ inside the tangent space $T_p M$. Therefore, at any fixed point $p$, each positive root $\alpha > 0$ or its negative occurs as an isotropy weight.
  % in fact, never both, but twice is possible
\end{proof}

This gives rise to an obvious non-Abelian variant of \autoref{abelian heckman algorithm}:

\begin{alg}
  \label{non-abelian heckman algorithm}
  Under the assumptions and using the notation of \autoref{non-abelian heckman}, the following algorithm computes the non-Abelian Duister\-maat--Heck\-man measure:
  \begin{enumerate}
    \item Compute the $|M^T|$ iterated convolutions
      $\delta_{\Phi_K(p)} \star H_{\hat\omega_{p,1}} \star \ldots \star H_{\hat\omega_{p,n-R}}$
      using \autoref{boysal vergne algorithm} (see \autoref{subtle remark}).
    \item Form their alternating sum according to \autoref{non-abelian heckman}.
  \end{enumerate}
  By passing to its support, we can also determine the non-Abelian moment polytope (cf.\ \autoref{finite union of regular chambers}).
\end{alg}

\begin{rem}
  \label{subtle remark}
  There is a slight subtlety involved with the formulation of step (1) of \autoref{non-abelian heckman algorithm}: In case the renormalized isotropy weights $\hat\omega_{p,1}, \ldots, \hat\omega_{p,n-R}$ in some $T_p M$ do not span all of $\mathfrak t^*$, the corresponding iterated convolution is of course not absolutely continuous with respect to $d\lambda$, and \autoref{boysal vergne algorithm} cannot be applied directly (see, e.g., the first proof of \autoref{two qubits non-abelian}). Instead, we need to replace $\mathfrak t^*$ by the span of the $\hat\omega_{p,k}$ and apply \autoref{boysal vergne algorithm} accordingly.
\end{rem}

%\begin{rem}
%  In case of the pure-state quantum marginal problem, \autoref{non-abelian heckman algorithm} is only applicable if all subsystems are qubits (i.e., quantum systems described by two-dimensional Hilbert spaces). Otherwise there exist weight vectors which get mapped to the wall of the Weyl chamber, and one has to resort to \autoref{abelian heckman algorithm}.
%For distinguishable particles, there will be torus fixed points that get mapped to Weyl chamber boundaries as soon as one of the subsystems has dimension larger than two (in fact, all of them will map to the Weyl chamber boundary, since the reduced states are again pure, hence of degenerate spectrum except in the case of qubits). The same is true for bosons or fermions (except for an even number of bosonic qubits).
%\end{rem}

In \autoref{examples} we will use both the Abelian and the non-Abelian version of the Heckman algorithm to compute the eigenvalue distribution of the reduced density matrices of a random pure state of two qubits (\autoref{two qubits non-abelian}) and of $N$ bosonic qubits (\autoref{sym N qubits abelian}), as well as of random mixed states of two qubits (\autoref{bravyi non-abelian}).

\subsection{Single-Summand Algorithm for Projective Space}
\label{projective space}

We will now derive explicit formulas for the Duister\-maat--Heck\-man measure associated with a projective space, $M = \PP(V)$, where $V$ is a $(n+1)$-dimensional unitary representation of $K$, and where $M$ is equipped with the Fubini--Study symplectic form $\omega_\text{FS}$, normalized in such a way that its Liouville measure is equal to $\frac 1 {n!}$. The $K$-action is Hamiltonian, and a canonical moment map is given by \cite{kirwan84}
\begin{equation}
  \label{projective space non-abelian moment map}
  \Phi_K \colon
  \PP(V) \rightarrow \mathfrak k^*, \quad
  [v] \mapsto \left( X \mapsto \frac 1 i  \frac {\langle v, X v \rangle} {\langle v, v \rangle} \right).
\end{equation}

We start by decomposing the representation $V$ into one-dimensional weight spaces, $V = \bigoplus_{k=0}^n \CC v_k$, where $v_k$ is a weight vector of weight $\omega_k$ (repetitions allowed).
In the corresponding homogeneous coordinates, the Abelian moment map has the following simple form,
\begin{equation}
  \label{projective space abelian moment map}
  \Phi_T \colon \PP(V) \rightarrow \mathfrak t^*, \quad
  [z_0 : \ldots : z_n] \mapsto \frac {\sum_{k=0}^n |z_k|^2 \omega_k} {\sum_{k=0}^n |z_k|^2},
\end{equation}
and it is straightforward to see that the Abelian Duister\-maat--Heck\-man measure can be written as the push-forward of Lebesgue measure on the standard simplex along a linear map:

\begin{prp}
  \label{projective space abelian via standard simplex}
  We have
  \begin{equation*}
    \DuHe^T_{\PP(V)} = P_*(\restrict{dp}{\Delta_n}).
  \end{equation*}
  Here, $P$ is the linear map $\RR^{n+1} \rightarrow \mathfrak t^*, (t_k) \mapsto \sum_k t_k \omega_k$, and $dp$ is Lebesgue measure on the affine hyperplane $\Hbf := \{ (t_k) : \sum_k t_k = 1 \} \subseteq \RR^{n+1}$, normalized in such a way that the standard simplex $\Delta_n := %\Hbf \cap \RR^{n+1}_{\geq 0} =
  \{ (p_k) : p_k \geq 0, \sum_{k=0}^n p_k = 1 \}$ has measure $\frac 1 {n!}$.
\end{prp}
\begin{proof}
  The Fubini-Study measure is the push-forward of the usual round measure on the unit sphere $S^{2n+1} \cong \{ (z_0,\ldots,z_n) : \abs{z_0}^2 + \ldots + \abs{z_n}^2 = 1 \} \subseteq V$ along the quotient map $(z_0,\ldots,z_n) \mapsto [z_0:\ldots:z_n]$, normalized to total volume $\frac 1 {n!}$. On the other hand, the round measure on the unit sphere induces Lebesgue measure on the standard simplex when pushed forward along the map $(z_0,\ldots,z_n) \mapsto (|z_0|^2,\ldots,|z_n|^2)$. % since $dz = dx dy = 1/2 d(r^2) d\theta$ in polar coordinates.
  Thus the claim follows from comparing \eqref{projective space abelian moment map} with $P \colon (t_k) \mapsto \sum_{k=0}^n t_k \omega_k$.
\end{proof}

\begin{rem}
  % \autoref{projective space abelian via standard simplex} can also be established by applying the Heckman formula as described in \autoref{heckman for coadjoint orbit reductions}.
  In other words, \autoref{projective space abelian via standard simplex} is proved by factoring the action of $T$ over the action of the maximal torus of $\SU(V)$, for which $\PP(V)$ is a symplectic toric manifold.
\end{rem}

Denote by $dp/d\lambda$ a differential form corresponding to Lebesgue measure on the affine subspaces $P^{-1}(\lambda) \cap \Hbf$, normalized in such a way that
\begin{equation}
  \label{quotient form polytope}
  dp = dp/d\lambda \wedge P^*(d\lambda)
\end{equation}
when restricted to the affine hyperplane $\Hbf$.

\begin{prp}
  \label{density in polytope picture}
  The density function $f \colon \mathfrak t^* \rightarrow [0,\infty)$ of the Abelian Duister\-maat--Heck\-man measure is given by
  \begin{equation*}
    f(\lambda) = \vol \, \{ p_k \geq 0 : \sum_{k=0}^n p_k \omega_k = \lambda, \sum_{k=0}^n p_k = 1 \},
  \end{equation*}
  where the volume is measured with respect to the measure induced by $dp/d\lambda$ on $P^{-1}(\lambda) \cap \Hbf$.
\end{prp}
\begin{proof}
  For all test functions $g \in C_b(\mathfrak t^*)$, we have
  \begin{equation*}
    \langle \DuHe^T_{\PP(V)}, g \rangle =
    \int_{\Delta_n} dp \, g(P(p)) =
    \int_{\mathfrak t^*} d\lambda \left( \int_{P^{-1}(\lambda) \cap \Delta_n} dp/d\lambda \right) g(\lambda),
  \end{equation*}
  by using \eqref{quotient form polytope} and Fubini's theorem for the fibration $\restrict P {\Hbf}$ \cite[pp.\ 307]{guilleminsternberg77}.
\end{proof}

That is, the Abelian Duister\-maat--Heck\-man density measures the volume of a family of convex polytopes parametrized by $\mathfrak t^*$. This is also true for the density of the iterated convolutions studied in \autoref{heckman} (see \eqref{single summand density} below).
There are exact numerical schemes that can be used to compute the polynomial density functions on each regular chamber which have already been implemented in software packages, e.g., the parametric extension of Barvinok's algorithm \cite{barvinok93} described in \cite{verdoolaegeseghirbeylsetal07,verdoolaegebruynooghe08}. We will not pursue this route any further. However, in \autoref{multiplicities for projective space} we will show that its ``quantized'' counterpart gives rise to an efficient way of computing the corresponding representation-theoretic quantities (in particular, the Kronecker coefficients).

\medskip

In the following, we will instead describe a combinatorial algorithm based on the same principles as our Heckman algorithm.
Before doing so, let us determine explicitly the regular chambers for the Abelian moment map, i.e., the connected components of the set of regular values of $\Phi_T$, each on which the measure is given by a polynomial. For this, we define the \emph{support} of a point $p = [v] \in \PP(V)$ as the set of weights which contribute to the weight-space decomposition of $v$,
\begin{equation*}
  \supp p := \{ \omega_k : z_k \neq 0, p = [z_0 : \ldots : z_n] \}.
\end{equation*}
The significance of this definition is that the support of a point already fully determines whether it is regular or critical:

\begin{lem}
  \label{regular points}
  Let $p \in P(V)$.
  Then $p$ is a regular point of the Abelian moment map if and only if
  \begin{equation*}
    \linspan \{ \omega - \omega' : \omega, \omega' \in \supp p \} = \mathfrak t^*.
  \end{equation*}
\end{lem}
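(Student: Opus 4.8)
The plan is to reduce the assertion to a computation of the infinitesimal stabilizer of $p$ for the $T$-action on $\PP(V)$, using the standard dictionary between regular points of a torus moment map and local freeness of the action.

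First I would recall the elementary fact from the theory of Hamiltonian torus actions that, at any $p \in M$, the kernel of $(d\Phi_T)_p$ is the $\omega_M$-orthogonal complement of the tangent space $\mathfrak t \cdot p := \{ X_M(p) : X \in \mathfrak t \}$ to the orbit $T \cdot p$, where $X_M$ is the fundamental vector field generated by $X$. This is immediate from the moment map identity $d\langle \Phi_T, X \rangle = \iota_{X_M} \omega_M$ for $X \in \mathfrak t$: a tangent vector $v$ lies in $\ker(d\Phi_T)_p$ exactly when $\omega_M(X_M(p), v) = 0$ for all $X$. Since $\omega_M$ is non-degenerate, $(d\Phi_T)_p$ has rank $\dim(\mathfrak t \cdot p) = r - \dim \mathfrak t_p$, where $\mathfrak t_p = \Lie T_p$ is the infinitesimal stabilizer. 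Hence $p$ is a regular point, i.e.\ $(d\Phi_T)_p$ surjects onto the $r$-dimensional space $\mathfrak t^*$, if and only if $\mathfrak t_p = 0$.

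Second I would compute $\mathfrak t_p$ in homogeneous coordinates. Writing $p = [v]$ with $v = \sum_{\omega_k \in \supp p} z_k v_k$ and all $z_k \neq 0$ in the weight-space decomposition $V = \bigoplus_k \CC v_k$, the one-parameter subgroup $\exp(tX)$ with $X \in \mathfrak t$ scales $v_k$ by the phase $e^{2\pi i t \langle \omega_k, X \rangle}$, using the identification of weights with characters fixed in \autoref{notation}. Therefore $\exp(tX) \cdot [v] = [v]$ for all $t$ precisely when these phases coincide along $\supp p$, i.e.\ when $\langle \omega - \omega', X \rangle = 0$ for all $\omega, \omega' \in \supp p$. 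Consequently $\mathfrak t_p$ is the annihilator in $\mathfrak t$ of $\linspan\{ \omega - \omega' : \omega, \omega' \in \supp p \} \subseteq \mathfrak t^*$, so $\mathfrak t_p = 0$ if and only if this span equals all of $\mathfrak t^*$. Combined with the first step, this is exactly the claim.

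I do not expect a serious obstacle here. The only points requiring care are keeping the sign and $2\pi i$ normalization of the infinitesimal $T$-action on a weight vector consistent with \autoref{notation}, and the standard linear-algebra identity $\dim(\mathfrak t \cdot p) = r - \dim \mathfrak t_p$ (which relies on $T \cdot p$ being an immersed submanifold). One could instead differentiate the explicit formula \eqref{projective space abelian moment map} and compute the rank of the resulting linear map directly, but the stabilizer argument is cleaner and makes the role of $\supp p$ transparent.
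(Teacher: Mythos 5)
Your proof is correct and follows essentially the same route as the paper: reduce regularity of $p$ to triviality of the infinitesimal stabilizer $\mathfrak t_p$ (which the paper simply cites from Guillemin--Sternberg, whereas you spell out the symplectic-orthogonality argument), and then identify $\mathfrak t_p$ as the annihilator of $\linspan\{\omega-\omega' : \omega,\omega'\in\supp p\}$ by the phase computation on weight vectors. No gaps.
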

\begin{proof}
  It follows readily from the definition of the moment map that a point $p$ is regular if and only if $\mathfrak t_p$, the Lie algebra of its stabilizer, is trivial \cite[Lemma 2.1]{guilleminsternberg82}. But $\mathfrak t_p$ is already determined by the support of $p$:
  \begin{equation*}
    \mathfrak t_p = \{ X \in \mathfrak t: \omega(X) = \omega'(X) \quad \forall \omega, \omega' \in \supp p \}
  \end{equation*}
  % Proof: The range of the differential of a moment map at a point $p$ is given by
  % \begin{equation*}
  %   \restrict{\ran d\Phi_T}{p} = \mathfrak t_p^\perp \subseteq \mathfrak t^*,
  % \end{equation*}
  % where $\mathfrak t_p$ is the Lie algebra of the stabilizer of $p$ \cite[Lemma 2.1]{guilleminsternberg82}. Consequently, $p$ is a regular point if and only if $\mathfrak t_p = 0$ (i.e., if and only if the stabilizer group is discrete).
  This is the annihilator of the linear span in the statement of the lemma.
\end{proof}

We arrive at the following characterization of the set of critical values of the Abelian moment map:

\begin{prp}
  \label{critical values}
  The set of critical values of $\Phi_T$ is the union of all convex hulls of subsets containing (at most) $r$ weights,
  \begin{equation*}
    \bigcup_{\#I = r} \conv \{ \omega_k : k \in I \} =
    \bigcup_{\#I \leq r} \conv \{ \omega_k : k \in I \}.
  \end{equation*}
\end{prp}
\begin{proof}
  It is clear from \eqref{projective space abelian moment map} and \autoref{regular points} that the convex hull of any subset of weights of cardinality at most $r$ consists of critical values. The converse follows from Carath\'{e}odory's theorem.
\end{proof}

From this description we can easily determine the regular chambers and critical walls. Observe again that there is a single unbounded regular chamber.

\medskip

We will now use the result of Boysal and Vergne described in \autoref{heckman} to derive intrinsic formulas for the jumps of the Duister\-maat--Heck\-man density when crossing a critical wall. Recall that the measures they consider are push-forwards of Lebesgue measure on the convex cone $\RR^{n+1}_{\geq 0}$ rather than of Lebesgue measure on the standard simplex $\Delta_n$, which is of course the intersection of $\RR^{n+1}_{\geq 0}$ with the affine hyperplane $\Hbf = \{ (t_k) : \sum_{k=0}^n t_k = 1 \}$.
It is however straightforward to translate between both pictures:
In order to avoid confusion, we shall use the same convention as in \autoref{heckman} that hatted quantities correspond to the Boysal--Vergne picture.
Let us consider the ``extended'' weights $\hat\omega_k := (\omega_k,1) \in \mathfrak t^* \oplus \RR$ ($k=0,\ldots,n$) together with the corresponding linear map
\begin{equation*}
  \hat P \colon \RR^{n+1} \rightarrow \mathfrak t^* \oplus \RR, \quad (t_k) \mapsto \sum_{k=0}^n t_k \hat\omega_k = (P(t_0, \ldots, t_n), \sum_{k=0}^n t_k).
\end{equation*}
Denote by $dt$ standard Lebesgue measure on $\RR^{n+1}$ and equip $\mathfrak t^* \oplus \RR$ with the measure $d\hat\lambda = d\lambda ds$, where $ds$ is standard Lebesgue measure on $\RR$. Choose a differential form $dt/d\hat\lambda$ inducing Lebesgue measure on the fibers of $\hat P$, normalized in such a way that
\begin{equation}
  \label{quotient form cone}
  dt = dt/d\hat\lambda \wedge {\hat P}^*(d\hat\lambda) = dt/d\hat\lambda \wedge P^*(d\lambda) \wedge (dt_0 + \ldots + dt_N).
\end{equation}
Then one can establish just as in the proof of \autoref{density in polytope picture} the following formula for the density function of the push-forward of Lebesgue measure on $\RR^{n+1}_{\geq 0}$ along $\hat P$ with respect to $d\hat\lambda = d\lambda ds$,
\begin{equation}
  \label{single summand density}
  \hat f(\lambda, s) = \vol \, \{ t_k \geq 0 : \sum_{k=0}^n t_k \omega_k = \lambda, \sum_{k=0}^n t_k = s \},
\end{equation}
where the volume is measured with respect to $dt/d\hat\lambda$.
But comparing \eqref{quotient form polytope} and \eqref{quotient form cone} and noting that $dt = dp \wedge (dt_0 + \ldots + dt_N)$ on $\Hbf$%
% To verify this identity, it clearly suffices to compare the normalization, e.g., for the parallelotope spanned by the vectors
%   (-1,1,0,...,0),
%   ...
%   (-1,0,...,0,1),
%   (1,...,1).
% The first n vectors span the standard simplex, hence
%   |dp \wedge (dt_0 + \ldots + dt_n)|(...) = 1 \cdot (1 + \ldots + 1) = n+1,
% since dp(standard simplex) = 1/n!, therefore dp(parallelotope) = n!.
% On the other hand,
% \begin{equation*}
%     |dt(\ldots)|
%   = |\det| \begin{pmatrix}
%     -1 & -1     & \ldots & -1 & 1\\
%      1 &  0     &        &  0 & 1\\
%      0 & \ddots &        &  \vdots & \vdots\\
%      \vdots &   &        &  0 & 1\\
%      0 &  0     & \ldots &  1 & 1
%     \end{pmatrix}
%   = |\det|\begin{pmatrix}
%     -1 & -1     & \ldots & -1 & n+1\\
%      1 &  0     &        &  0 & 0\\
%      0 & \ddots &        &  \vdots & \vdots\\
%      \vdots &   &        &  0 & 0\\
%      0 &  0     & \ldots &  1 & 0
%     \end{pmatrix}
%   = n+1.
% \end{equation*}
, we see that in fact $dt/d\hat\lambda$ and $dp/d\lambda$ induce the same measure on the fibers $P^{-1}(\lambda) = \hat P^{-1}(\lambda, 1)$, so that
\begin{equation}
  \label{density transfer eqn}
  \DuHe^T_{\PP(V)} =
  f(\lambda) \, d\lambda =
  \hat f(\lambda, 1) \, d\lambda.
\end{equation}
This shows that we can work equivalently in the convex cone picture of Boysal and Vergne.%
\footnote{The push-forward of Lebesgue measure on $\RR^{n+1}_{\geq 0}$ along $\hat P$ can also be understood as the Duister\-maat--Heck\-man measure associated with the Hamiltonian $T \times \U(1)$-action on the complex vector space $V$, where $\U(1)$ acts by scalar multiplication (cf.\ \autoref{non-compact footnote}).}
  % The regular chambers and critical walls for this moment map are the chambers and walls in the language of Boysal and Vergne (\autoref{heckman}).
  % This is the relevant analogue of \autoref{regular points}:
  %     v is a critical point
  % <=> exists one-parameter family tX stabilizing v = (z_0,...,z_n)
  % <=> exists X such that omega(X) = 0 for all omega in supp(v)
  % <=> exists X orthogonal to span(supp(v))
  % <=> span(supp(v)) != \mathfrak t^*
  % <=> span(supp(v)) is contained in a Boysal--Vergne wall

\medskip

We shall now describe the jump formula. Let $W$ be a critical wall separating regular chambers $\Delta_\pm \subseteq \mathfrak t^*$, and choose a normal vector $\xi \in \mathfrak t^*$ pointing from $\Delta_-$ to $\Delta_+$. Order the weights such that precisely $\omega_0, \ldots, \omega_{m-1}$ lie on $W$. Denote by $dw$ Lebesgue measure on the hyperplanes parallel to $W$, normalized in such a way that
\begin{equation}
  \label{wall quotient measure polytope}
  d\lambda = dw \wedge d\xi,
\end{equation}
where $d\xi$ is the pullback of the standard volume form of $\RR$ along the coordinate function $\langle -, \xi\rangle$. Denote by $f_\pm$ the polynomials describing the density function $f$ on the regular chambers $\Delta_\pm$. Finally, consider the Duister\-maat--Heck\-man measure for the action of $T$ on the projective space over $V_W = \bigoplus_{k=0}^{m-1} \CC v_k$, the direct sum of the weight spaces corresponding to the weights which lie on the hyperplane through $W$. Its density with respect to $dw$ is given by a single polynomial on the critical wall $W$, since $W$ is always contained in the closure of a regular chamber for $\PP(V_W)$. Choose any polynomial function $f_W$ extending it to all of $\mathfrak t^*$.

\begin{prp}
  \label{wall jump abelian}
  The jump of the Abelian Duister\-maat--Heck\-man density across the critical wall is given by
  \begin{equation*}
    f_+(\lambda) -f_-(\lambda)
  = \restrict{\Res}{z=0} \left(
      \hat f_{\hat W}(\partial_x, \partial_y)
      \frac
        {e^{z \langle \lambda - \omega_0, \xi \rangle + \langle \lambda, x \rangle + y}}
        {\prod_{k=m}^n z \langle \omega_k - \omega_0, \xi \rangle + \langle \omega_k, x \rangle + y}
    \right)_{x=0, y=0}.
  \end{equation*}
  Here, $\hat f_{\hat W}(\lambda,s) = s^{m-r} f_W(\frac \lambda s)$ is the homogeneous ``extension'' of $f_W$ to $\mathfrak t^* \oplus \RR$.
\end{prp}
\begin{proof}
  The convex cones $\hat\Delta_\pm$ through $\Delta_\pm \times \{1\}$ are chambers in the sense of Boysal and Vergne. They are separated by a wall $\hat W$, namely the convex cone through $W \times \{1\}$. Note that $\hat\xi = (\xi,-\braket{\omega_0,\xi})$ is a normal vector to $\hat W$. Denote by $\hat f_\pm$ the homogeneous polynomials describing the density function of the push-forward of Lebesgue measure on $\RR^{n+1}_{\geq 0}$ along $\hat P$. It is clear that $d\hat w = ds \wedge dw$ induces Lebesgue measure on $\hat W$ and that it is normalized in such a way that $d\hat\lambda = d\hat w \wedge d\hat\xi$.
  By \eqref{density transfer eqn} and the jump formula \eqref{boysal vergne} of Boysal and Vergne, we have
  \begin{align*}
    f_+(\lambda) - f_-(\lambda)
  = \hat f_+(\lambda, 1) - \hat f_-(\lambda, 1)
  = \restrict{\Res}{z=0} \left(
      \hat f_{\hat W}(\partial_{\hat x})
      \frac
        {e^{z \langle (\lambda, 1), \hat x + z \hat \xi \rangle}}
        {\prod_{k=m}^n \langle \hat\omega_k, \hat x + z \hat\xi \rangle}
    \right)_{\hat x=0}.
  \end{align*}
  The polynomial $\hat f_{\hat W}$ as defined above agrees with its original definition in \autoref{heckman}, since it is a homogeneous polynomial and can thus be reconstructed from $f_W$, which by \eqref{density transfer eqn} is its restriction to the slice $\mathfrak t^* \times \{1\}$, by the formula given above.
  Writing $\hat x = (x,y) \in \mathfrak t^* \oplus \RR$ and expanding the hatted quantities, we arrive at the assertion.
\end{proof}

As in \autoref{heckman}, the case where only a minimal number of weights lie on the affine hyperplane through $W$ is particularly simple to evaluate:

\begin{lem}
  \label{minimal wall jump constant polytope}
  Suppose that precisely $r$ weights $\omega_0, \ldots, \omega_{r-1}$ lie on the affine hyperplane through $W$. Then,
  \begin{equation*}
    \hat f_{\hat W}^{-1} \equiv f_W^{-1} \equiv |d\lambda\left(\omega_1-\omega_0, \ldots, \omega_{r-1}-\omega_0, \frac{\xi}{\norm{\xi}^2}\right)|.
  \end{equation*}
\end{lem}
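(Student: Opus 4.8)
The plan is to follow, in the polytope picture, the template of the proof of \autoref{minimal wall jump cones}. When exactly $r$ weights $\omega_0, \ldots, \omega_{r-1}$ lie on the affine hyperplane through $W$, these weights are affinely independent (by \autoref{singular values}, a singular wall of maximal dimension lies on the convex hull of $r$ weights in general position). Hence, after replacing $\mathfrak t^*$ by the span of the differences of the weights of $V_W$ as in \autoref{notation}, the map $P_W \colon (t_k) \mapsto \sum_{k=0}^{r-1} t_k \omega_k$ restricts to an affine isomorphism of $\Hbf_W := \{(t_k) : \sum_k t_k = 1\} \subseteq \RR^r$ onto the hyperplane spanned by $W$. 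By \autoref{projective space abelian via standard simplex}, $\DuHe^T_{\PP(V_W)} = (P_W)_*(\restrict{dp_W}{\Delta_{r-1}})$, so both $(P_W)_* dp_W$ and $dw$ are translation-invariant measures on the hyperplane through $W$ and $f_W$ is simply the constant of proportionality between them.

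To identify this constant I would, exactly as in the proof of \autoref{minimal wall jump cones}, compare the masses the two measures assign to the parallelepiped $\Pi$ spanned by the basis $\omega_1 - \omega_0, \ldots, \omega_{r-1} - \omega_0$ of the direction of $W$. From $P_W(t) = \omega_0 + \sum_{k=1}^{r-1} t_k(\omega_k - \omega_0)$ on $\Hbf_W$, together with the fact that $dp_W$ is just Euclidean measure in the affine coordinates $(t_1, \ldots, t_{r-1})$ (its normalization giving $\Delta_{r-1}$ mass $1/(r-1)!$), the unit cube in these coordinates has $dp_W$-mass one and maps onto a translate of $\Pi$; hence $(P_W)_* dp_W(\Pi) = 1$. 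On the other hand, since $\xi$ is normal to $W$ we have $\langle \omega_k - \omega_0, \xi \rangle = 0$, so \eqref{wall quotient measure polytope} and $d\xi = \langle -, \xi \rangle$ give
\[
  d\lambda\bigl(\omega_1 - \omega_0, \ldots, \omega_{r-1} - \omega_0, v\bigr) = dw\bigl(\omega_1 - \omega_0, \ldots, \omega_{r-1} - \omega_0\bigr)\,\langle v, \xi \rangle
\]
for every $v$; with $v = \xi/\norm{\xi}^2$ the last factor is $1$, so $dw(\Pi) = \abs{d\lambda(\omega_1 - \omega_0, \ldots, \omega_{r-1} - \omega_0, \xi/\norm{\xi}^2)}$. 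Comparing the two masses yields $f_W = dw(\Pi)^{-1}$, which is the asserted value of $f_W^{-1}$. Finally, with $m = r$ the homogeneous extension of \autoref{wall jump abelian} is $\hat f_{\hat W}(\lambda, s) = s^{0} f_W(\lambda/s) = f_W$, whence $\hat f_{\hat W}^{-1} = f_W^{-1}$ as well.

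The argument is entirely routine; the only care required is to track the three normalizations consistently --- the simplex normalization of $dp_W$, the weight-lattice normalization of $d\lambda$, and the normalization of $dw$ forced by $d\lambda = dw \wedge d\xi$ --- and to note that $f_W$ is indeed constant along all of $W$, which is precisely \autoref{singular values} read in the lower-dimensional ambient space. A formally equivalent alternative would be to quote \autoref{minimal wall jump cones} directly in the cone picture of Boysal and Vergne via \eqref{density transfer eqn}, with the extended weights $(\omega_0, 1), \ldots, (\omega_{r-1}, 1)$ and the normal $\hat\xi = (\xi, -\langle \omega_0, \xi \rangle)$ to the cone through $W \times \{1\}$; this, however, requires reducing an $(r+1)$-dimensional determinant on $\mathfrak t^* \oplus \RR$ to the stated $r$-dimensional one --- a short computation in which the factor $\norm{\hat\xi}^2$ cancels against $\norm{\xi}^2$ --- and so is no shorter than the direct approach.
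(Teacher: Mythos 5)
Your proof is correct and follows essentially the same route as the paper's: identify $f_W$ as the constant of proportionality between $(P_W)_*dp_W$ and $dw$ on the hyperplane through $W$, and evaluate both measures on the parallelotope spanned by $\omega_1-\omega_0,\ldots,\omega_{r-1}-\omega_0$ using the normalizations of $dp_W$ and $d\lambda = dw\wedge d\xi$. The only additions are explicit bookkeeping (the affine-coordinate description of $dp_W$, the remark that $\hat f_{\hat W}=f_W$ when $m=r$) that the paper leaves implicit.
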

\begin{proof}
  We argue as in the proof of \autoref{minimal wall jump cones}: In view of \autoref{projective space abelian via standard simplex} and the minimality assumption, the map $(q_k) \mapsto \sum_{k=0}^{r-1} q_k \omega_k$ along which we push forward is an isomorphism, and $f_W$ is equal to the constant of proportionality between the push-forward of Lebesgue measure on $\Hbf$ (normalized in such a way that the standard simplex has measure $\tfrac 1 {d!}$) and the measure $dw$. We can compute this constant by comparing the volume of the parallelotope spanned by the $(\omega_k)$: For the former measure this constant is one (by its very normalization), while for the latter it follows from \eqref{wall quotient measure polytope} that
  \begin{equation*}
    d\lambda(\omega_1-\omega_0, \ldots, \omega_{r-1}-\omega_0, \xi) = d\omega(\omega_1-\omega_0, \ldots, \omega_{r-1}-\omega_0) \,\norm{\xi}^2. \qedhere
  \end{equation*}
\end{proof}

These results give rise to the following inductive algorithm for computing the Abelian and non-Abelian Duister\-maat--Heck\-man measure of a projective space. We will call it the \emph{single-summand algorithm}, since in view of \eqref{density transfer eqn} it amounts to computing a push-forward measure that is equivalent to a single summand of the Abelian Heckman formula (cf.\ \autoref{abelian heckman}).

\begin{alg}
  \label{projective space algorithm}
  The following algorithm computes the piecewise polynomial density function of the Abelian Duister\-maat--Heck\-man measure of the projective space $\PP(V)$:
  \begin{enumerate}
    \item Start with the unbounded regular chamber, where $f \equiv 0$.
    \item Iteratively jump over critical walls $W$ separating the current regular chamber with an adjacent regular chamber:
    \begin{enumerate}
      \item Denote by $\omega_0, \ldots, \omega_{m-1}$ the weights which lie on the hyperplane through $W$.
      \item If the wall is minimal ($m=r$), compute $f_W$ via \autoref{minimal wall jump constant polytope}.
      \item Otherwise, recursively apply \autoref{projective space algorithm} to compute the piecewise polynomial density of the Abelian Duister\-maat--Heck\-man measure of $\PP(V_W)$, where $V_W = \bigoplus_{k=0}^{m-1} \CC v_k$ is the direct sum of the weight spaces for the weights in (a).%
\footnote{This density of course only depends on the hyperplane through $W$, and can therefore be re-used for all other critical walls that lie on the same hyperplane.}
      On $W$ itself, it is given by a single polynomial. Choose any polynomial extension $f_W$ to all of $\mathfrak t^*$.
      \item Compute the density on the adjacent chamber using \autoref{wall jump abelian}.
    \end{enumerate}
  \end{enumerate}
  The non-Abelian Duister\-maat--Heck\-man measure can then be computed via \autoref{main theorem}.
  By passing to its support, we can also determine the non-Abelian moment polytope (cf.\ \autoref{finite union of regular chambers}).
\end{alg}

If there are degenerate weight spaces, not every zero-dimensional wall will be minimal.
% Similarly as for \autoref{boysal vergne algorithm},
\autoref{projective space algorithm} can be straightforwardly adapted by including $r = 0$ as an additional base case (here the moment polytope is a single point and the density a scalar that can be determined from our normalization conventions). % for the Liouville measure and $d\omega$).

\begin{rem}
  \label{complete solution}
  In view of \eqref{eigenvalue distribution dist} and \eqref{eigenvalue distribution indist} and by passing to the purified double (\autoref{purification}), \autoref{projective space algorithm} solves the problem of computing the eigenvalue distribution of reduced density matrices in complete generality.
\end{rem}

We conclude this section by explicitly stating the Abelian and non-Abelian jump formula for the case where only a minimal number of weights lie on the affine hyperplane through the wall. They will be used later for computing examples.

\begin{cor}
  \label{minimal wall jump abelian}
  Suppose that precisely $r$ weights $\omega_0, \ldots, \omega_{r-1}$ lie on the affine hyperplane through the critical wall $W$. Then the jump of the Abelian Duister\-maat--Heck\-man density across the wall is given by
  \begin{equation*}
    f_+(\lambda) - f_-(\lambda) =
    f_W
    \left( \prod_{k=r}^n \langle \omega_k - \omega_0, \xi \rangle \right)^{-1}
    \frac{\langle \lambda - \omega_0, \xi \rangle^{n-r}}{(n-r)!},
  \end{equation*}
  where $f_W$ is the constant from \autoref{minimal wall jump constant polytope}.
\end{cor}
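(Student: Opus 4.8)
The plan is to specialize the general jump formula of \autoref{wall jump abelian} to the minimal case $m = r$. The key simplification is that, by \autoref{minimal wall jump constant polytope}, the wall polynomial $f_W$ is then a constant, so its homogeneous extension $\hat f_{\hat W}(\lambda, s) = s^{m-r} f_W(\lambda/s) = f_W$ is constant as well. In particular the operator $\hat f_{\hat W}(\partial_x, \partial_y)$ appearing in \autoref{wall jump abelian} involves no differentiation in $x$ or $y$; it is merely multiplication by the scalar $f_W$, so one may set $x = 0$ and $y = 0$ directly inside the argument of the residue.

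Once this is done, the formula of \autoref{wall jump abelian} collapses to $f_W$ times $\restrict{\Res}{z=0}$ of $e^{z \langle \lambda - \omega_0, \xi \rangle}$ divided by $\prod_{k=r}^n z\langle \omega_k - \omega_0, \xi \rangle$. There are $n - r + 1$ factors in the denominator, and the scalars $\langle \omega_k - \omega_0, \xi \rangle$ are nonzero precisely because the weights $\omega_k$ with $k \geq r$ do not lie on the hyperplane through $W$. Pulling these scalars out, one is left with the residue at $z=0$ of $e^{az}/z^{n-r+1}$, where $a = \langle \lambda - \omega_0, \xi \rangle$, which by the Taylor expansion of the exponential equals the coefficient of $z^{n-r}$, namely $a^{n-r}/(n-r)!$. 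Assembling the three factors $f_W$, $\prod_{k=r}^n \langle \omega_k - \omega_0, \xi \rangle^{-1}$, and $\langle \lambda - \omega_0, \xi \rangle^{n-r}/(n-r)!$ gives exactly the claimed expression.

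I do not anticipate any genuine difficulty; the computation is entirely mechanical. The only point requiring a moment's care is the order in which the operations in \autoref{wall jump abelian} are to be read (first the differential operator, then the specialization $x = y = 0$, with the residue outermost), but this becomes moot here since the differential operator is just multiplication by a constant in the minimal case.
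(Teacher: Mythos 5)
Your proposal is correct and matches the paper's own (one-line) proof exactly: specialize \autoref{wall jump abelian} to $m=r$, note via \autoref{minimal wall jump constant polytope} that $\hat f_{\hat W}$ is the constant $f_W$, set $x=y=0$, and evaluate the residue of $e^{az}/z^{n-r+1}$ as $a^{n-r}/(n-r)!$. Nothing further is needed.
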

\begin{proof}
  This follows immediately from \autoref{wall jump abelian} by pulling out the constant $f_W$, setting $x = y = 0$ and evaluating the residue at $z=0$.
\end{proof}

The non-Abelian formula follows directly by applying \autoref{main theorem}:

\begin{cor}
  \label{minimal wall jump non-abelian}
  Suppose that precisely $r$ weights $\omega_0, \ldots, \omega_{r-1}$ lie on the affine hyperplane through the critical wall $W$, and that $n-r \geq R$, so that the non-Abelian Duister\-maat--Heck\-man measure of $\PP(V)$ is absolutely continuous in the vicinity of $W$.
  Denote by $f^K_\pm$ the polynomials describing its density on the regular chambers. Then the jump across the wall is given by
  \begin{align*}
    f^K_+(\lambda) - f^K_-(\lambda)
    = f_W
    \left( \prod_{k=r}^n \langle \omega_k - \omega_0, \xi \rangle \right)^{-1}
    \left( \prod_{\alpha > 0}  - \langle \alpha, \xi \rangle \right)
    \frac{\langle \lambda - \omega_0, \xi \rangle^{n-r-R}}{(n-r-R)!},
  \end{align*}
  where $f_W$ is the constant from \autoref{minimal wall jump constant polytope}.
\end{cor}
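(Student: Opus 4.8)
The plan is to deduce the non-Abelian jump directly from the Abelian jump formula of \autoref{minimal wall jump abelian} by applying the derivative principle. First I would invoke \autoref{main theorem} together with the discussion in \autoref{finite union of regular chambers}: on each regular chamber for the Abelian moment map the non-Abelian Duistermaat--Heckman density is obtained by applying the constant-coefficient differential operator $\prod_{\alpha>0}\partial_{-\alpha}$ to the polynomial Abelian density on that chamber. The hypothesis $n-r\geq R$ ensures (via the vanishing order of the Abelian jump term, together with the criterion in \autoref{finite union of regular chambers}) that the non-Abelian measure is absolutely continuous in a neighbourhood of $W$, so that its behaviour near $W$ is fully described by the two density polynomials $f^K_\pm$. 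Since the operator is linear, the jump is then
\[
  f^K_+(\lambda)-f^K_-(\lambda)=\left(\prod_{\alpha>0}\partial_{-\alpha}\right)\bigl(f_+(\lambda)-f_-(\lambda)\bigr).
\]

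Next I would substitute \autoref{minimal wall jump abelian}, which under the present minimal-wall hypothesis ($m=r$ weights on the hyperplane through $W$) expresses $f_+-f_-$ as a constant multiple of $\ell(\lambda)^{n-r}$, where $\ell(\lambda):=\langle\lambda-\omega_0,\xi\rangle$ is affine-linear and the constant is $f_W\bigl(\prod_{k=r}^n\langle\omega_k-\omega_0,\xi\rangle\bigr)^{-1}/(n-r)!$. The only computation left is to evaluate $\bigl(\prod_{\alpha>0}\partial_{-\alpha}\bigr)\ell^{\,n-r}$: since $\partial_{-\alpha}\ell=-\langle\alpha,\xi\rangle$ is a constant, each factor $\partial_{-\alpha}$ pulls out $-\langle\alpha,\xi\rangle$ and lowers the exponent of $\ell$ by one, so applying all $R$ of them gives
\[
  \left(\prod_{\alpha>0}\partial_{-\alpha}\right)\ell^{\,n-r}=\Bigl(\prod_{\alpha>0}-\langle\alpha,\xi\rangle\Bigr)\,\frac{(n-r)!}{(n-r-R)!}\,\ell^{\,n-r-R}.
\]
Plugging this back in and cancelling the two occurrences of $(n-r)!$ yields exactly the asserted formula.

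The arithmetic here is routine, so the step deserving care is the first one: one must genuinely verify that for $n-r\geq R$ there is no singular component of the non-Abelian Duistermaat--Heckman measure supported on the wall $W$, so that ``the jump across the wall'' is faithfully recorded by the difference of the two polynomial densities and no distributional correction is dropped. This is precisely where the hypothesis enters: the Abelian jump term $\propto\ell^{\,n-r}$ vanishes to order $n-r$ on $W$, hence the Abelian density is $R$-times weakly differentiable across $W$, and therefore the order-$R$ operator $\prod_{\alpha>0}\partial_{-\alpha}$ from \autoref{main theorem} produces a locally bounded (hence absolutely continuous) non-Abelian density near $W$. With that justification in place, the corollary follows from the elementary differentiation above.
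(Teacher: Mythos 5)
Your proposal is correct and follows exactly the route the paper takes: the paper's own justification is the single remark that ``the non-Abelian formula follows directly by applying \autoref{main theorem}'' to \autoref{minimal wall jump abelian}, and your computation of $\bigl(\prod_{\alpha>0}\partial_{-\alpha}\bigr)\langle\lambda-\omega_0,\xi\rangle^{n-r}$ supplies precisely the omitted arithmetic. Your additional care about the absence of a singular component on $W$ under the hypothesis $n-r\geq R$ matches the criterion already laid out in \autoref{finite union of regular chambers}, so nothing is missing.
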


\begin{rem}
  \autoref{minimal wall jump abelian} has already been established in \cite{guilleminlermansternberg88}, where the authors also envisaged an algorithm similar to our Heckman algorithm. They did however not have a general jump formula such as \eqref{boysal vergne} at their avail. Instead, they had to resort to an inexact formula which in general only holds in highest order (in the distance to the wall).
\end{rem}

%%%%%%%%%%%%%%%%%%%%%%%%%%%%%%%%%%%%%%%%%%%%%%%%%%%%%%%%%%%%%%%%%%%%%%%%%%%%%%%
\section{Examples}% of Quantum Marginal Measures}
\label{examples}
%%%%%%%%%%%%%%%%%%%%%%%%%%%%%%%%%%%%%%%%%%%%%%%%%%%%%%%%%%%%%%%%%%%%%%%%%%%%%%%

In this section we illustrate our algorithms by computing some eigenvalue distributions of reduced density matrices. The global quantum states will always be chosen according to one of the invariant probability measures described in \autoref{density matrices}. Many of our examples will involve \emph{qubits}, i.e., quantum systems modeled by two-dimensional Hilbert spaces, so that the algorithms can be nicely visualized. But of course our algorithms can be used to determine the eigenvalue distributions for arbitrary instances of the quantum marginal problem (see \autoref{complete solution}).

\subsection{Pure States of Multiple Qubits}
\label{pure states qubits}

We start by considering pure states of $N$ qubits, where $K = \SU(2)^N$ acts on $M = \PP((\CC^2)^{\otimes N})$ by tensor products (cf.\ \autoref{reduced density matrices}). It will be convenient to identify $\mathfrak t^* \cong \RR^N$ in such a way that the positive Weyl chamber corresponds to the cone $\RR^N_{\geq 0}$ and the fundamental weights to the standard basis vectors $e_j = (\delta_{j,k})$ ($k=1,\ldots,N$). That is, if $\lambda = (\lambda^{(j)}) \in \mathfrak t^*$ then we will by slight abuse of notation identify $\lambda^{(j)}$ with the scalar $i(\lambda^{(j)}_1 - \lambda^{(j)}_2) =$ $2 i \lambda^{(j)}_1$.
It follows that $d\lambda$ is simply the usual Lebesgue measure on $\RR^N$, that the symplectic volume polynomial is given by $p_K(\lambda) = \lambda^{(1)} \cdots \lambda^{(N)}$, and that the positive roots are $2 e_1, \ldots, 2 e_N$ (cf.\ \autoref{notation} and \eqref{volume coadjoint su orbit}).
Moreover, \eqref{spectra to positive weyl chamber} amounts to assigning to a point $(\lambda^{(j)}) \in \RR^N$ the tuple $(\rho_1,\ldots,\rho_N)$ of diagonal density matrices acting on $\CC^2$, where $\rho_j$ has maximal eigenvalue $\hat\lambda^{(j)}_{\max} = \frac 1 2 + i \lambda^{(j)}_1 = \tfrac {1+\lambda^{(j)}} 2$.

We first discuss in detail the toy example of $N=2$ qubits, demonstrating both the non-Abelian Heckman algorithm and the single-summand algorithm.

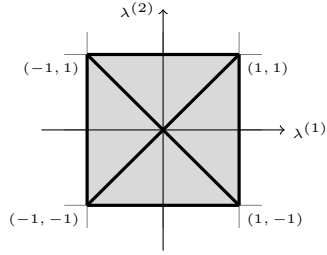
\begin{figure}
    \centering
  \begin{equation*}
    \begin{tikzpicture}
      \draw[help lines] (-1.3,-1.3) grid (1.3,1.3);
      \draw[->] (-1.6,0) -- (1.6,0) node[right] {\tiny $\lambda^{(1)}$};
      \draw[->] (0,-1.6) -- (0,1.6) node[left] {\tiny $\lambda^{(2)}$};
      \draw[fill=gray,opacity=0.3] (1,1) -- (-1,1) -- (-1,-1) -- (1,-1) -- (1,1);
      \draw[very thick] (1,1) -- (-1,1);
      \draw[very thick] (1,-1) -- (-1,-1);
      \draw[very thick] (1,-1) -- (1,1);
      \draw[very thick] (-1,-1) -- (-1,1);
      \draw[very thick] (1,1) -- (-1,-1);
      \draw[very thick] (-1,1) -- (1,-1);
      \draw(1,1) node[below right] {\tiny $(1,1)$};
      \draw(1,-1) node[below right] {\tiny $(1,-1)$};
      \draw(-1,1) node[below left] {\tiny $(-1,1)$};
      \draw(-1,-1) node[below left] {\tiny $(-1,-1)$};
    \end{tikzpicture}
  \end{equation*}
  \caption{Abelian moment polytope of two qubits (grey square) and its decomposition into four bounded regular chambers by the critical walls (thick lines).}
  \label{two qubits figure}
\end{figure}

\begin{prp}
  \label{two qubits non-abelian}
  The non-Abelian Duister\-maat--Heck\-man measure for the action of $\SU(2) \times \SU(2)$ on $\PP(\CC^2 \otimes \CC^2)$ is given by
  \begin{equation*}
    \langle \DuHe^{\SU(2) \times \SU(2)}_{\PP(\CC^2 \otimes \CC^2)}, f \rangle =
    \frac 1 2 \int_0^1 f(t,t) dt,
  \end{equation*}
  i.e., by a one-dimensional Lebesgue measure supported on the diagonal between the origin and $(1,1)$.
\end{prp}
\begin{proof}[Proof using the non-Abelian Heckman algorithm (\autoref{non-abelian heckman algorithm})]
  The four fixed points of the action correspond to the standard basis vectors $e_j \otimes e_k$ ($j,k=1,2$), which are weight vectors of weight $(\pm 1, \pm 1)$ using the conventions fixed above (the vertices of the grey rectangle in \autoref{two qubits figure}).
  % isotropy weights at ( 1, 1):   (-2, 0), (0,-2), (-2,-2)
  % isotropy weights at ( 1,-1):   (-2, 0), (0, 2), (-2, 2)
  % isotropy weights at (-1, 1):   ( 2, 0), (0,-2), ( 2,-2)
  % isotropy weights at (-1,-1):   ( 2, 0), (0, 2), ( 2, 2)
  Let us choose the direction $\gamma = (-2,-1)$ for renormalization.
  % renormalized isotropy weights at ( 1, 1):   (-2, 0), (0,-2), (-2,-2)
  % renormalized isotropy weights at ( 1,-1):   (-2, 0), (0,-2), (-2, 2)   <-- 1 FLIP
  % renormalized isotropy weights at (-1, 1):   (-2, 0), (0,-2), (-2, 2)   <-- 2 FLIPS
  % renormalized isotropy weights at (-1,-1):   (-2, 0), (0,-2), (-2,-2)   <-- 3 FLIPS
  After removal of the positive and negative roots, $(\pm 2,0)$ and $(0,\pm 2)$, only a single renormalized isotropy weight remains at each fixed point (cf.\ \autoref{fixed-point data projective space}).
  % renormalized isotropy weights at ( 1, 1) after removal of roots:   (-2,-2)   <-- REMOVED 2 NEGATIVE ROOTS             => SIGN +
  % renormalized isotropy weights at ( 1,-1) after removal of roots:   (-2, 2)   <-- 1 FLIP, REMOVED 2 NEGATIVE ROOTS     => SIGN -
  % renormalized isotropy weights at (-1, 1) after removal of roots:   (-2, 2)   <-- 2 FLIPS, REMOVED 2 NEGATIVE ROOTS    => SIGN +
  % renormalized isotropy weights at (-1,-1) after removal of roots:   (-2,-2)   <-- 3 FLIPS, REMOVED 2 NEGATIVE ROOTS    => SIGN -
  Therefore, \autoref{non-abelian heckman} shows that the non-Abelian Duister\-maat--Heck\-man measure is given by the restriction to the positive Weyl chamber of
  \begin{equation*}
      \delta_{(1,1)} \star H_{(-2,-2)}
    - \delta_{(1,-1)} \star H_{(-2,2)}
    + \delta_{(-1,1)} \star H_{(-2,2)}
    - \delta_{(-1,-1)} \star H_{(-2,-2)}.
  \end{equation*}
  Only the first summand contributes to the positive Weyl chamber, and its restriction is given precisely by the formula displayed above (cf.\ \autoref{subtle remark}).
\end{proof}
\begin{proof}[Proof using the single-summand algorithm (\autoref{projective space algorithm})]
  Note that the non-Abelian wall jump formula (\autoref{minimal wall jump non-abelian}) is not directly applicable, since $n-r \not\geq R$. Indeed, as we have seen above, the non-Abelian measure does not have a Lebesgue density, since it is concentrated on the diagonal.

  Therefore, we will follow \autoref{projective space algorithm}, which uses the Abelian wall jump formula, and afterwards takes partial derivatives in direction of the negative roots according to \autoref{main theorem}: The decomposition of $\mathfrak t^*$ into regular chambers is indicated in \autoref{two qubits figure}. We start in the unbounded chamber, where the density is equal to the zero polynomial and cross the horizontal critical wall at the top. Evaluating the Abelian jump formula (\autoref{minimal wall jump abelian}; say, with $\omega_0 = (1,1)$ and $\xi = (0,-1)$), we find that the density polynomial on the upper regular chamber is equal to $\frac 1 8 (1 - \lambda^{(2)})$.
  % Here is the computation:
  %
  % c_W
  % \left(
  %   \langle \Matrix{-1\\-1} - \Matrix{1\\1}, \Matrix{0\\-1} \rangle
  %   \langle \Matrix{1\\-1} - \Matrix{1\\1}, \Matrix{0\\-1} \rangle
  % \right)^{-1}
  % \langle \lambda - \Matrix{1\\1}, \Matrix{0\\-1}\rangle =
  % c_W
  % \left(
  %   \langle (-2,-2), (0,-1) \rangle
  %   \langle (0,-2), (0,-1) \rangle
  % \right)^{-1}
  % (1 - \lambda^{(2)}) =
  % \frac {c_W} 4 (1 - \lambda^{(2)})
  % \frac 1 8 (1 - \lambda^{(2)})
  %
  % since
  %
  % c_W^{-1} = |det|((-2,0), (0,-1)) = 2.

  Next, we cross the diagonal critical wall separating the upper and the right-hand side regular chamber. Using the Abelian jump formula once again, we see that the density polynomial changes by $\frac 1 8 (\lambda^{(2)} - \lambda^{(1)})$.
  % Here is the computation:
  %
  % c_W
  % \left(
  %   \langle \Matrix{-1\\1} - \Matrix{1\\1}, \Matrix{1\\-1} \rangle
  %   \langle \Matrix{1\\-1} - \Matrix{1\\1}, \Matrix{1\\-1} \rangle
  % \right)^{-1}
  % \langle \lambda - \Matrix{1\\1}, \Matrix{1\\-1}\rangle =
  % c_W
  % \left(
  %   \langle (-2,0), (1,-1) \rangle
  %   \langle (0,-2), (1,-1) \rangle
  % \right)^{-1}
  % (\lambda^{(1)} - \lambda^{(2)}) =
  % \frac {c_W} {-4} (\lambda^{(1)} - \lambda^{(2)}) =
  % \frac 1 8 (\lambda^{(2)} - \lambda^{(1)})
  %
  % since
  %
  % c_W^{-1} = |det|((-2,-2), (1,-1) / 2) = 2.

  Therefore, the Abelian Duister\-maat--Heck\-man measure has the following piecewise polynomial density on the positive Weyl chamber:
  \begin{equation*}
    \frac 1 8 \left( 1 - \max(\lambda^{(1)}, \lambda^{(2)}) \right)
  \end{equation*}
  Taking partial derivatives in the direction of the negative roots, $(-2,0)$ and $(0,-2)$, we arrive at the measure asserted above.
  % Its partial derivative in the direction of the negative root, $(0,-2)$, is equal to
  %   $ \frac 1 4$ above the diagonal,
  %   $-\frac 1 4$ below the diagonal.
  % Its partial derivative in the direction of the negative root, $(-2,0)$, is equal to
  %   1/2 times Lebesgue measure on the diagonal.
\end{proof}

\begin{cor}
  \label{two qubits marginals}
  The joint distribution $\Prob_{\spec}$ of the maximal eigenvalues of the reduced density matrices of a randomly-chosen pure quantum state of two qubits is given by
  \begin{equation*}
    \langle \Prob_{\spec}, f \rangle = 24 \int_{\frac 1 2}^1 f(s, s) \left( s - \frac 1 2 \right)^2 ds,
  \end{equation*}
  for all test functions $f(\hat\lambda^{(1)}_{\max}, \hat\lambda^{(2)}_{\max})$.
\end{cor}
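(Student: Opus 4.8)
The plan is to combine the general identity \eqref{eigenvalue distribution dist} for the joint eigenvalue distribution with the explicit non-Abelian Duistermaat--Heckman measure computed in \autoref{two qubits non-abelian}, and then to perform the affine change of coordinates from the Weyl chamber to the maximal eigenvalues of the two reduced density matrices. Throughout I work with the conventions fixed at the start of \autoref{pure states qubits}: here $M = \PP(\CC^2 \otimes \CC^2)$, $K = \SU(2) \times \SU(2)$, the positive Weyl chamber is $\RR^2_{\geq 0}$, the symplectic volume polynomial is $p_K(\lambda) = \lambda^{(1)} \lambda^{(2)}$, and the maximal eigenvalue of the $j$-th reduced density matrix is $\hat\lambda^{(j)}_{\max} = (1+\lambda^{(j)})/2$.

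First I would invoke \eqref{eigenvalue distribution dist}, which says that, up to the identification $\hat\lambda \mapsto \lambda$, the joint eigenvalue distribution $\Prob_{\spec}$ is the measure $\frac{1}{\vol M}\, p_K\, \DuHe^K_M$ on $\mathfrak t^*_+$. Substituting \autoref{two qubits non-abelian}, according to which $\DuHe^K_M$ is $\tfrac 1 2$ times the push-forward of Lebesgue measure on $[0,1]$ along $t \mapsto (t,t)$, multiplication by $p_K(t,t) = t^2$ gives, for every test function $f$ on $\mathfrak t^*_+$,
\[
  \langle \Prob_{\spec}, f \rangle = \frac{1}{2\,\vol M} \int_0^1 t^2\, f(t,t)\, dt .
\]
Rather than chase the normalization constant $\vol M = 1/3!$ through the Fubini--Study conventions of \autoref{projective space}, I would simply fix the overall constant by the requirement that $\Prob_{\spec}$ be a probability measure: we already know $\Prob_{\spec}$ is supported on the diagonal segment with density proportional to $t^2$ there, and $\int_0^1 c\, t^2\, dt = 1$ forces $c = 3$, so $\langle \Prob_{\spec}, f \rangle = 3 \int_0^1 t^2 f(t,t)\, dt$.

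Finally I would pass to the maximal-eigenvalue coordinate $s = (1+t)/2$, i.e.\ $t = 2s - 1$, $dt = 2\, ds$, with $t \in [0,1]$ corresponding to $s \in [\tfrac 1 2, 1]$ and $t^2 = 4(s - \tfrac 1 2)^2$. Writing the test function $f$ of $(\hat\lambda^{(1)}_{\max}, \hat\lambda^{(2)}_{\max})$ as the function $\lambda \mapsto f\big(\tfrac{1+\lambda^{(1)}}{2}, \tfrac{1+\lambda^{(2)}}{2}\big)$ on $\mathfrak t^*_+$ and substituting yields $\langle \Prob_{\spec}, f \rangle = 24 \int_{1/2}^1 f(s,s)\, (s - \tfrac 1 2)^2\, ds$, which is the claimed formula.

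The computation is entirely routine once the inputs are in place; the only step demanding care --- the "main obstacle", such as it is --- is the bookkeeping of constants, namely getting the volume normalization $\vol \PP(\CC^2 \otimes \CC^2) = 1/3!$ (equivalently, sidestepping it via the probability condition as above) together with the Jacobian $dt = 2\, ds$ of the affine reparametrization both right, since an erroneous factor at either point would still produce a superficially plausible answer.
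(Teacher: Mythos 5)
Your proposal is correct and follows essentially the same route as the paper: apply \eqref{eigenvalue distribution dist} to the measure from \autoref{two qubits non-abelian} (multiply by $p_K(\lambda)=\lambda^{(1)}\lambda^{(2)}$, divide by $\vol M = 1/3!$), then push forward along $\lambda^{(j)} \mapsto (1+\lambda^{(j)})/2$. Your shortcut of fixing the overall constant by the probability normalization rather than tracking $\vol M$ explicitly yields the same $c=3$ and is a harmless variant.
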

\begin{proof}
  According to \eqref{eigenvalue distribution dist}, multiply the non-Abelian Duister\-maat--Heck\-man measure by the symplectic volume polynomial $p_{\SU(2) \times \SU(2)}(\lambda) = \lambda^{(1)} \lambda^{(2)}$, divide by $\frac 1 {3!}$, the volume of $\PP(\CC^2 \otimes \CC^2)$. Finally, push forward along $(\lambda^{(j)}) \mapsto (\hat\lambda^{(j)}_{\max} = \tfrac {1+\lambda^{(j)}} 2)$.
\end{proof}

This eigenvalue distribution is in fact known more generally for bipartite pure states chosen at random \cite{lloydpagels88,zyczkowskisommers01}. We will later show how to compute its generalization using the techniques of this paper (\autoref{lloyd pagels}).

\medskip

For higher tensor powers, evaluating the Heckman formula quickly becomes unwieldy. However, it can still be used to compute the Duister\-maat--Heck\-man measure locally:

\begin{prp}
  \label{n qubits local}
  The non-Abelian Duister\-maat--Heck\-man measure for the action of $\SU(2)^N$ on $\PP((\CC^2)^{\otimes N})$ is on the closures of the regular chambers that contain the vertex $(1,\ldots,1)$ given by the convolution product
  \begin{equation*}
    \delta_{(1,\ldots,1)} \star H_{\omega_1} \star \ldots \star H_{\omega_{2^N-N-1}},
  \end{equation*}
  where $\{ \omega_k \}$ is the set of weights of the form $(-2,\ldots,-2,0,\ldots,0)$ (at least two non-zero entries) as well as their $S_N$-permutations.
\end{prp}
\begin{proof}
  If we renormalize with respect to the direction $\gamma \approx (-1,\ldots,-1)$ then just as in the first proof of \autoref{two qubits non-abelian} only a single summand in the non-Abelian Heckman formula contributes in the vicinity of the vertex $(1,\ldots,1)$ and, moreover, this summand is of the above form: Indeed, the weights $\{ \omega_k \}$ are precisely the isotropy weights with the negative roots removed (cf.\ \autoref{fixed-point data projective space}). Since the density function of $\DuHe^K_M$ is polynomial on each regular chamber adjacent to the vertex, we can extend the local formula to their closures.
\end{proof}

It is in fact easy to see that the domain of validity of this formula is the intersection of the half-space
\begin{equation*}
  \left\{ \lambda : \sum_{j=1}^N \lambda^{(j)} \geq N-2 \right\}
\end{equation*}
with the positive Weyl chamber (the regular chambers not adjacent to $(1,\ldots,1)$ lie in the complement of this half-space).
% \begin{proof}
%   In the situation at hand, the union of the closures of those regular chambers which contain the vertex $\omega_{1,\ldots,1}$ is the complement of the convex hull of all the other weights. It is easy to see that the hyperplane through the weights $S_N \cdot \omega_{1,\ldots,1,-1}$ is a face of this convex hull.
% \end{proof}

\begin{rem}
  \autoref{n qubits local} gives a local description of the non-Abelian moment polytope, namely by the cone based at $(1,\ldots,1)$ and spanned by the rays with direction vectors $\{\omega_k\}$. By convexity, its intersection with the positive Weyl chamber is an outer approximation to the moment polytope.
\end{rem}
% These are precisely the Higuchi inequalities.

Let us specialize to the case $N=3$: Here, precisely the rays with the direction vectors $(-2,-2,0)$, $(-2,0,-2)$ and $(0,-2,-2)$ are extremal. Their intersection with the positive Weyl chamber has to be contained in the non-Abelian moment polytope: Otherwise, there would be additional vertices in the interior of the positive Weyl chamber --- but only $(1,1,1)$ is the image of a torus fixed point. Since also the origin is contained in the moment polytope (the Greenberger--Horne--Zeilinger state, $[\psi] = [e_1 \otimes e_1 \otimes e_1 + e_2 \otimes e_2 \otimes e_2]$, is a preimage of the origin \cite{greenbergerhornezeilinger89}), we conclude that the convex hull
\begin{equation*}
  \conv \{ (0,0,0), (1,0,0), (0,1,0), (0,0,1), (1,1,1) \}
\end{equation*}
is an inner approximation to the moment polytope. Both approximations are in fact equal and therefore describe the moment polytope precisely (\autoref{three qubits figure}). Inequalities characterizing the moment polytope for $N$ qubits have been determined in \cite{higuchisudberyszulc03}.

\begin{figure}
    \centering
  \tdplotsetmaincoords{70}{10}
  \begin{tikzpicture}[line join=bevel,scale=2.8,tdplot_main_coords]
    \coordinate (O) at (0,0,0);
    \coordinate (W) at (0.3333,0.3333,0.3333);
    \coordinate (W1) at (1,0,0);
    \coordinate (W2) at (0,1,0);
    \coordinate (W3) at (0,0,1);
    \coordinate (SEP) at (1,1,1);

    \draw[->] (-0.3,0,0) -- (1.3,0,0);
    \draw[->] (0,-0.3,0) -- (0,1.3,0);
    \draw[->] (0,0,-0.3) -- (0,0,1.3);
    \draw[-,very thin,color=gray] (1,-0.15,0) -- (1,1.15,0);
    \draw[-,very thin,color=gray] (-0.15,1,0) -- (1.15,1,0);
    \draw[-,very thin,color=gray] (1,1,-0.15) -- (1,1,1.15);

    \draw[-,very thin,color=gray] (-0.15,1,1) -- (1.15,1,1);
    \draw[-,very thin,color=gray] (0,-0.15,1) -- (0,1.15,1);
    \draw[-,very thin,color=gray] (0,1,-0.15) -- (0,1,1.15);

    \draw[-,thick,dashed] (W) -- (W1);
    \draw[-,thick,dashed] (W) -- (W2);
    \draw[-,thick,dashed] (W) -- (W3);
    \draw[-,thick,dashed] (W) -- (O);
    \draw[-,thick,dashed] (W) -- (W1);
    \draw[-,thick,dashed] (W) -- (W2);
    \draw[-,thick,dashed] (W) -- (W3);
    \draw[-,thick,dashed] (W) -- (SEP);
    \draw[-,very thick,dashed] (O) -- (W1) -- (W2) -- cycle;
    \draw[-,very thick,dashed] (SEP) -- (W1) -- (W2) -- cycle;
    \draw[-,very thick,dashed] (SEP) -- (W2) -- (W3) -- cycle;
    \draw[-,very thick,fill opacity=0.7,fill=gray!120] (O) -- (W3) -- (W1) -- cycle;
    \draw[-,very thick,fill opacity=0.7,fill=gray] (SEP) -- (W3) -- (W1) -- cycle;

    \draw (W1) node[below]{\tiny $(1,0,0)$};
    \draw (W2) node[left]{\tiny $(0,1,0)$};
    \draw (W3) node[left]{\tiny $(0,0,1)$};
    \draw (SEP) node[anchor=west]{\tiny $(1,1,1)$};
    \draw (O) node[below left]{\tiny $0$};
  \end{tikzpicture}
  \caption{Non-Abelian moment polytope of three qubits and its decomposition into domains of polynomiality for the non-Abelian Duistermaat--Heckman measure. Each domain is the union of two regular chambers.}
  \label{three qubits figure}
\end{figure}
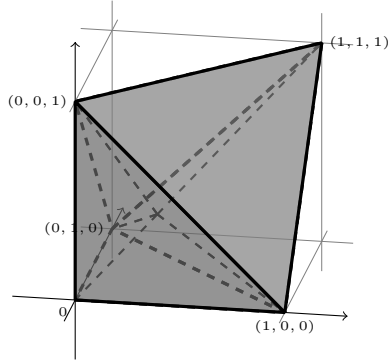

\begin{prp}
  \label{three qubits non-abelian density}
  The non-Abelian Duister\-maat--Heck\-man measure for the action of $\SU(2)^3$ on $\PP(\CC^2 \otimes \CC^2 \otimes \CC^2)$ has the piecewise linear Lebesgue density
  \begin{equation*}
    \begin{cases}
      \frac{1}{16} \min \lambda^{(j)} & \text{in the lower pyramid},\\
      \frac{1}{32} \left( 1 - \sum_{j=1}^3 \lambda^{(j)} + 2 \min \lambda^{(j)} \right) & \text{in the upper pyramid},\\
      0 & \text{otherwise}
    \end{cases}
  \end{equation*}
  (compare \autoref{three qubits figure}).
\end{prp}
\begin{proof}
  By \autoref{n qubits local}, the non-Abelian Duister\-maat--Heck\-man measure is on the closures of the regular chambers containing $(1,1,1)$ given by the convolution
  \begin{equation*}
    \delta_{(1,1,1)} \star H_{(-2,-2,-2)} \star H_{(-2,-2,0)} \star H_{(-2,0,-2)} \star H_{(0,-2,-2)}.
  \end{equation*}
  Using \eqref{convolution translation} we can readily compute its density:
  \begin{align*}
     &\int_0^\infty dt_1 \cdots \int_0^\infty dt_4 \, \delta\big(\matrix{1\\1\\1} + t_1 \matrix{-2\\-2\\-2} + t_2 \matrix{-2\\-2\\0} + t_3 \matrix{-2\\0\\-2} + t_4 \matrix{0\\-2\\-2} - \lambda\big)\\
     = &\frac{1}{32} \int_0^\infty ds_1 \int_{-\infty}^\infty ds_2 \cdots \int_{-\infty}^\infty ds_4 \, \Id_C(s_2,s_3,s_4) \, \delta\big((1-s_1) \matrix{1\\1\\1} - \matrix{s_2\\s_3\\s_4} - \lambda\big)\\
     = &\frac{1}{32} \int_0^\infty ds_1 \, \Id_C\big((1-s_1) \matrix{1\\1\\1} - \lambda\big)\\
     %= &\frac{1}{32} \max \{ 0, \min \{ 1-\lambda^{(1)}+\lambda^{(2)}+\lambda^{(3)}, 1-\lambda^{(1)}+\lambda^{(2)}+\lambda^{(3)}, 1-\lambda^{(1)}+\lambda^{(2)}+\lambda^{(3)} \} \}\\
     = &\frac{1}{32} \max \{ 0, 1 - \sum_{j=1}^3 \lambda^{(j)} + 2 \min(\lambda^{(j)}) \},
  \end{align*}
  where $\Id_C$ is the indicator function of the cone spanned by $\matrix{1\\1\\0}$, $\matrix{1\\0\\1}$, and $\matrix{0\\1\\1}$,
  i.e.,
  \begin{equation*}
    \Id_C(a,b,c) =
    \begin{cases}
      1 &\text{ if } a+b \geq c \text{ and } a+c \geq b \text{ and } b+c \geq a,\\
      0 &\text{ otherwise}.
    \end{cases}
  \end{equation*}
  We have therefore established the claimed density on the complement of the lower pyramid.

  According to \autoref{minimal wall jump non-abelian}, the jump across the hyperplane separating the upper and the lower pyramid is given by
  \begin{equation*}
    \frac 1 4 \left(- \frac 1 {64}\right) 8 \left(1 - \sum_{j=1}^3 \lambda^{(j)} \right) = \frac 1 {32} \left(\sum_{j=1}^3 \lambda^{(j)} - 1 \right)
    % \omega_0 = (1,1,-1)
    % \omega_1 - \omega_0 = (0,-2,2)
    % \omega_2 - \omega_0 = (-2,0,2)
    %
    % there the jump is given by
    %   \frac 1 4
    %   \left(
    %      <(1,1,1) - (1,1,-1), (-1,-1,-1)>    = <(0,0,2), (-1,-1,-1)>  = -2
    %      <(1,-1,-1) - (1,1,-1), (-1,-1,-1)>  = <(0,-2,0), (-1,-1,-1)>  = 2
    %      <(-1,1,-1) - (1,1,-1), (-1,-1,-1)>  = <(-2,0,0), (-1,-1,-1)> = 2
    %      <(-1,-1,1) - (1,1,-1), (-1,-1,-1)>  = <(-2,-2,2), (-1,-1,-1)> = 2
    %      <(-1,-1,-1) - (1,1,-1), (-1,-1,-1)> = <(-2,-2,0), (-1,-1,-1)> = 4
    %   \right)^{-1}
    %   \left(
    %     <(-2,0,0), (-1,-1,-1)> = 2
    %     <(0,-2,0), (-1,-1,-1)> = 2
    %     <(0,0,-2), (-1,-1,-1)> = 2
    %   \right)
    %   <lambda - (1,1,-1), (-1,-1,-1)>        = \left(1 - \sum_j \lambda^{(j)} \right)
    % since
    %   f_W^{-1} = 1/3 |det|((0,-2,2), (-2,0,2), (-1,-1,-1)) = 12/3 = 4
  \end{equation*}
  (the left-hand side terms are ordered just like in the jump formula). This is precisely the difference between the densities on the upper and lower pyramid as asserted in the statement of the proposition.
\end{proof}

It is straightforward to deduce from this the eigenvalue distribution (cf.\ the proof of \autoref{two qubits marginals}):

\begin{cor}
  \label{three qubits marginals}
  The joint distribution of the maximal eigenvalues of the reduced density matrices of a randomly-chosen pure quantum state of three qubits has Lebesgue density
  \begin{equation*}
    8! \left( \prod_{j=1}^3 \hat\lambda^{(j)}_{\max} - \frac 1 2 \right)
    \begin{cases}
      \min \hat\lambda^{(j)}_{\max} - \frac 1 2 & \text{if } \sum_{j=1}^3 \hat\lambda^{(j)}_{\max} \leq 2,\\
      \max \left\{ 0, \frac 1 2 \left( 1 - \sum_{j=1}^3 \hat\lambda^{(j)}_{\max} \right) + \min \hat\lambda^{(j)}_{\max} \right\} & \text{if } \sum_{j=1}^3 \hat\lambda^{(j)}_{\max} \geq 2,
    \end{cases}
  \end{equation*}
  on the space of maximal eigenvalues $(\hat\lambda^{(j)}_{\max}) \in [\frac 1 2,1]^3$.
\end{cor}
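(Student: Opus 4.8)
The plan is to read off this density from the non-Abelian Duistermaat--Heckman measure of \autoref{three qubits non-abelian density} by a routine change of normalization and of variables, exactly as in the proof of \autoref{two qubits marginals}. By \eqref{eigenvalue distribution dist}, the joint eigenvalue distribution of the reduced density matrices of a uniformly random pure state of three qubits is the push-forward to eigenvalue spectra of $\tfrac{1}{\vol M}\, p_K\, \DuHe^K_M$, where $K = \SU(2)^3$, $M = \PP((\CC^2)^{\otimes 3})$, and $p_K(\lambda) = \lambda^{(1)}\lambda^{(2)}\lambda^{(3)}$ is the symplectic volume polynomial fixed in \autoref{pure states qubits}. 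Since a qubit density matrix is determined by its maximal eigenvalue, this is the same datum as the claimed joint distribution of maximal eigenvalues, merely reparametrized by \eqref{spectra to positive weyl chamber}.

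First I would pin down the constant $\vol M$. As $M = \PP(\CC^8)$ is the coadjoint $\SU(8)$-orbit through the highest weight of the defining representation, \eqref{volume coadjoint su orbit} applied to the pure-state spectrum $\widehat{\tilde\lambda} = (1,0,\dots,0)$ gives $\vol M = \prod_{k=2}^{8}\tfrac{1}{k-1} = \tfrac{1}{7!}$ (equivalently, this is the normalization of the Fubini--Study Liouville measure of a $7$-dimensional projective space). Hence, in the coordinates $\lambda = (\lambda^{(j)})$, the eigenvalue distribution has density $7!\,\lambda^{(1)}\lambda^{(2)}\lambda^{(3)}$ times the piecewise-linear density of \autoref{three qubits non-abelian density}. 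Next I would push forward along the affine bijection $\lambda^{(j)}\mapsto \hat\lambda^{(j)}_{\max} = \tfrac{1+\lambda^{(j)}}{2}$, which comes from \eqref{spectra to positive weyl chamber} in the qubit case, carries the positive Weyl chamber $\RR^3_{\geq 0}$ onto the cube $[\tfrac12,1]^3$, and puts a Jacobian factor $2^3 = 8$ into the density. Under the inverse substitution $\lambda^{(j)} = 2\hat\lambda^{(j)}_{\max} - 1$ one has $p_K(\lambda) = 8\prod_j(\hat\lambda^{(j)}_{\max} - \tfrac12)$, $\min_j\lambda^{(j)} = 2(\min_j\hat\lambda^{(j)}_{\max} - \tfrac12)$, and $1 - \sum_j\lambda^{(j)} + 2\min_j\lambda^{(j)} = 2(1 - \sum_j\hat\lambda^{(j)}_{\max} + 2\min_j\hat\lambda^{(j)}_{\max})$. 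Collecting constants, the overall prefactor works out to $7!\cdot 8\cdot 8\cdot\tfrac{2}{16} = 8!$ on the image of the lower pyramid (where $\sum_j\hat\lambda^{(j)}_{\max}\leq 2$) and to $7!\cdot 8\cdot 8\cdot\tfrac{2}{32} = \tfrac{8!}{2}$ on the image of the upper pyramid (where $\sum_j\hat\lambda^{(j)}_{\max}\geq 2$); writing $\tfrac12(1 - \sum_j\hat\lambda^{(j)}_{\max}) + \min_j\hat\lambda^{(j)}_{\max}$ for the latter's linear factor then reproduces the stated density, with the common factor $8!\prod_j(\hat\lambda^{(j)}_{\max}-\tfrac12)$ pulled out.

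There is no real obstacle here --- all the geometric content sits in \autoref{three qubits non-abelian density}, and what remains is the bookkeeping of the volume constant and the change of variables. The one point deserving a word of care is the $\max\{0,\cdot\}$ in the second case: the region $\{\hat\lambda^{(j)}_{\max}\in[\tfrac12,1]^3 : \sum_j\hat\lambda^{(j)}_{\max}\geq 2\}$ is strictly larger than the image of the upper pyramid, since it also meets the image of the part of the Weyl chamber lying outside the moment polytope, where the true density vanishes. But by \autoref{n qubits local} the density near $(1,\dots,1)$ is the single convolution $\delta_{(1,1,1)}\star H_{\omega_1}\star\cdots$, whose value on the whole half-space $\{\sum_j\lambda^{(j)}\geq 1\}$ is $\tfrac{1}{32}\max\{0,\, 1 - \sum_j\lambda^{(j)} + 2\min_j\lambda^{(j)}\}$, so one simply truncates the linear polynomial to its positive part. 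No such truncation is needed in the first case, because the image of the lower pyramid is exactly $\{\hat\lambda^{(j)}_{\max}\in[\tfrac12,1]^3 : \sum_j\hat\lambda^{(j)}_{\max}\leq 2\}$ and $\min_j\hat\lambda^{(j)}_{\max} - \tfrac12 \geq 0$ throughout the cube.
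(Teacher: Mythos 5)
Your proposal is correct and is exactly the deduction the paper intends: as in the proof of \autoref{two qubits marginals}, multiply the density of \autoref{three qubits non-abelian density} by $p_K(\lambda)=\lambda^{(1)}\lambda^{(2)}\lambda^{(3)}$, divide by $\vol\PP(\CC^8)=\tfrac1{7!}$, and push forward along $\lambda^{(j)}\mapsto\tfrac{1+\lambda^{(j)}}2$; your constants ($8!$ and $\tfrac{8!}2$ on the two regions) and your identification of the two pyramids with $\{\sum_j\hat\lambda^{(j)}_{\max}\lessgtr 2\}$ all check out, and your remark justifying the $\max\{0,\cdot\}$ via \autoref{n qubits local} is the right way to handle the part of the cube outside the moment polytope.
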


% Apply \eqref{eigenvalue distribution dist}, and push forward along (\lambda^{(j)}) \mapsto (\hat\lambda^{(j)}_{\max}) --- do not forget the Jacobian!
% See code/mathematica/three_qubits.nb for details.

\begin{rem}
  Our use of the local convolution formula (\autoref{n qubits local}) and of the non-Abelian wall jump formula (\autoref{minimal wall jump non-abelian}) were merely convenient shortcuts: It is clear that we could have completely algorithmically computed the measure by following \autoref{projective space algorithm}.
\end{rem}

\subsection{Mixed States of Two Qubits}
\label{bravyi example}

We will now use the non-Abelian Heckman algorithm to treat the case of random two-qubit states with fixed, non-degenerate global eigenvalue spectrum. That is, we consider the action of $K = \SU(2) \times \SU(2)$ on a coadjoint $\SU(4)$-orbit through a point $\tilde\lambda$ contained in the interior of the positive Weyl chamber.% for $\SU(4)$.

Recall that the Weyl group of $\SU(4)$ is the symmetric group $S_4$, with $(-1)^{l(\tilde w)}$ equal to the signum of a permutation $\tilde w \in S_4$. By \eqref{heckman for maximal dimensional coadjoint orbit reductions},
\begin{equation}
  \label{bravyi abelian}
  \DuHe^{T}_{\mathcal O_{\tilde\lambda}} = \sum_{\tilde w \in S_4} \sign(\tilde w) \, \delta_{\pi(\tilde w \tilde\lambda)} \star H_{-\pi(\tilde\alpha_1)} \star \ldots \star H_{-\pi(\tilde\alpha_6)},
\end{equation}
where $\tilde\alpha_1, \ldots, \tilde\alpha_6$ are the positive roots of $\SU(4)$ (see \autoref{notation} for our conventions), and where $\pi$ is the restriction map $\mathfrak {\tilde t}^* \rightarrow \mathfrak t^*$, with $\mathfrak {\tilde t}^*$ the dual of the Lie algebra of the maximal torus of $\SU(4)$. With respect to our identification $\mathfrak t^* \cong \RR^2$ fixed in \autoref{pure states qubits}, the map $\pi$ is given by
\begin{equation}
  \label{bravyi projection}
  \pi \colon
  \mathfrak {\tilde t}^* \rightarrow \RR^2, \quad
  (\tilde\lambda_1,\ldots,\tilde\lambda_4) \mapsto
  %i(\tilde\lambda_1+\tilde\lambda_2-\tilde\lambda_3-\tilde\lambda_4, \tilde\lambda_1+\tilde\lambda_3-\tilde\lambda_2-\tilde\lambda_4) =
  2i(\tilde\lambda_1+\tilde\lambda_2, \tilde\lambda_1+\tilde\lambda_3).
\end{equation}
One computes readily that the $-\pi(\tilde\alpha_k)$ are precisely the weights $(-2,2)$, $(-2,0)$ (twice), $(-2,-2)$ and $(0,-2)$ (twice). In particular, the two negative roots of $\SU(2) \times \SU(2)$ are contained in this list (each of them is in fact contained twice). By applying \autoref{main theorem} we arrive at the following formula:

\begin{prp}
  \label{bravyi non-abelian}
  The non-Abelian Duister\-maat--Heck\-man measure for the action of $\SU(2) \times \SU(2)$ on a coadjoint $\SU(4)$-orbit $\mathcal O_{\tilde\lambda}$ with $\tilde\lambda \in \mathfrak {\tilde t}^*_{>0}$ is given by
  \begin{equation*}
    \DuHe^{\SU(2) \times \SU(2)}_{\mathcal O_{\tilde\lambda}} =
    \restrict{
    \left( \sum_{\tilde w \in S_4} \sign(\tilde w) \, \delta_{\pi(\tilde w \tilde\lambda)} \right) \star H_{(-2,2)} \star H_{(-2,0)} \star H_{(-2,-2)} \star H_{(0,-2)}
    }{\mathfrak t^*_+}.
  \end{equation*}
\end{prp}

Following \autoref{non-abelian heckman algorithm}, we evaluate the right-hand side iterated convolution using \autoref{boysal vergne algorithm}. The result is the following:

\begin{lem}
  \label{bravyi convolution}
  The measure $H_{(-2,2)} \star H_{(-2,0)} \star H_{(-2,-2)} \star H_{(0,-2)}$ has Lebesgue density
  \begin{equation*}
    f(\lambda^{(1)}, \lambda^{(2)}) =
    \begin{cases}
      0 & \text{in chamber 0},\\
      \frac 1 {64} \left( \lambda^{(1)} + \lambda^{(2)} \right)^2 & \text{in chamber 1},\\
      \frac 1 {64} \left( \left( \lambda^{(1)} \right)^2 + 2 \lambda^{(1)} \lambda^{(2)} - \left( \lambda^{(2)} \right)^2 \right) & \text{in chamber 2},\\
      \frac 1 {32} \left( \lambda^{(1)} \right)^2 & \text{in chamber 3}.
    \end{cases}
  \end{equation*}
  See \autoref{bravyi figure} for the labelling of the chambers and an illustration of the density.
\end{lem}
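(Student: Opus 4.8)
The plan is to run the Boysal--Vergne single-summand procedure (\autoref{boysal vergne algorithm}) for this planar ($r=2$) problem. The measure in question is $\hat P_*(\restrict{dt}{\RR^4_{\geq 0}})$ for the linear map $\hat P\colon (t_k)\mapsto\sum_{k=1}^4 t_k\hat\omega_k$, where $\hat\omega_1 = (-2,2)$, $\hat\omega_2 = (-2,0)$, $\hat\omega_3 = (-2,-2)$, $\hat\omega_4 = (0,-2)$. These four weights are pairwise non-proportional and span a proper convex cone $C$, with $\RR_{\geq 0}\hat\omega_1$ and $\RR_{\geq 0}\hat\omega_4$ its extremal rays and $\hat\omega_2,\hat\omega_3$ in its interior; hence by \cite{boysalvergne09} the push-forward is absolutely continuous with a density that is homogeneous of degree $n-r = 2$ on each chamber. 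The chambers are the connected components of the complement of the rays $\RR_{\geq 0}\hat\omega_k$: the unbounded chamber $\RR^2\setminus C$, on which $\hat f\equiv 0$, together with the three sectors inside $C$, which are chambers $1,2,3$ of \autoref{bravyi figure}, separated by the interior rays $\RR_{\geq 0}\hat\omega_2$, $\RR_{\geq 0}\hat\omega_3$ and bounded by $\RR_{\geq 0}\hat\omega_1$, $\RR_{\geq 0}\hat\omega_4$.

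Starting from the unbounded chamber with $\hat f\equiv 0$, I would then cross the walls $\RR_{\geq 0}\hat\omega_1$, $\RR_{\geq 0}\hat\omega_2$, $\RR_{\geq 0}\hat\omega_3$ in turn to reach chambers $1$, $2$, $3$. Since no two of the four weights are parallel, each such wall is minimal ($m = r-1 = 1$ weight lies on it), so \autoref{minimal wall jump cones} gives the wall polynomial as the constant $\hat f_{\hat W} = \abs{d\lambda(\hat\omega_j, \hat\xi/\norm{\hat\xi}^2)}^{-1}$, with $\hat\omega_j$ the weight on the wall and $\hat\xi$ a normal oriented from the old chamber into the new one. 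Substituting this constant into the Boysal--Vergne jump formula \eqref{boysal vergne} and evaluating the residue at the resulting third-order pole (there remain $n-m = 3$ linear factors in the denominator, and $e^{z\langle\hat\lambda,\hat\xi\rangle}$ supplies the power $\langle\hat\lambda,\hat\xi\rangle^2/2!$), the jump across $\hat W$ collapses to
\[
  \hat f_+(\hat\lambda) - \hat f_-(\hat\lambda) = \frac{\hat f_{\hat W}}{2\prod_{k=2}^{4}\langle\hat\omega_k,\hat\xi\rangle}\,\langle\hat\lambda,\hat\xi\rangle^2
\]
(with the weights reordered so that $\hat\omega_1$ lies on $\hat W$), the cone analogue of \autoref{minimal wall jump abelian}. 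For the three walls the respective normals may be taken proportional to $(1,1)$, $(0,1)$, $(1,-1)$, which is precisely why the successive density polynomials pick up the squares $(\lambda^{(1)} + \lambda^{(2)})^2$, $(\lambda^{(2)})^2$, $(\lambda^{(1)} - \lambda^{(2)})^2$; carrying out the three small determinant computations and residues produces the four polynomials listed in the statement.

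As a consistency check I would finally cross the remaining extremal wall $\RR_{\geq 0}\hat\omega_4$ (on which $\lambda^{(1)} = 0$) out of chamber $3$: its normal is proportional to $(1,0)$, so the jump is proportional to $(\lambda^{(1)})^2$, and one verifies it exactly cancels the chamber-$3$ density $\tfrac 1{32}(\lambda^{(1)})^2$, returning to $\hat f\equiv 0$ on the unbounded chamber. This both confirms that the traversal is consistent and pins down the overall normalization.

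I expect the only real difficulty to be bookkeeping: fixing the orientation of each normal $\hat\xi$ compatibly with the order in which chambers $1,2,3$ are visited in \autoref{bravyi figure}, and tracking the normalization constants $\hat f_{\hat W}$, which depend on $d\lambda$ being standard Lebesgue on $\mathfrak t^*\cong\RR^2$ and on $\norm{\hat\xi}^2$. There is no conceptual obstacle; each of the three wall-crossings amounts to a single $2\times2$ determinant together with a residue at a pole of order three.
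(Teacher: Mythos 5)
Your proposal is correct and follows essentially the same route as the paper, which obtains this lemma precisely by running the Boysal--Vergne wall-crossing procedure (\autoref{boysal vergne algorithm}) on the four weights, starting from the unbounded chamber and crossing the rays $\RR_{\geq 0}(-2,2)$, $\RR_{\geq 0}(-2,0)$, $\RR_{\geq 0}(-2,-2)$ in turn. Your identification of the chamber structure, the minimal-wall jump constants, the residue at the third-order pole giving jumps proportional to $\langle\hat\lambda,\hat\xi\rangle^2/2!$, and the cancellation check across $\RR_{\geq 0}(0,-2)$ all match the stated densities.
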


\begin{figure}
    \centering
  \begin{tikzpicture}[scale=0.8]
    \draw[help lines] (-2.3,-2.3) grid (1.3,2.3);
    \draw[->] (-2.6,0) -- (1.6,0) node[right] {\tiny $\lambda^{(1)}$};
    \draw[->] (0,-2.6) -- (0,2.6) node[above] {\tiny $\lambda^{(2)}$};
    \draw(0,0) node[below right] {\tiny $0$};
    \draw(1,0) node[below] {\tiny $1$};
    \draw(0,1) node[left] {\tiny $1$};
    \draw[thick,->] (0,0) -- (-2.3,2.3);
    \draw[thick,->] (0,0) -- (-2.3,0);
    \draw[thick,->] (0,0) -- (-2.3,-2.3);
    \draw[thick,->] (0,0) -- (0,-2.3);
    \draw[fill=gray,opacity=0.3] (0,0) -- (-2.3,2.3) -- (-2.3,-2.3) -- (0,-2.3) -- (0,0);
    \draw[fill=white,thick] (-0.5,1.5) circle (0.15) node {\tiny $0$};
    \draw[fill=white,thick] (-1.5,0.5) circle (0.15) node {\tiny $1$};
    \draw[fill=white,thick] (-1.5,-0.5) circle (0.15) node {\tiny $2$};
    \draw[fill=white,thick] (-0.5,-1.5) circle (0.15) node {\tiny $3$};
    \draw(0,-3) node {~};
  \end{tikzpicture}
  \quad\quad\quad
  \includegraphics[width=6cm,trim=0cm -0.8cm 0cm 0cm]{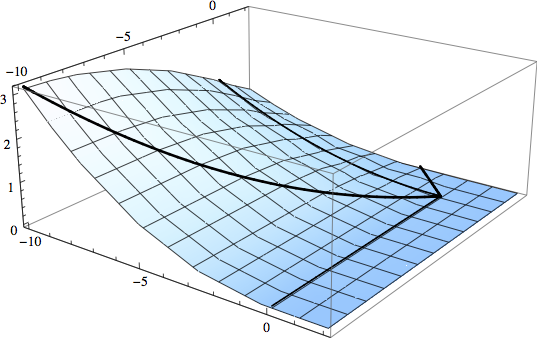}
  \caption{
    (a) Chambers and support (gray) and (b) density function of the iterated convolution computed in \autoref{bravyi convolution}.
  }
  \label{bravyi figure}
\end{figure}

The density of the non-Abelian Duister\-maat--Heck\-man measure is thus given by the restriction to the positive Weyl chamber of an alternating sum of $24$ copies of the density described in \autoref{bravyi convolution}, one copy attached to each of the points $\pi(\tilde w \tilde\lambda)$. In view of the geometry of the support of the latter density, it is clear that in fact only summands for points in the right halfplane $\{\lambda^{(1)} > 0\}$ contribute (i.e., at most half of the points). Using \eqref{bravyi projection}, one finds that the points $\pi(\tilde w \tilde\lambda)$ are the six points whose coordinates are equal to any two out of the three values $c_1 = |2i(\tilde\lambda_1+\tilde\lambda_4)|$, $c_2 = 2i(\tilde\lambda_1+\tilde\lambda_3)$, or $c_3 = 2i(\tilde\lambda_1+\tilde\lambda_2)$ (without repetitions), as well as their Weyl conjugates. See \autoref{bravyi figure two} for illustration.

\begin{figure}
    \centering
  \begin{tikzpicture}[scale=0.8]
    \draw[help lines] (-0.3,-0.3) grid (5.3,5.3);
    \draw[->] (-0.6,0) -- (5.6,0) node[right] {\tiny $\lambda^{(1)}$};
    \draw[->] (0,-0.6) -- (0,5.6) node[above] {\tiny $\lambda^{(2)}$};
    \draw(0,0) node[below left] {\tiny $0$};
    \draw(5,0) node[below] {\tiny $1$};
    \draw(0,5) node[left] {\tiny $1$};
    \draw(1,0) node[below] {\tiny $c_1$};
    \draw(2,0) node[below] {\tiny $c_2$};
    \draw(4,0) node[below] {\tiny $c_3$};
    \draw(0,1) node[left] {\tiny $c_1$};
    \draw(0,2) node[left] {\tiny $c_2$};
    \draw(0,4) node[left] {\tiny $c_3$};
    \draw[fill=gray,opacity=0.3] (4,1) -- (3,0) -- (0,0) -- (0,3) -- (1,4) -- (2,4) -- (4,2) -- (4,1);
    \draw[thick,-] (4,1) -- (3,0) -- (0,0) -- (0,3) -- (1,4) -- (2,4) -- (4,2) -- (4,1);

    % \draw[thick,-] (0,1) -- (4,1);
    % \draw[thick,-] (0,2) -- (4,2);
    % \draw[thick,-] (4,0) -- (4,1);
    % \draw[thick,-] (1,0) -- (1,4);
    % \draw[thick,-] (2,0) -- (2,4);
    % \draw[thick,-] (0,4) -- (1,4);
    % \draw[thick,-] (4,1) -- (1,4);
    % \draw[thick,-] (3,0) -- (0,3);
    % \draw[thick,-] (2,0) -- (0,2);
    % \draw[thick,-] (2,0) -- (0,2);
    % \draw[thick,-] (2,0) -- (4,2);
    % \draw[thick,-] (0,2) -- (2,4);
    % \draw[thick,-] (1,0) -- (2,1);
    % \draw[thick,-] (0,1) -- (1,2);
    % \draw[thick,-] (0,1) -- (1,0);

    \draw[fill=white,thick] (4,2) circle (0.15) node {\tiny $+$};
    \draw[fill=white,thick] (4,1) circle (0.15) node {\tiny $-$};
    \draw[fill=white,thick] (2,1) circle (0.15) node {\tiny $+$};
    \draw[fill=white,thick] (2,4) circle (0.15) node {\tiny $-$};
    \draw[fill=white,thick] (1,4) circle (0.15) node {\tiny $+$};
    \draw[fill=white,thick] (1,2) circle (0.15) node {\tiny $-$};
  \end{tikzpicture}
  \includegraphics[width=6.8cm,trim=0cm -0.8cm 0cm 0cm]{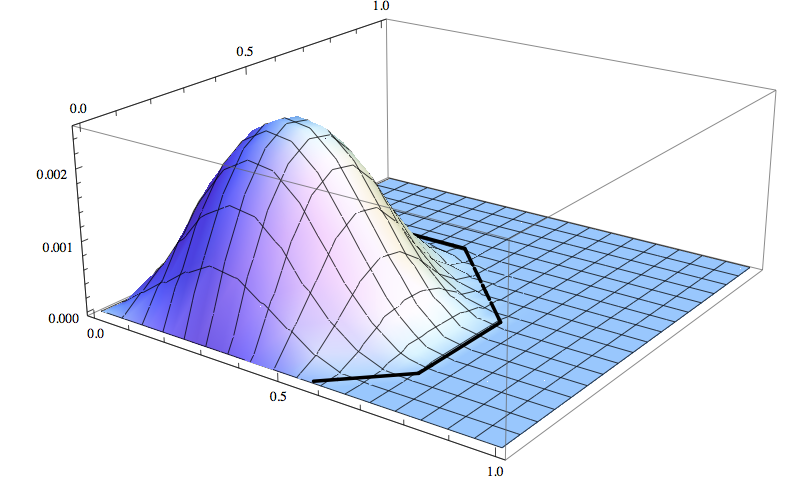}
  \caption{(a) Non-Abelian moment polytope (gray), %regular chambers,
    and the points $\pi(\tilde w \tilde\lambda) \in \mathfrak t^*_+$ together with $\sign(\tilde w)$.
    (b) Density of the Duister\-maat--Heck\-man measure for global spectrum $(4/7,2/7,1/7,0)$.}
  \label{bravyi figure two}
\end{figure}

Moreover, we can deduce that the non-Abelian moment polytope has the form described in the figure. To do so, we simply need to check in each regular chamber whether the density polynomial vanishes. By doing so and describing the resulting polytope in terms of inequalities, we recover a well-known result by Bravyi \cite{bravyi04}:

\begin{cor}
  The non-Abelian moment polytope for the action of $\SU(2) \times \SU(2)$ on a generic coadjoint $\SU(4)$-orbit is given by
  \begin{equation*}
    \Delta_{\SU(2) \times \SU(2)}(\mathcal O_{\tilde\lambda}) =
    \{
      (\lambda^{(1)},\lambda^{(2)}) :
      0 \leq \lambda^{(1)}, \lambda^{(2)} \leq c_3,
      \lambda^{(1)} + \lambda^{(2)} \leq c_2 + c_3,
      \abs{\lambda^{(1)} - \lambda^{(2)}} \leq c_3 - c_1
    \}.
  \end{equation*}
\end{cor}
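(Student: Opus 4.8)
The strategy is to extract the moment polytope as the support of the non-Abelian Duister\-maat--Heck\-man measure, whose density we have essentially already assembled. By the general discussion in \autoref{notation} (the support of $\DuHe^K_M$ equals $\Delta_K(M)$) together with \autoref{finite union of regular chambers}, it suffices to decide on which regular chambers of the Abelian moment map the non-Abelian density polynomial vanishes identically; the closure of the union of the remaining chambers is $\Delta_{\SU(2)\times\SU(2)}(\mathcal O_{\tilde\lambda})$. By \autoref{bravyi non-abelian} this density is, on $\mathfrak t^*_+ = \RR^2_{\geq 0}$, the value at $\lambda$ of $\sum_{\tilde w \in S_4} \sign(\tilde w)\, g(\lambda - \pi(\tilde w\tilde\lambda))$, where $g$ is the explicit piecewise-quadratic density of $H_{(-2,2)}\star H_{(-2,0)}\star H_{(-2,-2)}\star H_{(0,-2)}$ from \autoref{bravyi convolution} (illustrated in \autoref{bravyi figure}). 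Thus the corollary is reduced to a finite, elementary computation with translated copies of $g$.

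I would begin by cutting down the number of relevant summands. Since $\supp g = \cone\{(-2,2),(0,-2)\}$ lies in the half-plane $\{\lambda^{(1)}\leq 0\}$, a translate $\pi(\tilde w\tilde\lambda) + \supp g$ can meet $\mathfrak t^*_+$ only when $\pi(\tilde w\tilde\lambda)$ lies in the closed right half-plane; by \eqref{bravyi projection} and the ordering $0 < c_1 \leq c_2 \leq c_3$ forced by $\tilde\lambda \in \mathfrak{\tilde t}^*_{>0}$, these are exactly the points discussed after \autoref{bravyi convolution}, namely the six points whose coordinates form an ordered pair of distinct values among $c_1,c_2,c_3$ (shown with their signs in \autoref{bravyi figure two}), together with the few of their Weyl conjugates whose translated cone still reaches $\mathfrak t^*_+$; these extra contributions I would track as well. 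So the density on $\mathfrak t^*_+$ is a signed sum of only a handful of shifted copies of $g$.

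Next I would overlay, over $\RR^2_{\geq 0}$, the chamber decompositions of these shifted copies: each copy vanishes outside its translated support cone and is a single quadratic on each of the three chambers of \autoref{bravyi figure} into which that cone is subdivided, so their common refinement partitions $\mathfrak t^*_+$ into finitely many regions, on each of which the total density is one explicit quadratic polynomial, the signed sum of the relevant pieces of \autoref{bravyi convolution}. Evaluating this polynomial region by region, I expect (matching \autoref{bravyi figure two}) that it vanishes identically on every region lying outside the (generically heptagonal) polytope cut out by $0\leq\lambda^{(1)},\lambda^{(2)}\leq c_3$, $\lambda^{(1)}+\lambda^{(2)}\leq c_2+c_3$ and $\abs{\lambda^{(1)}-\lambda^{(2)}}\leq c_3-c_1$, and is a non-zero quadratic on every region inside it. Since a non-zero polynomial is non-vanishing on a dense open set, the support of $\DuHe^{\SU(2)\times\SU(2)}_{\mathcal O_{\tilde\lambda}}$ is then precisely the closure of that polytope; a routine check that the seven listed half-spaces indeed have this set as their common solution finishes the proof. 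One should also confirm, using the jump data from \autoref{minimal wall jump non-abelian}, that there is no residual mass sitting on the bounding walls (cf.\ \autoref{finite union of regular chambers}).

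The main obstacle is purely organizational: keeping accurate track of which of the three quadratic pieces of each shifted copy of $g$ is active over each region of the refinement, and verifying that the cancellations are exact, especially along the candidate facets, where several quadratics must sum to zero (or to a polynomial vanishing there to the required order). A secondary subtlety is pinning down the sign of $i(\tilde\lambda_1+\tilde\lambda_4)$, hence the correct reading of $c_1 = \abs{2i(\tilde\lambda_1+\tilde\lambda_4)}$: it is precisely this absolute value that turns a one-sided estimate into the symmetric facet $\abs{\lambda^{(1)}-\lambda^{(2)}}\leq c_3-c_1$ and makes the description uniform in $\tilde\lambda$.
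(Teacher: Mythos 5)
Your proposal is correct and follows essentially the same route as the paper: the paper likewise reads off the polytope as the support of the non-Abelian density, computed chamber by chamber as the signed sum of the translated copies of the convolution density from \autoref{bravyi convolution} attached to the points $\pi(\tilde w\tilde\lambda)$ with $\lambda^{(1)}>0$, checking on which regular chambers the resulting polynomial vanishes. Your additional remarks (tracking the Weyl conjugates with negative second coordinate, and excluding residual mass on singular walls) only make explicit the bookkeeping the paper leaves implicit.
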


% mineig_A = 1 - maxeig_A = 1/2 (1 - \lambda_1)
% mineig_B = 1 - maxeig_B = 1/2 (1 - \lambda_2)
% global_i = i*\tilde\lambda_i + 1/4
% FIRST BRAVYI INEQUALITY:
% \lambda_1, \lambda_2 <= c_3  <=>
% mineig_A, mineig_B >= 1/2 (1 - c_3) = 1/2 (1 - 2(global_1 + global_2 - 1/2)) = 1 - global_1 - global_2 = global_3 + global_4
% SECOND BRAVYI INEQUALITY:
% \lambda_1 + \lambda_2 <= c_2 + c_3  <=>
% mineig_A + mineig_B >= 1 - 1/2 (c_2 + c_3) = 1 - (2 global_1 + global_2 + global_3 - 1) = 2 - global1 - global2 - global3 - global1 = 1 + global4 - global1 = global2 + global3 + 2*global4
% THIRD BRAVYI INEQUALITY:
% |\lambda_1 - \lambda_2| <= c_3 - c_1  <=>
% |mineig_A - mineig_B| <= 1/2 (c_3 - c_1) = (L1 + L2 - |L1+L4|) = min(L2-L4, L1-L3) = min(global_1-global_3, global_2-global_4)

In the limit where the global state becomes pure, the moment polytope converges to the diagonal between the origin and $(1,1)$. This is in agreement with \autoref{two qubits non-abelian}. One can similarly recover the eigenvalue distribution of the reduced density matrices of a random pure state of two qubits by taking a corresponding limit.

In view of \autoref{purification measure}, the distributions computed in \autoref{bravyi non-abelian} can be assembled to give the joint eigenvalue distribution of the reduced density matrices of a randomly-chosen pure state in $\PP(\CC^2 \otimes \CC^2 \otimes \CC^4)$.

\subsection{Pure States of Bosonic Qubits}
\label{pure states bosonic qubits}

We now turn to random pure states of $N$ bosonic qubits, where $K = \SU(2)$ and $M = \PP(\Sym^N(\CC^2))$. We will use the Abelian Heckman algorithm:

\begin{prp}
  \label{sym N qubits abelian}
  The Abelian Duister\-maat--Heck\-man measure for $\PP(\Sym^N(\CC^2))$ has Lebesgue density
  \begin{equation*}
    \frac 1 {2^N (N-1)! N!} \sum_{k=-N,-N+2,\ldots,N} (-1)^{\frac{N+k}{2}} {\binom{N}{\frac{N+k}{2}}} (\lambda - k)_+^{N-1}.
  \end{equation*}
  Here, we set $(\lambda-k)^{N-1}_+ = (\lambda-k)^{N-1}$ for $\lambda \geq k$ and $0$ otherwise.

  Equivalently, $N!$ times the Abelian Duister\-maat--Heck\-man measure is equal to the probability distribution of the sum of $N$ independent random variables that are uniformly distributed on the interval $[-1,1]$ \cite[\S I.9, Theorem 1a]{feller71}.
\end{prp}
\begin{proof}
  The weights of $\Sym^N(\CC^2)$ are $\{ -N, -N+2, \ldots, N \}$; let us write $v_k$ for a weight vector of weight $k$.
  % (these are of course precisely the occuption number basis vectors).
  The associated projective space has precisely $N+1$ torus fixed points. At any such fixed point $[v_k]$, the isotropy weights are given by
  \[
    \{ (l-k) : l = -N, -N+2, \ldots, \check{k}, \ldots, N \},
  \]
  and we will denote them by $\hat\omega_{k,1,}, \ldots, \hat\omega_{k,N}$ (cf.\ \autoref{fixed-point data projective space}).
  Observe that precisely $n_k = \frac{N+k}{2}$ of them are negative with respect to the renormalization direction $\gamma = +1$.
  By \eqref{convolution translation}, the corresponding summand of the Heckman formula is equal to the push-forward of Lebesgue measure on $\RR^N_{\geq 0}$ along
  \begin{equation*}
    P \colon \RR^N_{\geq 0} \rightarrow \mathfrak u_1^*, \quad
    (s_1, \ldots, s_N) \mapsto \sum_{i=1}^N s_i \abs{\hat\omega_{k,i}} + k.
  \end{equation*}
  We first compute its cumulative distribution function:
  \begin{align*}
    P_*(ds)\left((-\infty,(k+\lambda)]\right)
    &= ds(\{ s_1, \ldots, s_N \geq 0 : \sum_{i=1}^N s_i \abs{\hat\omega_{k,i}} \leq \lambda \})\\
    &= ds(\{ s_1, \ldots, s_N \geq 0 : \sum_{i=1}^N s_i \leq 1 \}) \, \frac 1 {\prod_i \abs{\hat\omega_{k,i}}} \, \lambda_+^N\\
    %&= \frac 1 {N!} \frac 1 {\prod_{l \neq k} |l - k|} \lambda_+^N\\
    &= \frac 1 {N!} \frac 1 {2^N (\frac{N+k}{2})! (\frac{N-k}{2})!} \lambda_+^N\\
    &= \frac 1 {2^N N! N!} {\binom{N}{\frac{N+k}{2}}} \lambda_+^N.
  \end{align*}
  The density is then given by the derivative,
  \begin{equation*}
    f_k(\lambda) = \frac 1 {2^N (N-1)! N!} {\binom{N}{\frac{N+k}{2}}} (\lambda-k)_+^{N-1},
  \end{equation*}
  and by forming the alternating sum of these terms we arrive at the formula displayed above.
\end{proof}

In \autoref{plethysms} we give an alternative proof of \autoref{sym N qubits abelian} using representation theory and combinatorics.
It follows directly from the derivative principle that the non-Abelian Duister\-maat--Heck\-man measure is given by the following formula:

\begin{cor}
  \label{sym N qubits non-abelian}
  The non-Abelian Duister\-maat--Heck\-man measure for the action of $\SU(2)$ on $\PP(\Sym^N(\CC^2))$ with $N \geq 2$ has Lebesgue density
  \begin{equation*}
    \frac 1 {2^{N-1} (N-2)! N!} \sum_{k=-N,-N+2,\ldots,N} (-1)^{\frac{N+k}{2}+1} {\binom{N}{\frac{N+k}{2}}} (\lambda-k)_+^{N-2}
  \end{equation*}
  on $[0,\infty)$.
\end{cor}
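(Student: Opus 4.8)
The plan is to read off \autoref{sym N qubits non-abelian} directly from the derivative principle (\autoref{main theorem}) applied to the Abelian density computed in \autoref{sym N qubits abelian}; no new geometry is needed. First I would fix the bookkeeping: for $K=\SU(2)$ there is a unique positive root, and in the identification $\mathfrak t^*\cong\RR$ underlying \autoref{sym N qubits abelian} (the one in which $\Sym^N(\CC^2)$ carries the weights $-N,-N+2,\dots,N$ and the fundamental weight is $1$) this root is $\alpha=2$. Hence the operator appearing in \autoref{main theorem} is $\prod_{\alpha>0}\partial_{-\alpha}=\partial_{-2}=-2\,\frac{d}{d\lambda}$. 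I would also record that \autoref{main assumption} is satisfied here --- by \autoref{boson fermion lemma}, or directly because the moment polytope is the segment $[0,N]$, which meets $\mathfrak t^*_{>0}$ --- so that \autoref{main theorem} is indeed applicable.

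The computation is then a single differentiation. Starting from
\begin{equation*}
  \DuHe^T_{\PP(\Sym^N(\CC^2))}
  = \frac{1}{2^N(N-1)!\,N!}
    \sum_{k=-N,-N+2,\ldots,N}(-1)^{\frac{N+k}{2}}\binom{N}{\frac{N+k}{2}}(\lambda-k)_+^{N-1}\,d\lambda
\end{equation*}
as given by \autoref{sym N qubits abelian}, and using $\frac{d}{d\lambda}(\lambda-k)_+^{N-1}=(N-1)(\lambda-k)_+^{N-2}$, applying $-2\,\frac{d}{d\lambda}$ replaces the prefactor $\frac{1}{2^N(N-1)!\,N!}$ by $\frac{-2(N-1)}{2^N(N-1)!\,N!}=\frac{-1}{2^{N-1}(N-2)!\,N!}$ and multiplies each sign $(-1)^{\frac{N+k}{2}}$ by $-1$; this is exactly the density claimed in \autoref{sym N qubits non-abelian}. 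By \autoref{main theorem} the resulting function equals the density of $\DuHe^{\SU(2)}_{\PP(\Sym^N(\CC^2))}$ on $\mathfrak t^*_{>0}$, and since we are describing a Lebesgue density the single boundary point $\lambda=0$ is irrelevant, so the formula holds on $[0,\infty)$.

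The only places requiring a moment's care --- and the only likely sources of a sign or normalization slip --- are twofold: the identification of the direction $-\alpha$ together with its length (the factor $2$ above), and the claim that the distributional derivative of $(\lambda-k)_+^{N-1}$ is the naive $(N-1)(\lambda-k)_+^{N-2}$ with no Dirac contribution at the kink $\lambda=k$. The latter is precisely what the hypothesis $N\geq 2$ buys: for $N\geq 2$ the functions $(\lambda-k)_+^{N-1}$ are continuous, so the Abelian density is $C^0$ and its weak derivative is the classical one; for $N=1$ the Abelian density is piecewise constant and differentiating it produces Dirac masses, in agreement with the fact that $\PP(\CC^2)$ is a coadjoint orbit whose non-Abelian Duistermaat--Heckman measure is the point mass $\delta_1$ rather than a Lebesgue density.
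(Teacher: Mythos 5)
Your proposal is correct and is exactly the paper's intended argument: the paper derives \autoref{sym N qubits non-abelian} by simply invoking the derivative principle (\autoref{main theorem}) on the Abelian density of \autoref{sym N qubits abelian}, and your bookkeeping (the root $\alpha=2$, hence the operator $-2\,\tfrac{d}{d\lambda}$, the resulting sign flip and prefactor $\tfrac{-2(N-1)}{2^N(N-1)!N!}=\tfrac{-1}{2^{N-1}(N-2)!N!}$) reproduces the stated formula precisely. Your added checks --- that \autoref{main assumption} holds and that for $N\geq 2$ the functions $(\lambda-k)_+^{N-1}$ are continuous so no Dirac terms arise --- are exactly the details the paper leaves implicit.
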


Again, it is clear how to translate the above into the eigenvalue distribution of the one-body reduced density matrix by using \eqref{eigenvalue distribution indist}. See \autoref{sym two qubits figure} for an illustration in the case of $N=2$ bosonic qubits.

\begin{figure}
  \quad\quad\quad
  \begin{align*}
    \begin{tikzpicture}
      \draw[help lines] (-2.3,-0.3) grid (2.3,1.3);
      \draw[->] (-2.6,0) -- (2.6,0) node[right] {\tiny $\lambda$};
      \draw[->] (0,-0.6) -- (0,1.6);
      \draw(0,0) node[below left] {\tiny $0$};
      \draw(1,0) node[below left] {\tiny $1$};
      \draw(2,0) node[below left] {\tiny $2$};
      \draw(-1.6,0.2) node[above] {\tiny $\frac {2+\lambda} 8$};
      \draw(1.6,0.2) node[above] {\tiny $\frac {2-\lambda} 8$};
      \draw[very thick,dashed] (-2,0) -- (0,1);
      \draw[very thick] (2,0) -- (0,1);
    \end{tikzpicture}
    & \quad \quad
    \begin{tikzpicture}
      \draw[help lines] (-0.3,-0.3) grid (2.3,1.3);
      \draw[->] (-0.6,0) -- (2.6,0) node[right] {\tiny $\lambda$};
      \draw[->] (0,-0.6) -- (0,1.6);
      \draw(0,0) node[below left] {\tiny $0$};
      \draw(1,0) node[below left] {\tiny $1$};
      \draw(2,0) node[below left] {\tiny $2$};
      \draw(2,1) node[right] {\tiny $\frac 1 4$};
      \draw[very thick] (0,1) -- (2,1);
    \end{tikzpicture}
    & \quad \quad
    \begin{tikzpicture}
      \draw[help lines] (-0.3,-0.3) grid (1.3,1.3);
      \draw[->] (-0.6,0) -- (1.6,0) node[right] {\tiny $\hat\lambda_{\max}$};
      \draw[->] (0,-0.6) -- (0,1.6);
      \draw(0,0) node[below left] {\tiny $0$};
      \draw(1,0) node[below left] {\tiny $1$};
      \draw[very thick] (0,0) -- (0.5,0) -- (1,1);
      \draw(1,0.7) node[right] {\tiny $8 (\hat\lambda_{\max} - \tfrac 1 2)$};
    \end{tikzpicture}
  \end{align*}
  \caption{(a) and (b) Density of the Abelian and the non-Abelian Duister\-maat--Heck\-man measure for $\PP(\Sym^2(\CC^2))$, and (c) corresponding maximal eigenvalue distribution of the one-body reduced density matrix.}
  \label{sym two qubits figure}
\end{figure}
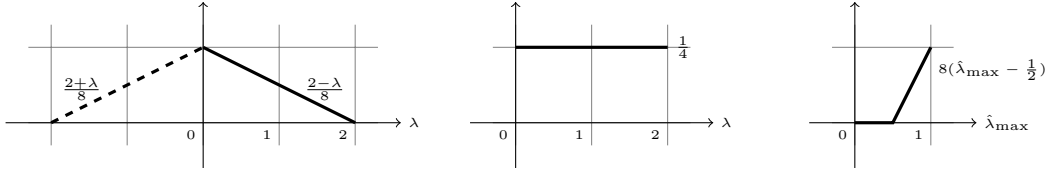

As an application, let us compute the average value of the reduced purity of a randomly-chosen pure state of bosonic qubits. The reduced purity is by definition equal to
\begin{equation}
  \label{definition reduced purity}
  \norm{\rho_1}_2^2 = \left( \hat\lambda_{\max}^2 + (1-\hat\lambda_{\max})^2 \right),
\end{equation}
where $\rho_1$ denotes the one-body reduced density matrix and $\hat\lambda_{\max}$ its maximal eigenvalue.

\begin{cor}
  \label{purity example}
  The average reduced purity \eqref{definition reduced purity} of a randomly-chosen pure state of $N$ bosonic qubits is given by
  \begin{equation*}
    \frac 1 2 + \frac 1 {2 N}.
  \end{equation*}
\end{cor}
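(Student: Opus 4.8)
The plan is to rewrite the reduced purity \eqref{definition reduced purity} as an affine function of the square of the (scaled) moment-map coordinate, to reduce the resulting average against $\DuHe^K_M$ to an integral against the \emph{Abelian} measure $\DuHe^T_M$ via the derivative principle, and finally to evaluate that integral using the probabilistic description of $\DuHe^T_{\PP(\Sym^N(\CC^2))}$ provided by \autoref{sym N qubits abelian}. Here $K=\SU(2)$ and $M=\PP(\Sym^N(\CC^2))$, so $\dim_{\CC}M=N$ and $\vol M=1/N!$ for the Fubini--Study normalization of \autoref{projective space}, while $p_K(\lambda)=\lambda$ in the coordinate of \autoref{pure states qubits}.

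First I would set up the integrand. Writing $\mu=\lambda/N\in[0,1]$ for the scaled coordinate --- so that the maximal eigenvalue of the one-body reduced density matrix is $\hat\lambda_{\max}=(1+\mu)/2$ (cf.\ \autoref{pure states qubits} and the rescaling by $\kappa$ in \eqref{eigenvalue distribution indist}) --- a direct computation gives $\norm{\rho_1}_2^2=\hat\lambda_{\max}^2+(1-\hat\lambda_{\max})^2=(1+\mu^2)/2$. Hence the average reduced purity equals $\tfrac12+\tfrac12\operatorname{E}[\mu^2]$, and it remains to prove $\operatorname{E}[\mu^2]=1/N$. By \eqref{eigenvalue distribution indist}, the law of $\mu$ is the $\kappa$-push-forward of $\tfrac1{\vol M}\,p_K\,\DuHe^K_M$; since pushing forward along $\lambda\mapsto\lambda/N$ and multiplying the integrand by $\mu^2=\lambda^2/N^2$ turns $p_K(\lambda)=\lambda$ into $\lambda^3$, this gives $\operatorname{E}[\mu^2]=\tfrac{N!}{N^2}\,\langle\DuHe^K_M,\lambda^3\rangle$.

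Next I would apply the derivative principle. By the reformulation of \autoref{main theorem} stated in the remark that follows it (which explicitly anticipates this proof), $\langle\DuHe^K_M,f\rangle=\langle\DuHe^T_M,(\prod_{\alpha>0}\partial_\alpha)f\rangle$ for $f\in C_c^\infty(\mathfrak t^*_{>0})$; as $\SU(2)$ has a unique positive root and $\partial_\alpha=2\,d/d\lambda$ in our coordinate, this yields $\langle\DuHe^K_M,\lambda^3\rangle=6\,\langle\DuHe^T_M,\lambda^2\rangle$, with the integral on the right taken over $\mathfrak t^*_{>0}$. The one subtlety is that $\lambda^3$ is not compactly supported inside the open chamber, but since it vanishes to order three at the wall $\lambda=0$ while $\DuHe^T_M$ has a locally bounded density for $N\ge 2$, a routine cutoff argument shows that no boundary term appears. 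Now by \autoref{sym N qubits abelian}, $N!\,\DuHe^T_M$ is the law of $S_N=U_1+\dots+U_N$ with the $U_i$ independent and uniform on $[-1,1]$; this measure is symmetric about the origin, so $\langle\DuHe^T_M,\lambda^2\rangle=\tfrac12\int_{\mathbb R}\lambda^2\,\DuHe^T_M(d\lambda)=\tfrac1{2\,N!}\operatorname{E}[S_N^2]=\tfrac1{2\,N!}\cdot N\Var(U_1)=\tfrac{N}{6\,N!}$. Combining the last two displays gives $\langle\DuHe^K_M,\lambda^3\rangle=N/N!=1/(N-1)!$, hence $\operatorname{E}[\mu^2]=\tfrac{N!}{N^2}\cdot\tfrac1{(N-1)!}=\tfrac1N$, and the average reduced purity is $\tfrac12+\tfrac1{2N}$, as claimed.

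The argument is mostly bookkeeping once these reductions are in place; the main place an error could creep in is the chain of normalizations --- the rescaling of the moment-map coordinate by $N$ (the map $\kappa$), the Liouville volume $\vol M=1/N!$, the factor $2$ in $\partial_\alpha$, and the factor $1/N!$ relating $\DuHe^T_M$ to the probability law of $S_N$ --- since a slip in any of these changes the final constant. One can cross-check the result in the first non-trivial case $N=2$ against the explicit density drawn in \autoref{sym two qubits figure}. An alternative computation, avoiding the probabilistic statement, would integrate $\lambda^3$ directly against the closed-form non-Abelian density of \autoref{sym N qubits non-abelian}, but the second-moment-of-a-sum route above is shorter and more transparent.
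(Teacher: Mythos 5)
Your argument is correct and follows essentially the same route as the paper's proof: express the average via \eqref{eigenvalue distribution indist}, reduce to the Abelian measure by the derivative principle, and evaluate the second moment using the sum-of-uniforms description from \autoref{sym N qubits abelian} together with additivity of variances. The only differences are cosmetic bookkeeping (you isolate $\tfrac12+\tfrac12\mu^2$ before differentiating, the paper carries $N^2\lambda+\lambda^3$ through), plus your explicit remark about the cutoff for the non-compactly-supported test function, which the paper glosses over.
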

\begin{proof}
  We will not use \autoref{sym N qubits non-abelian} directly, but instead work with the Abelian Duister\-maat--Heck\-man measure:
  Denote by $\Prob$ the probability distribution of the maximal eigenvalue of the one-body reduced density matrix.
  By using \eqref{eigenvalue distribution indist} and $\hat\lambda_{\max} = \frac 1 2 + \frac \lambda {2 N}$, we find that the average reduced purity is given by
  \begin{align*}
    %\Expc \norm{\rho_1}_2^2
       &\int \left( \hat\lambda_{\max}^2 + \left( 1-\hat\lambda_{\max} \right)^2 \right) \; d\Prob(\hat\lambda_{\max})\\
    = &\frac 1 {4 N^2} \int \lambda \left( \left( N + \lambda \right)^2 + \left( N - \lambda \right)^2 \right) \, N! \, d\DuHe^{\SU(2)}_M(\lambda)\\
    = &\frac 1 {2 N^2} \int \left( N^2 \lambda + \lambda^3 \right) \, N! \, d\DuHe^{\SU(2)}_M(\lambda).
  \end{align*}
  By the derivative principle, \autoref{main theorem},
  % $\DuHe^{\SU(2)}_{M} = - \partial_{\alpha_1} \DuHe^{\U(1)}_{M} = - 2 \partial_{\lambda} \DuHe^{\U(1)}_{M}$ on the positive Weyl chamber
  this is equal to
  \begin{align*}
    &\frac 1 {N^2} \int_0^{\infty} \left( N^2 + 3 \lambda^2 \right) \, N! \, d\DuHe^{\U(1)}_M(\lambda)\\
    = &\frac 1 {2 N^2} \int \left( N^2 + 3 \lambda^2 \right) \, N! \, d\DuHe^{\U(1)}_M(\lambda).
  \end{align*}
  In \autoref{sym N qubits abelian} we have seen that $N! \, \DuHe^{\U(1)}_{M}$ is the probability distribution of the sum of $N$ independent random variables that are uniformly distributed on the interval $[-1,1]$. Since the variance of any such random variable is $\frac 1 3$ and since variances of independent random variables are additive, the above is equal to
  \begin{equation*}
      \frac 1 {2N^2} \left( N^2 + 3 \frac N 3 \right)
    = \frac 1 2 + \frac 1 {2N}.
    \qedhere
  \end{equation*}
\end{proof}

In accordance with the concentration of measure phenomenon, $\rho_1 \rightarrow \Id/2$ in distribution as $N \rightarrow \infty$. We remark that our result matches \cite[Theorem 34]{mullerdahlstenvedral11} if one works out the quantities left uncalculated therein.
Note that our proof illustrates the power of the derivative principle: Instead of explicitly computing the eigenvalue distribution, we can reduce to the Abelian Duister\-maat--Heck\-man measure by differentiating the quantity we are interested in.

\subsection{Pure States of Bipartite Systems}
We conclude this series of examples by re-deriving the eigenvalue distribution of the reduced density matrices of a randomly-chosen pure state in the case of a general bipartite quantum system, corresponding to the action of $\SU(a) \times \SU(b)$ on $M = \PP(\CC^{a} \otimes \CC^{b})$. Instead of following one of the algorithms it will be most convenient to directly work with the formula given in \autoref{density in polytope picture}.

Suppose that $b \geq a$. If $b > a+1$ then \autoref{main assumption} is not satisfied (\autoref{main assumption qmp}): Indeed, it always follows from the singular value decomposition that
\begin{equation}
  \label{bipartite spectra}
  \spec \rho_B = (\spec \rho_A, 0, \ldots, 0),
\end{equation}
so that in this case the non-Abelian moment polytope is contained in the boundary of the positive Weyl chamber. However, \eqref{bipartite spectra} of course implies that for any choice of $b \geq a$ the joint eigenvalue distribution is already determined by the eigenvalue distribution of $\rho_A$. That is, it suffices to compute the Duister\-maat--Heck\-man measure for the action of $K = \SU(a)$. Denote by $T$ the standard maximal torus of $\SU(a)$. As in \eqref{density matrices to functionals}, we identify points $\lambda \in \mathfrak t \cong \mathfrak t^*$ with diagonal density matrices $\hat\lambda = \frac 1 a + i \lambda$.

Clearly, the Abelian moment polytope consists of those $\lambda$ with $\hat\lambda \in \Delta_{a-1}$, and the non-Abelian moment polytope is its intersection with the positive Weyl chamber (cf.\ \autoref{purification}).

\begin{lem}
  \label{lloyd pagels abelian}
  On the Abelian moment polytope, the Duister\-maat--Heck\-man measure for the action of the maximal torus of $\SU(a)$ is proportional to
  \begin{equation*}
    \prod_{j=1}^a \hat\lambda_j^{b-1} ~ d\lambda =
    \prod_{j=1}^a \left(\frac 1 a + i \lambda_j\right)^{b-1} ~ d\lambda.
  \end{equation*}
\end{lem}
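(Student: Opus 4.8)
The plan is to apply \autoref{density in polytope picture} to the projective space $M=\PP(V)$ with $V=\CC^a\otimes\CC^b$ and the maximal torus $T$ of $K=\SU(a)$, and then to observe that in this special case the ``weight fibers'' appearing in that proposition factor as Cartesian products of dilated simplices. First decompose $V$ into $T$-weight spaces: writing $\omega_1,\dots,\omega_a$ for the weights of $\CC^a$ and $(f_\ell)_{\ell=1}^b$ for a basis of $\CC^b$, the weight $\omega_j$ occurs with multiplicity $b$, with weight vectors $e_j\otimes f_1,\dots,e_j\otimes f_b$, so $V=\bigoplus_{j=1}^a\bigoplus_{\ell=1}^b\CC(e_j\otimes f_\ell)$ has $ab$ one-dimensional weight spaces. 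By \autoref{density in polytope picture}, the Abelian Duister\-maat--Heck\-man density at $\lambda$ is the $dp/d\lambda$-volume of the set
\[
  \Big\{(t_{j,\ell})_{1\le j\le a,\,1\le\ell\le b}\ge 0 : \textstyle\sum_{j,\ell}t_{j,\ell}\,\omega_j=\lambda,\ \sum_{j,\ell}t_{j,\ell}=1\Big\}.
\]
Under the identification $\hat\lambda_j=\tfrac1a+i\lambda_j$ of \eqref{spectra to positive weyl chamber}, the constraints $\sum_{j,\ell}t_{j,\ell}\,\omega_j=\lambda$ and $\sum_{j,\ell}t_{j,\ell}=1$ are together equivalent to $\sum_{\ell}t_{j,\ell}=\hat\lambda_j$ for every $j$, so this fiber is exactly the Cartesian product $\prod_{j=1}^a\{(t_{j,1},\dots,t_{j,b})\ge 0:\sum_\ell t_{j,\ell}=\hat\lambda_j\}$ of $a$ standard $(b-1)$-simplices, the $j$-th one dilated by the factor $\hat\lambda_j$.

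Next I would record that the ambient affine subspaces $P^{-1}(\lambda)\cap\Hbf=\{(t_{j,\ell}):\sum_\ell t_{j,\ell}=\hat\lambda_j\ \forall j\}$ are mutually parallel as $\lambda$ ranges over the moment polytope, so that $dp/d\lambda$ is (up to a single constant independent of $\lambda$) the product of the block-wise Lebesgue measures on the affine hyperplanes $\{\sum_\ell t_{j,\ell}=\mathrm{const}\}\subseteq\RR^b$. Since Lebesgue volume in dimension $b-1$ scales by $c^{b-1}$ under dilation by $c$, the $j$-th factor contributes $\hat\lambda_j^{\,b-1}$ times the (constant) volume of the standard $(b-1)$-simplex, and multiplying the $a$ factors yields $f(\lambda)=C\prod_{j=1}^a\hat\lambda_j^{\,b-1}$ for a positive constant $C$. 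Substituting $\hat\lambda_j=\tfrac1a+i\lambda_j$ gives the claimed expression. (An equivalent route is to factor $P\colon(t_{j,\ell})\mapsto\sum_j(\sum_\ell t_{j,\ell})\,\omega_j$ through the map $(t_{j,\ell})\mapsto(\sum_\ell t_{j,\ell})_j$, apply \autoref{projective space abelian via standard simplex}, and use the same simplex-volume computation to see that the pushforward of $dp|_{\Delta_{ab-1}}$ to $\Delta_{a-1}$ has density proportional to $\prod_j p_j^{\,b-1}$.)

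The computation is essentially routine; the only point needing a little care is the second paragraph, namely checking that no hidden $\lambda$-dependence creeps into the normalization of $dp/d\lambda$ or into the identification of the product measure on the fiber with $dp/d\lambda$. This is harmless because the fibers are parallel affine subspaces and we only claim proportionality, so the overall constant $C$ (which we do not need) absorbs all such factors.
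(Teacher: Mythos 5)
Your proposal is correct and follows essentially the same route as the paper: both invoke \autoref{density in polytope picture}, identify the fiber over $\lambda$ as a Cartesian product of $a$ standard $(b-1)$-simplices dilated by the factors $\hat\lambda_j$, and conclude that the density is proportional to $\prod_j \hat\lambda_j^{b-1}$ by the scaling of $(b-1)$-dimensional volume. The paper's proof is terser (it absorbs your second-paragraph care about the normalization of $dp/d\lambda$ into the phrase ``the measure factorizes accordingly'' and a single constant $Z$), but the substance is identical.
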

\begin{proof}
  Choose the weight-space decomposition of $\CC^a \otimes \CC^b$ given by the standard basis vectors $e_j \otimes e_k$ ($j=1,\ldots,a$ and $k=1,\ldots,b$). According to \autoref{density in polytope picture}, the density of the Abelian Duister\-maat--Heck\-man measure is given by
  \begin{equation*}
    f(\lambda) = \vol \, \{ p_{1,1}, \ldots, p_{a,b} \geq 0 : \sum_{k=1}^b p_{j,k} = \hat\lambda_j = \frac 1 a + i \lambda_j \quad (j=1,\ldots,a) \}
  \end{equation*}
  with respect to the volume measure $dp/d\lambda$ defined therein. Note that the right-hand side set is the Cartesian product of $a$ rescaled standard simplices. The measure factorizes accordingly, and it is easy to see that
  \begin{align*}
    f(\lambda)
    = \prod_{j=1}^a \vol \, \{ p_1, \ldots, p_b \geq 0 : \sum_{k=1}^b p_k = \hat\lambda_j \}
    = \frac 1 Z \prod_{j=1}^a \hat\lambda_j^{b-1}
  \end{align*}
  for $\hat\lambda \in \Delta_{a-1}$, and zero otherwise, with $Z$ a suitable normalization constant.
\end{proof}

\begin{cor}
  \label{lloyd pagels}
  On the non-Abelian moment polytope, the Duister\-maat--Heck\-man measure for the $\SU(a)$-action on $\PP(\CC^a \otimes \CC^b)$ is proportional to
  \begin{equation*}
    \prod_{j=1}^a \hat\lambda_j^{b-a}
    \prod_{j < k \leq a} (\hat\lambda_j - \hat\lambda_k) ~
    d\lambda.
  \end{equation*}
\end{cor}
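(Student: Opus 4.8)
The plan is to obtain \autoref{lloyd pagels} directly from the derivative principle (\autoref{main theorem}) applied to the Abelian Duistermaat--Heckman measure of \autoref{lloyd pagels abelian}. First I note that \autoref{main assumption} is satisfied for the $\SU(a)$-action on $\PP(\CC^a \otimes \CC^b)$ whenever $b \geq a$: any pure state of full Schmidt rank $a$ with pairwise distinct Schmidt coefficients has reduced density matrix $\rho_A$ with non-degenerate spectrum, so $\Delta_K(M) \cap \mathfrak t^*_{>0} \neq \emptyset$. Hence \autoref{main theorem} gives, on the interior $\mathfrak t^*_{>0}$ of the positive Weyl chamber,
\[
  \DuHe^{K}_{M} = \Big(\prod_{\alpha>0}\partial_{-\alpha}\Big)\DuHe^{T}_{M},
\]
and by \autoref{lloyd pagels abelian} the right-hand side is, up to a nonzero constant, the result of applying $\prod_{\alpha>0}\partial_{-\alpha}$ to the density $\prod_{j=1}^a \hat\lambda_j^{\,b-1}$.

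The next step is to rewrite this differential operator in convenient coordinates. Using the parametrization $\hat\lambda_j = \tfrac{1}{a} + i\lambda_j$ of $\mathfrak t^*$ by the affine hyperplane $\{\sum_j \hat\lambda_j = 1\}$, the positive root $\alpha_{j,k}$ ($j<k$) becomes the linear functional $\hat\lambda \mapsto \hat\lambda_j - \hat\lambda_k$; since the invariant inner product restricts to a multiple of the standard one on this hyperplane, the directional derivative $\partial_{-\alpha_{j,k}}$ equals, up to a harmless constant, the operator $\partial_{\hat\lambda_k}-\partial_{\hat\lambda_j}$, which is tangent to the hyperplane and therefore well defined on functions on $\mathfrak t^*$. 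Consequently, up to a nonzero constant,
\[
  \prod_{\alpha>0}\partial_{-\alpha} \;=\; \prod_{1\le j<k\le a}\big(\partial_{\hat\lambda_k}-\partial_{\hat\lambda_j}\big) \;=\; \det\big(\partial_{\hat\lambda_j}^{\,k-1}\big)_{j,k=1}^{a}
\]
by the Vandermonde identity.

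I would then finish with the determinantal computation. Applying $\det(\partial_{\hat\lambda_j}^{\,k-1})$ to $\prod_{l} \hat\lambda_l^{\,b-1}$, and using that $\partial_{\hat\lambda_j}$ acts only on the $j$-th factor, produces the ``Wronskian'' $\det\big((\hat\lambda_j^{\,b-1})^{(k-1)}\big)_{j,k} = \det\big(\tfrac{(b-1)!}{(b-k)!}\,\hat\lambda_j^{\,b-k}\big)_{j,k}$, which is legitimate precisely because $b\ge a$ keeps every exponent nonnegative. Pulling the constants out of the columns and a factor $\hat\lambda_j^{\,b-a}$ out of the $j$-th row leaves, up to sign, $\prod_j \hat\lambda_j^{\,b-a}$ times the Vandermonde determinant $\det(\hat\lambda_j^{\,a-k})_{j,k} = \pm\prod_{j<k}(\hat\lambda_j-\hat\lambda_k)$. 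Collecting everything, the density of $\DuHe^{K}_{M}$ on $\mathfrak t^*_{>0}$ is proportional to $\prod_j \hat\lambda_j^{\,b-a}\prod_{j<k}(\hat\lambda_j-\hat\lambda_k)$; since the non-Abelian moment polytope is the closure of its intersection with $\mathfrak t^*_{>0}$ and the density is polynomial, hence continuous, the stated formula holds on all of it. Passing from $d\hat\lambda$ to $d\lambda$ only rescales by a constant, so the normalization is immaterial.

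The only genuinely delicate point — more bookkeeping than obstacle — is the second step: correctly identifying $\partial_{-\alpha_{j,k}}$ with $\partial_{\hat\lambda_k}-\partial_{\hat\lambda_j}$ under the chosen coordinates, tracking only proportionality since the explicit constant gets absorbed, and checking that this operator preserves the hyperplane $\sum_j \hat\lambda_j = 1$ so that $\det(\partial_{\hat\lambda_j}^{\,k-1})$ makes sense as an operator on $\mathfrak t^*$. Once this is in place, the Vandermonde/Wronskian identity does all the remaining work.
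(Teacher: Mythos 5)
Your proof is correct, and it follows the paper's strategy up to the point where the operator $\prod_{j<k}(\partial_{\hat\lambda_k}-\partial_{\hat\lambda_j})$ is applied to the Abelian density $\prod_j\hat\lambda_j^{\,b-1}$ of \autoref{lloyd pagels abelian}; the paper makes the same identification $\prod_{\alpha>0}\partial_{-\alpha}\propto\prod_{j<k}(\partial_{\hat\lambda_k}-\partial_{\hat\lambda_j})$. Where you diverge is in evaluating the result: you expand the operator as $\det(\partial_{\hat\lambda_j}^{\,k-1})$ via the Vandermonde identity and compute the resulting Wronskian $\det\bigl(\tfrac{(b-1)!}{(b-k)!}\hat\lambda_j^{\,b-k}\bigr)$ explicitly, factoring out $\prod_j\hat\lambda_j^{\,b-a}$ and a Vandermonde determinant. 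The paper instead avoids any explicit determinant evaluation with a structural argument: the output is antisymmetric (the operator is antisymmetric, the input symmetric), hence divisible by the Vandermonde $\prod_{j<k}(\hat\lambda_j-\hat\lambda_k)$; each variable is differentiated at most $a-1$ times, so the output is also divisible by $\prod_j\hat\lambda_j^{\,b-a}$; and a degree count shows the quotient by the product of these two coprime factors is a constant. Your computation is more explicit and in particular would let you track the proportionality constant if needed, at the cost of the column/row manipulations; the paper's argument is shorter but only identifies the answer up to a scalar, which is all the statement requires. Both correctly use $b\ge a$ to keep the exponents nonnegative, and both (reasonably) gloss over the distributional boundary terms at $\hat\lambda_a=0$, which vanish for the same reason.
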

\begin{proof}
  According to the derivative principle, we have to apply $\prod_{j < k} i(\partial_{\lambda_k} - \partial_{\lambda_j}) = \prod_{j < k} ( \partial_{\hat\lambda_k} - \partial_{\hat\lambda_j} )$
  to the Abelian Duister\-maat--Heck\-man density as computed in \autoref{lloyd pagels abelian}.

  This is a partial differential operator of order $\binom a 2$, therefore the resulting non-Abelian density polynomial has total degree at most $d_{\max} = a(b-1) - a(a-1)/2$. Since we differentiate each variable at most $a-1$ times, it is a multiple of the symmetric polynomial $\prod_{j=1}^a \hat\lambda_j^{b-a}$.
  On the other hand, the result is evidently antisymmetric, and therefore a multiple of the Vandermonde determinant $\prod_{j < k} (\hat\lambda_j - \hat\lambda_k)$. Since the total degrees add up to $d_{\max}$, this implies the assertion.
\end{proof}

In view of \eqref{eigenvalue distribution dist}, this result implies the following well-known formula \cite{lloydpagels88,zyczkowskisommers01}:

\begin{cor}
  The distribution of the eigenvalue spectrum $\hat\lambda = \spec \rho_1$ of a randomly-chosen bipartite pure state $\rho$ on $\CC^a \otimes \CC^b)$ has Lebesgue density proportional to
  \begin{equation*}
    \prod_{j=1}^a \hat\lambda_j^{b-a}
    \prod_{j < k \leq a} (\hat\lambda_j - \hat\lambda_k)^2
  \end{equation*}
  on the space of eigenvalue spectra $\{ \hat\lambda \in \Delta_{a-1} : \hat\lambda_1 \geq \ldots \geq \hat\lambda_a \}$.
\end{cor}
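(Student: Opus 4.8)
The plan is to combine \autoref{lloyd pagels} with the defining relation \eqref{eigenvalue distribution dist} between the joint eigenvalue distribution of the reduced density matrices and the non-Abelian Duister\-maat--Heck\-man measure. Since $b \geq a$, the singular value decomposition \eqref{bipartite spectra} shows that $\spec\rho_B$ is obtained from $\hat\lambda := \spec\rho_A$ by padding with zeros; hence the law of the pair $(\spec\rho_A,\spec\rho_B)$ is completely determined by the law of $\hat\lambda$, and it suffices to compute the latter. This is precisely the eigenvalue distribution \eqref{eigenvalue distribution dist} for the action of $K = \SU(a)$ on $M = \PP(\CC^a \otimes \CC^b)$, for which \autoref{main assumption} holds (the ordered simplex contains points with pairwise distinct coordinates) even when it fails for $\SU(a) \times \SU(b)$, so that \eqref{eigenvalue distribution dist} is legitimately applicable in this reduced setting.

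Next I would invoke \autoref{lloyd pagels}, which identifies $\DuHe^{\SU(a)}_M$ on the non-Abelian moment polytope as the measure with Lebesgue density proportional to $\prod_{j=1}^a \hat\lambda_j^{b-a} \prod_{j<k\le a}(\hat\lambda_j-\hat\lambda_k)$, supported on the ordered chamber $\{\hat\lambda \in \Delta_{a-1} : \hat\lambda_1 \geq \ldots \geq \hat\lambda_a\}$. By \eqref{eigenvalue distribution dist}, the sought eigenvalue distribution equals $\tfrac{1}{\vol M}\, p_K \DuHe^K_M$; the factor $\vol M = p_{\SU(\CC^a \otimes \CC^b)}(\tilde\lambda)$ is a constant and may be absorbed into the proportionality. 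It then remains to multiply the density above by the symplectic volume polynomial $p_{\SU(a)}$. By \eqref{volume coadjoint su orbit}, under the identification $\hat\lambda \mapsto \lambda$ one has $p_{\SU(a)}(\lambda) = \prod_{j<k\le a}(\hat\lambda_j-\hat\lambda_k)/(k-j)$, which is a constant multiple of the Vandermonde determinant $\prod_{j<k\le a}(\hat\lambda_j-\hat\lambda_k)$. The product is therefore proportional to $\prod_{j=1}^a \hat\lambda_j^{b-a} \prod_{j<k\le a}(\hat\lambda_j-\hat\lambda_k)^2$, which is the asserted density.

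I do not expect a genuine obstacle: the real work has already been carried out in \autoref{lloyd pagels abelian} and \autoref{lloyd pagels}, and what remains is bookkeeping. The points requiring a little care are (i) the reduction from $\SU(a)\times\SU(b)$ to $\SU(a)$ just discussed, (ii) the tracking of constants --- Liouville volumes, the factors $k-j$, and the normalization of $d\lambda$ relative to Lebesgue measure on the affine span of $\Delta_{a-1}$ --- all of which are harmless under the "proportional to" formulation, and (iii) the observation that the displayed expression vanishes on the walls of the Weyl chamber and is integrable over the compact simplex, so that it unambiguously defines a probability density on the space of ordered spectra after normalization.
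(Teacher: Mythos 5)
Your proposal is correct and matches the paper's (very brief) argument: the corollary is obtained exactly by multiplying the density of \autoref{lloyd pagels} by the symplectic volume polynomial $p_{\SU(a)}$ as dictated by \eqref{eigenvalue distribution dist}, which squares the Vandermonde factor. The reduction to the $\SU(a)$-action via \eqref{bipartite spectra} is likewise how the paper sets up this subsection, so there is nothing to add.
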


It is also easy to deduce the corresponding formula for the action of $\SU(a) \times \SU(b)$:

\begin{cor}
  \label{lloyd pagels diagonal}
  Denote by $\Delta = \{ \lambda \in \mathfrak t^*_+ : \hat\lambda \in \Delta_{a-1} \}$ the non-Abelian moment polytope for the $\SU(a)$-action. Then the push-forward of Liouville measure along the moment map for the $\SU(a) \times \SU(b)$-action is given by
  \begin{equation*}
      \left\langle (\Phi_{\SU(a) \times \SU(b)})_*(\mu_{\PP(\CC^a \otimes \CC^b)}), g \right\rangle
    = \frac 1 Z \int_{\Delta} d\lambda \int_{\mathcal O^{\SU(a)}_\lambda \times \mathcal O^{\SU(b)}_{((\hat\lambda,0,\ldots,0) - \frac {\Id} b)/i}} g,
  \end{equation*}
  where $Z$ is a suitable normalization constant.
  % The non-Abelian Duister\-maat--Heck\-man measure is _not_ defined in these circumstances (because Assumption 1.1 is not satisfied in general)
\end{cor}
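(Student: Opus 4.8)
The plan is to combine the Schmidt (singular-value) decomposition with a disintegration of the Liouville measure along the ``spectrum of the first subsystem'' map, and then to read off the normalization from \autoref{lloyd pagels}. Throughout write $K=\SU(a)$, assume $b\geq a$, and for $\lambda\in\Delta$ put $\mu=((\hat\lambda,0,\ldots,0)-\Id/b)/i\in\mathfrak {\tilde t}^*_+$, the point of the positive Weyl chamber of $\SU(b)$ corresponding to the spectrum $(\hat\lambda_1,\ldots,\hat\lambda_a,0,\ldots,0)$ via \eqref{spectra to positive weyl chamber}.

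The first step is to describe the moment map $\Phi:=\Phi_{\SU(a)\times\SU(b)}$ via the Schmidt decomposition. Any unit vector $\psi\in\CC^a\otimes\CC^b$ can be written $\psi=\sum_{j=1}^a\sqrt{\hat\lambda_j}\,u_j\otimes v_j$ with $(u_j)$, $(v_j)$ orthonormal systems in $\CC^a$, $\CC^b$ and $\hat\lambda_j\geq0$ summing to one; then $\rho_1=\sum_j\hat\lambda_j P_{u_j}$, $\rho_2=\sum_j\hat\lambda_j P_{v_j}$, which gives \eqref{bipartite spectra} and shows that $\Phi([\psi])$ lies in $\mathcal O^{\SU(a)}_\lambda\times\mathcal O^{\SU(b)}_\mu$, the product of coadjoint orbits determined by $\hat\lambda=\spec\rho_1$. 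Conversely every pair of orthonormal systems arises, so for each $\lambda\in\Delta$ the map $\Phi$ sends the level set $\{\,[\psi]:\spec\rho_1=\hat\lambda\,\}$ equivariantly \emph{onto} all of $\mathcal O^{\SU(a)}_\lambda\times\mathcal O^{\SU(b)}_\mu$. Hence the image of $\Phi$ is $\bigcup_{\lambda\in\Delta}\mathcal O^{\SU(a)}_\lambda\times\mathcal O^{\SU(b)}_\mu$, fibered over $\Delta$ by the map $\pi$ that takes a pair to the ordered spectrum of its first component; note that $\pi$ is $\SU(a)\times\SU(b)$-invariant and that $\pi\circ\Phi=\tau_K\circ\Phi_K$ since the first component of $\Phi$ is $\Phi_K$ (cf.\ \eqref{restriction to subgroups}).

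The second step is to disintegrate $\nu:=\Phi_*(\mu_M)$ along $\pi$. Its marginal on $\Delta$ is $\pi_*\nu=(\tau_K)_*(\Phi_K)_*(\mu_M)=p_K\,\DuHe^K_M$ by \eqref{definition duistermaat-heckman measure}. For the conditional measures, observe that $\nu$ is $\SU(a)\times\SU(b)$-invariant (invariance of $\mu_M$ and equivariance of $\Phi$) and $\pi$ is invariant; by uniqueness of the disintegration the conditional probability measure on $\pi^{-1}(\lambda)=\mathcal O^{\SU(a)}_\lambda\times\mathcal O^{\SU(b)}_\mu$ is $\SU(a)\times\SU(b)$-invariant for a.e.\ $\lambda$, and since $\SU(a)\times\SU(b)$ acts transitively on that product it must be the normalized product of the two Liouville measures, $(\vol\mathcal O^{\SU(a)}_\lambda\,\vol\mathcal O^{\SU(b)}_\mu)^{-1}\,\mu_{\mathcal O^{\SU(a)}_\lambda}\otimes\mu_{\mathcal O^{\SU(b)}_\mu}$. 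Using $\vol\mathcal O^{\SU(a)}_\lambda=p_K(\lambda)$, the factor $p_K$ cancels and
\begin{equation*}
  \nu=\int_\Delta\frac{d\DuHe^K_M(\lambda)}{\vol\mathcal O^{\SU(b)}_\mu}\;\mu_{\mathcal O^{\SU(a)}_\lambda}\otimes\mu_{\mathcal O^{\SU(b)}_\mu}.
\end{equation*}

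It remains to compare the two densities in $\lambda$. By \autoref{lloyd pagels}, $\DuHe^K_M$ has density on $\Delta$ proportional to $\prod_{j=1}^a\hat\lambda_j^{\,b-a}\prod_{j<k\leq a}(\hat\lambda_j-\hat\lambda_k)$. On the other hand, for $\hat\lambda$ with distinct positive entries (a dense open subset of $\Delta$) the Liouville volume of the possibly non-regular orbit $\mathcal O^{\SU(b)}_\mu$ is, up to a constant independent of $\lambda$, the product of $\langle\mu,\alpha\rangle$ over those positive roots $\alpha$ of $\SU(b)$ not annihilating $\mu$ \cite[Theorem 7.24]{berlinegetzlervergne03}; these are exactly the roots $\alpha_{j,k}$ with $j<k\leq a$ (contributing $\hat\lambda_j-\hat\lambda_k$) and with $j\leq a<k\leq b$ (contributing $\hat\lambda_j$), so $\vol\mathcal O^{\SU(b)}_\mu\propto\prod_{j<k\leq a}(\hat\lambda_j-\hat\lambda_k)\prod_{j=1}^a\hat\lambda_j^{\,b-a}$. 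The two expressions coincide up to a constant, hence $d\DuHe^K_M(\lambda)/\vol\mathcal O^{\SU(b)}_\mu=Z^{-1}d\lambda$ for a suitable $Z$, which is the claimed formula. The only step requiring real care is the disintegration-plus-invariance argument (and, to a lesser extent, the orbit-volume computation); the rest is bookkeeping with Schmidt coefficients and with the root data of $\SU(b)$.
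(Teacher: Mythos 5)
Your proposal is correct and follows essentially the same route as the paper's own (much terser) proof: establish via the singular value decomposition that the push-forward is supported on the union of paired orbits $\mathcal O^{\SU(a)}_\lambda\times\mathcal O^{\SU(b)}_\mu$, invoke $\SU(a)\times\SU(b)$-invariance to identify the conditional measures with normalized Liouville measures, and observe that the density from \autoref{lloyd pagels} is proportional to $\vol\mathcal O^{\SU(b)}_\mu$ so that everything cancels to a uniform density in $\lambda$. Your write-up simply makes explicit the disintegration step and the orbit-volume computation that the paper leaves implicit.
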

\begin{proof}
  By \eqref{bipartite spectra}, each coadjoint orbit $\mathcal O_\lambda$ for $\SU(a)$ is paired with the coadjoint $\SU(b)$-orbit through $((\hat\lambda,0,\ldots,0) - \frac{\Id}{b})/i$. By its $\SU(a) \times \SU(b)$-invariance, on each such pair of coadjoint orbits the push-forward measure is just a multiple of the usual Liouville measure. The assertion follows by observing that the density in \autoref{lloyd pagels} is at any point $\lambda$ proportional to the symplectic volume of the corresponding coadjoint $\SU(b)$-orbit.
\end{proof}

%Recall that this was the basic ingredient behind the principle of purification as described in \autoref{purification}.

%%%%%%%%%%%%%%%%%%%%%%%%%%%%%%%%%%%%%%%%%%%%%%%%%%%%%%%%%%%%%%%%%%%%%%%%%%%%%%%
\section{Multiplicities of Representations}
\label{multiplicities section}
%%%%%%%%%%%%%%%%%%%%%%%%%%%%%%%%%%%%%%%%%%%%%%%%%%%%%%%%%%%%%%%%%%%%%%%%%%%%%%%

All results discussed so far can be considered as asymptotic limits of corresponding statements in representation theory, at least if the Hamiltonian $K$-manifold $M$ can be linearized (``quantized'') in a certain technical sense.
This is in the spirit of Kirillov's orbit method and the theory of geometric quantization \cite{guilleminsternberg77,guilleminsternberg84,woodhouse92,guilleminlermansternberg96,kirillov99}.

In particular, this is the case when $M$ is a $K$-invariant smooth irreducible complex projective subvariety of $\PP(V)$ for a finite-dimensional unitary $K$-representation $V$. In this situation, the Fubini--Study form of $\PP(V)$ restricts to a non-degenerate symplectic form on $M$, and the $K$-action is Hamiltonian with moment map the restriction of \eqref{projective space non-abelian moment map}. We still assume that \autoref{main assumption} is satisfied.

Coadjoint orbits $\mathcal O_\lambda$ through dominant integral weights $\lambda \in \Lambda^*$ (and only these) can be realized in this setup \cite{kirillov99}: They are in a natural way projective subvarieties of $\PP(V_\lambda)$, where $V_\lambda$ is the unitary $K$-representation with highest weight $\lambda$.

In particular, the quantum marginal problem can be analyzed in this framework: Coadjoint $\SU(d)$-orbits through integral highest weights correspond to Hermitian matrices with integral eigenvalue spectra, and it suffices to consider these, since we can always rescale and take limits, or simply pass to the purified double (\autoref{purification}).

In \autoref{semiclassicallimit subsection} we will recall the limit alluded to above. We then proceed to describe the representation-theoretic analogue of the derivative principle: Multiplicities of irreducible $K$-representations can be computed from weight multiplicities by taking finite differences (\autoref{steinberg lemma}). In the case of the projective space associated with a unitary $K$-representation, the relevant weight multiplicities are those for the symmetric powers of the representation. In \autoref{multiplicities for projective space}, we give a concrete formula describing these weight multiplicities as the number of integer points in certain rational convex polytopes; we indicate that this is again amenable to algorithmic implementation. Finally, we show that in the limit we recover the corresponding statements of \autoref{projective space}.

\subsection{The Semi-Classical Limit}
\label{semiclassicallimit subsection}

Since the $K$-action on $M \subseteq P(V)$ originates from a linear action on $V$, each graded part of the homogeneous coordinate ring $\CC[M]$ is naturally a finite-dimensional $K$-representation and can thus be decomposed into irreducible sub-representations,
\begin{equation*}
  \CC[M]
  %= \bigoplus_{k=0}^\infty \CC[M]_k
  \cong \bigoplus_{k=0}^\infty \bigoplus_{\lambda \in \Lambda^* \cap \mathfrak t^*_+} V_\lambda \otimes \Hom_K(V_\lambda, \CC[M]_k).
\end{equation*}
We shall encode their multiplicities, suitably re-scaled, in the following sequence of discrete measures,
\begin{equation}
  \label{discrete irrep measure}
  \mu^K_{M,k} := \frac 1 {k^{n - R}} \sum_{\lambda \in \Lambda^* \cap \mathfrak t^*_+} \dim \Hom_K(V^*_\lambda, \CC[M]_k) \, \delta_{\lambda/k}.
\end{equation}
The factor $k^R$ accommodates for the growth of the dimension of a generic irreducible representation in the coordinate ring, which has highest weight in $\mathfrak t^*_{>0}$ (we still assume that \autoref{main assumption} is in place).

It is well-known that in the \emph{semi-classical limit} $k \rightarrow \infty$ this sequence of measures converges in distribution to the non-Abelian Duister\-maat--Heck\-man measure \cite{heckman82,guilleminsternberg82b,sjamaar95,meinrenken96,meinrenkensjamaar99,vergne98},
\begin{equation}
   \label{semiclassicallimit}
  \mu^K_{M,k} \rightarrow \DuHe^K_M.
\end{equation}
%This is part of the principle of ``quantization commutes with reduction''.
In fact, one can show using the Hirzebruch--Riemann--Roch theorem that the piecewise polynomial density function of the Duister\-maat--Heck\-man measure is at any rational regular point $\lambda \in \mathfrak t^*_{>0}$ for $\Phi_K$ (equivalently, for $\Phi_T$) given by
\begin{equation}
  \label{semiclassicallimit pointwise}
  \lim_{k \rightarrow \infty}
  \frac 1 D
  \frac 1 {(kq)^{n-R-r}}
  \dim \Hom_K(V_{kq \lambda}^*, \CC[M]_{kq}).
\end{equation}
Here, $q > 0$ is an integer and $D$ is the cardinality of the common $T$-stabilizer of the points in $\Phi_K^{-1}(\mathfrak t^*_{>0})$.%
% This is finite by the regularity assumption: Interior points which are regular for the Abelian moment map are regular for the non-Abelian moment map, since by equivariance the stabilizer is always contained in T!
% For "generic" we can pass to the points with moment map image in the K-sweep of the interior of the positive Weyl chamber, and to understand the generic stabilizer it suffices by equivariance to only look at the T-stabilizer the preimage of the interior of the positive Weyl chamber.
\footnote{Quotienting out a discrete subgroup leaves the Duister\-maat--Heck\-man measure invariant, but changes the weight lattice, and therefore the normalization of the Lebesgue measure $d\lambda$.}
The additional factor $(kq)^r$ comes from the fact that we consider the density with respect to the $r$-dimensional Lebesgue measure $d\lambda$.
If $T$ acts freely on $\Phi_K^{-1}(\lambda)$ then we can choose $q=1$ \cite{guilleminsternberg82b}.

It is well-known that the multiplicity of $V_{k\lambda}^*$ in $\CC[M]_k$ is a \emph{quasi-polynomial} in $k$, i.e., a polynomial whose coefficients are periodic functions of $k$. Therefore, the existence of the limit \eqref{semiclassicallimit pointwise} implies that (a) the degree of this quasi-polynomial is at most $n-R-r$, and (b) if it is of maximal degree then the leading order coefficient (for $k \equiv 0$ modulo the period) is equal to the value of the density function, i.e., of the limit \eqref{semiclassicallimit pointwise}. We shall therefore call this value the \emph{maximal-order growth coefficient} of the quasi-polynomial.

The rational points of the moment polytope $\Delta_K(M)$ are precisely those of the form $\lambda / k$ with $V^*_\lambda \subseteq \CC[M]_k$ \cite{brion87}. In other words, $\supp \DuHe^K_M \cap \mathfrak t^*_{\QQ} = \bigcup_k \supp \mu^K_{M,k}$.

By restricting to the action to the maximal torus $T \subseteq K$, we observe that the Abelian Duister\-maat--Heck\-man measure $\DuHe^T_M$ captures the asymptotic distribution of weights in the homogeneous coordinate ring of $M$, i.e., the asymptotics of the character of $\CC[M]_k$ as $k \rightarrow \infty$.

\begin{exl}
  For strictly dominant and integral $\lambda \in \Lambda^* \cap \mathfrak t^*_{>0}$, the Borel--Weil theorem shows that the homogeneous coordinate ring of the coadjoint orbit $\mathcal O_\lambda \subseteq P(V_\lambda)$ is equal to
  \begin{equation*}
    \CC[\mathcal O_\lambda] = \bigoplus_{k=0}^\infty V^*_{k \lambda}.
  \end{equation*}
  Therefore, all the multiplicity measures $\mu_{\mathcal O_{\lambda},k}^K$ (and hence their limit) are equal to the Dirac measure at $\lambda$.

  % Restricting to the $T$-action, we recover using Weyl's dimension formula that
  % \begin{equation*}
  %   \vol \mathcal O_\lambda =
  %   \int \DuHe^T_{O_\lambda} =
  %   \lim_{k \rightarrow \infty} \int \mu^T_{\mathcal O_\lambda,k} =
  %   \lim_{k \rightarrow \infty} \frac {\dim V_{k\lambda}} {k^R} =
  %   \prod_{\alpha > 0} \frac {\langle \lambda, \alpha \rangle} {\langle \rho, \alpha \rangle} =
  %   p_K(\lambda).
  % \end{equation*}
\end{exl}

\subsection{Multiplicities of Irreducible Representations via Finite Differences}
\label{multiplicities of irreducibles}

Multiplicities of weights and highest weights in finite-dimensional $K$-representations are related by iteratively taking (negative) finite differences in the directions of the positive roots. This can be seen as the ``quantized'' version of the derivative principle (\autoref{main theorem}). Its proof is in essence a rephrasing of the Weyl character formula, an idea which goes back at least to Steinberg \cite{steinberg61}.

\begin{lem}
  \label{steinberg lemma}
  Denote by $m_K$ and $m_T$ the highest weight and weight multiplicity function, respectively, of a finite-dimensional $K$-representation $V$.
  Then on the positive Weyl chamber we have
  \begin{equation*}
    m_K = \restrict{\left(\prod_{\alpha > 0} - D_\alpha \right) m_T}{\mathfrak t^*_+},
  \end{equation*}
  where $(D_\alpha m)(\lambda) = m(\lambda + \alpha) - m(\lambda)$ is the finite-difference operator in direction $\alpha$.
  Note that any two of the operators $D_\alpha$ commute, so that their product is independent of the order of multiplication.
\end{lem}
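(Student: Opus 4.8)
The plan is to deduce this from the Weyl character formula, following the classical argument of Steinberg. First I would recall that the character of $V$ can be written as
\begin{equation*}
  \operatorname{ch} V = \sum_{\mu \in \Lambda^*} m_T(\mu) \, e^\mu,
\end{equation*}
and that by the Weyl character formula, if $V$ has highest-weight multiplicities $m_K(\lambda)$ (for $\lambda$ dominant), then
\begin{equation*}
  \operatorname{ch} V = \sum_{\lambda \in \Lambda^* \cap \mathfrak t^*_+} m_K(\lambda) \, \chi_\lambda,
  \qquad
  \chi_\lambda = \frac{\sum_{w \in W} (-1)^{l(w)} e^{w(\lambda + \rho)}}{\sum_{w \in W} (-1)^{l(w)} e^{w\rho}}.
\end{equation*}
Multiplying through by the Weyl denominator $\prod_{\alpha > 0}(e^{\alpha/2} - e^{-\alpha/2}) = \sum_{w \in W}(-1)^{l(w)} e^{w\rho}$ turns the right-hand side into an alternating sum of exponentials supported on the $W$-orbits of the shifted dominant weights, while the left-hand side becomes $\bigl(\prod_{\alpha>0}(e^{\alpha/2}-e^{-\alpha/2})\bigr)\sum_\mu m_T(\mu) e^\mu$. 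The operator $e^{\alpha/2}-e^{-\alpha/2}$ acting on formal exponentials is, up to the shift by $\rho$, exactly the finite-difference operator in the direction $\alpha$: concretely $(e^{\alpha/2}-e^{-\alpha/2}) e^\mu = e^{\mu+\alpha/2}-e^{\mu-\alpha/2}$, and after the global shift $\mu \mapsto \mu - \rho$ this is $-D_\alpha$ applied to the coefficient sequence (since $D_\alpha m(\lambda) = m(\lambda+\alpha)-m(\lambda)$, reading off the coefficient of $e^{\lambda+\rho}$).

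Concretely, the key steps in order are: (1) write both sides of the Weyl character formula, multiply by the Weyl denominator; (2) on the left, identify the coefficient of $e^{\lambda+\rho}$, for $\lambda$ in the interior of the dominant chamber, with $\bigl(\prod_{\alpha>0} -D_\alpha\bigr) m_T$ evaluated at $\lambda$ --- here one uses that for such $\lambda$ the exponent $\lambda+\rho$ lies strictly inside the chamber, so no other Weyl-translate $w(\lambda'+\rho)$ of another term can collide with it, which pins down that the coefficient is $m_K(\lambda)$; (3) on the right, observe that the coefficient of $e^{\lambda+\rho}$ in $\sum_{\lambda'}\sum_{w}(-1)^{l(w)} m_K(\lambda') e^{w(\lambda'+\rho)}$ is precisely $m_K(\lambda)$ for $\lambda$ strictly dominant (again by the uniqueness of the dominant representative and the regularity of $\lambda+\rho$); (4) conclude equality on $\mathfrak t^*_{>0}$, and extend to the closed chamber $\mathfrak t^*_+$ by a limiting/continuity argument or by noting the restriction statement only claims equality on $\mathfrak t^*_+$, where one can handle boundary weights by the standard deformation $\lambda \mapsto \lambda + \epsilon\rho$. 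The commutativity of the $D_\alpha$ is immediate since they are translation operators on the lattice, so their product is well defined.

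The main obstacle I anticipate is bookkeeping the shift by $\rho$ correctly and making the regularity argument in step (2)–(3) fully rigorous: one must check that when $\lambda+\rho$ is strictly dominant, the only way to hit the exponent $\lambda+\rho$ in the alternating sum is with $w=e$ and $\lambda'=\lambda$, so that all the ``cross terms'' cancel or never appear --- this is exactly the content of the classical proof of the Weyl character formula and of Kostant's multiplicity formula, but it needs to be stated cleanly here rather than invoked. A secondary, more cosmetic point is matching the sign: each factor contributes $-D_\alpha$ rather than $D_\alpha$ because $e^{\alpha/2}-e^{-\alpha/2}$, after the shift, reads as $m(\lambda+\alpha)-m(\lambda) = D_\alpha m$ only up to a sign coming from which end of the difference sits at $e^{\lambda+\rho}$; tracking this gives the stated $\prod_{\alpha>0}(-D_\alpha)$. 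Everything else is a routine rephrasing of Weyl's formula, and I would keep the exposition short, citing \cite{steinberg61} for the original idea and noting the parallel with \autoref{main theorem measures}, of which this is the discrete analogue (finite differences in place of derivatives, lattice-point counts in place of volumes).
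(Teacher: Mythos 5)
Your proposal is correct and follows essentially the same route as the paper: the Weyl character formula combined with the observation that multiplying the generating function by the (possibly $\rho$-shifted) Weyl denominator $\prod_{\alpha>0}(1-e^{-\alpha})$ implements $\prod_{\alpha>0}(-D_\alpha)$ on the weight multiplicities, with the cross terms $e^{w(\lambda+\rho)-\rho}$, $w\neq e$, landing outside the dominant chamber. The only small remark is that your worry in step (4) about boundary weights is unnecessary: since $\lambda+\rho$ is regular for \emph{every} dominant $\lambda$, the identification of the coefficient of $e^{\lambda+\rho}$ already works on all of $\mathfrak t^*_+$, so no deformation or limiting argument is needed.
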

\begin{proof}
  By linearity of the finite-difference operators it suffices to establish the lemma for a single irreducible representation $V_\lambda$ of highest weight $\lambda$. It will be convenient to work with the formal character $\mathrm{ch}(V_\lambda) = \sum_\mu m_T(\mu) \, e^\mu$ \cite{cartersegalmacdonald95,knapp02}.
  By the Weyl character formula,
  \begin{equation*}
    \prod_{\alpha > 0} \left( 1 - e^{-\alpha} \right) \mathrm{ch}(V_\lambda) =
    e^{-\rho} \sum_{w \in W} (-1)^{l(w)} e^{w(\lambda + \rho)}
  \end{equation*}
  where $W$ is the Weyl group, $l(w)$ the length of a Weyl group element $w$, and $\rho$ half the sum of the positive roots.

  Now observe that the left-hand side is the generating function of $\left( \prod_{\alpha > 0} - D_\alpha \right) m_T$, since taking finite differences corresponds to multiplying the generating function by $1 - e^{-\alpha}$. Up to terms corresponding to non-dominant weights, the right-hand side is equal to $e^\lambda$, which is the generating function of $m_K = \delta_{\lambda,-}$. The assertion follows from this.
\end{proof}

Note that the Weyl character formula can be seen as the representation-theoretic analogue of the Harish-Chandra formula that was used to establish \autoref{main theorem}.
In the semi-classical limit \eqref{semiclassicallimit}, the finite differences become infinitesimal and we recover an alternative proof of \autoref{main theorem} in the algebro-geometric setting.

This argument can also be turned around to establish \eqref{semiclassicallimit} for general compact Lie groups $K$ from its Abelian version \cite[(34.8)]{guilleminsternberg84} and \autoref{main theorem}.\footnote{We thank Allen Knutson for pointing this out, as well as for sketching a self-contained proof of \eqref{semiclassicallimit}.}

\subsection{Multiplicities for Projective Spaces}
\label{multiplicities for projective space}

As in \autoref{projective space}, let $M = \PP(V)$ be the complex projective space for a unitary $K$-representation $V$. Its homogeneous coordinate ring is equal to the symmetric algebra,
\begin{equation*}
  \CC[\PP(V)] = \Sym(V^*) = \bigoplus_{k=0}^\infty \Sym^k(V^*).
\end{equation*}

Choose a weight-space decomposition $V = \bigoplus_{k=0}^n \CC v_k$, with weights $\omega_k$, and identify $V \cong \CC^{n+1}$ and $\U(V) \cong \U(n+1)$ accordingly. Observe that the maximal torus $T \subseteq K$ acts via the standard maximal torus of $\U(n+1)$, that is, the set of unitary diagonal matrices, which we denote by $\tilde T$.

Each symmetric tensor power $\Sym^k(\CC^{n+1})$ is an irreducible representation of $\U(n+1)$. Its weight spaces are all one-dimensional, and the weights that occur are precisely the $\lambda = \diag(\lambda_0, \ldots, \lambda_n)$ with $i \lambda_j \in \ZZ$, $\lambda_j \geq 0$, and $\sum_j \lambda_j = k$ \cite{fulton97}. Clearly, we can identify this set of weights with the integral points in $k \Delta_n$, where $\Delta_n$ is the $n$-dimensional standard simplex in $\RR^{n+1}$. In the language of Young diagrams, these are the weight vectors corresponding to semistandard Young tableaux of shape $(k)$ with entries in $\{0,\ldots,n\}$.

To determine the weight multiplicities with respect to $T \subseteq K$, we have to ``restrict'' each weight to $\mathfrak t$. This corresponds precisely to applying the map $P \colon \RR^{n+1} \rightarrow \mathfrak t^*, (t_k) \mapsto \sum_k t_k \omega_k$ introduced in \autoref{projective space abelian via standard simplex}. Therefore, the multiplicity in $\Sym^k(V^*)$ of the dual of a weight $\lambda \in \Lambda^*$ is given by counting integral points in a rational convex polytope parametrized by $k$ and $\lambda$:
\begin{equation}
  \label{weight multiplicities for projective space}
  m_{T,k}(\lambda) = \#\left(\Delta(\lambda, k) \cap \ZZ^{n+1}\right),
\end{equation}
where
\begin{equation}
  \Delta(\lambda, k) = \Big\{ (t_j) \in \RR^{n+1} : t_j \geq 0, \sum_{j=0}^n t_j \omega_j = \lambda, \sum_{j=0}^n t_j = k \Big\}.
\end{equation}

\begin{rem}
\label{future remark}
Such \emph{vector partition functions} can be evaluated efficiently using Barvinok's algorithm if the group $K$ and the ambient dimension $\dim V = n + 1$ is fixed \cite{barvinok93,barvinok94,barvinokpommersheim99}, namely in time $O(\mathrm{poly}(\log k))$. In fact, $m_{T,k}$ is a piecewise quasi-polynomial function in both $\lambda$ and $k$, and there are parametric generalizations of Barvinok's algorithm for computing these quasi-polynomials \cite{verdoolaegebruynooghe08,verdoolaegeseghirbeylsetal07}.
Since we can compute multiplicities of irreducible $K$-representations by taking finite differences of weight multiplicities in the direction of positive roots (\autoref{steinberg lemma}), this can also be done efficiently if $K$ is fixed.
We will report on a generalization of this technique to the general branching problem for compact connected Lie groups in a forthcoming article \cite{christandldoranwalter12}.
\end{rem}

There is also a jump formula by Boysal and Vergne \cite{boysalvergne09}, which as in \autoref{projective space} can be used to inductively compute the quasi-polynomials chamber by chamber.

\medskip

We now turn to the semi-classical limit. As $k \rightarrow \infty$, it is clear that
\begin{equation*}
  \mu^{\tilde T}_{\PP(V), k} = \frac 1 {k^n} \sum_{\lambda \in \Delta_n \cap \frac 1 k \ZZ^{n+1}} \delta_\lambda
\end{equation*}
converges to Lebesgue measure on the standard simplex $\Delta_n$, normalized to total volume
\begin{equation*}
  \lim_{k \rightarrow \infty} \frac 1 {k^n} \dim \Sym^k(V^*) = \lim_{k \rightarrow \infty} \frac 1 {k^n} {\binom{n+k}{n}} = \frac 1 {n!}.
\end{equation*}
Therefore, $\mu^T_{\PP(V),k}$ converges to the push-forward of Lebesgue measure on $\Delta_n$ along the map $P$. By the semi-classical limit \eqref{semiclassicallimit}, this is of course equivalent to the assertion of \autoref{projective space abelian via standard simplex}. Moreover, note that the quantity
\begin{equation*}
  m_{T,k}(k \lambda) =
  \#\left(\Delta(k \lambda, k) \cap \ZZ^{n+1}\right) =
  \#\left(\Delta(\lambda, 1) \cap \tfrac 1 k \ZZ^{n+1}\right)
\end{equation*}
is the Ehrhart quasi-polynomial associated to rational polytope $\Delta(\lambda, 1)$ \cite{beckrobins09}. It is intuitively clear that its growth in $k$ should be related to the volume of this polytope. Indeed,
\begin{equation*}
  m_{T,k}(k \lambda) = k^{n-r} \vol \Delta(\lambda, 1) + O(k^{n-r-1}),
\end{equation*}
where $\vol$ is the $(n-r)$-dimensional volume with respect to the measure $dt/d\hat\lambda$ defined in \autoref{projective space} \cite[Exercise 3.29]{beckrobins09}.
Observe that this agrees with \eqref{semiclassicallimit pointwise} and \autoref{density in polytope picture}: The maximal-order growth coefficient is a constant equal to the Abelian Duister\-maat--Heck\-man density at point $\lambda$.

%%%%%%%%%%%%%%%%%%%%%%%%%%%%%%%%%%%%%%%%%%%%%%%%%%%%%%%%%%%%%%%%%%%%%%%%%%%%%%%
\section{Kronecker and Plethysm Coefficients}
\label{kronecker section}
%%%%%%%%%%%%%%%%%%%%%%%%%%%%%%%%%%%%%%%%%%%%%%%%%%%%%%%%%%%%%%%%%%%%%%%%%%%%%%%

In this section, we describe the representation theory of the quantum marginal problem in more detail. For distinguishable particles, the relevant multiplicities can be expressed in terms of decomposing tensor products of irreducible representations of the symmetric group (\autoref{kronecker coefficients}). In particular, the joint eigenvalue distribution of the reduced density matrices of a tripartite pure state is determined by the asymptotics of the Kronecker coefficients (see \eqref{kroneckerlimit}). We emphasize that by specializing the method described in \autoref{multiplicities for projective space} we get a novel algorithm for computing Kronecker coefficients which is efficient for Young diagrams of bounded height.
Indistinguishable particles correspond to certain plethysm coefficients and we conclude by illustrating this connection (\autoref{plethysms}).

\subsection{Kronecker Coefficients}
\label{kronecker coefficients}

Recall that for $N$ distinguishable particles we have to consider the action of $K = \SU(d_1) \times \ldots \times \SU(d_N)$ on a coadjoint $\SU(d_1 \cdots d_N)$-orbit $M = \mathcal O_{\tilde\lambda}$, where we now assume that $\tilde\lambda$ is an integral weight in $\mathfrak t^*_+$. The multiplicity measures $\mu^K_{M,k}$ are determined by the decomposition of the homogeneous coordinate ring
\begin{equation*}
  \CC[M] = \bigoplus_{k=0}^\infty \left( V^{d_1 \cdots d_N}_{k \tilde\lambda} \right)^*
\end{equation*}
into $K$-isotypical components (the superscript labels the corresponding $\SU$).

We can express this equivalently using the representation theory of the symmetric group $S_m$. Recall that by Schur--Weyl duality the diagonal action of $\SU(d)$ and the permutation action of $S_m$ on $(\CC^d)^{\otimes m}$ generate each other's commutant, so that
\begin{equation}
  \label{schur weyl decomposition}
  (\CC^d)^{\otimes m} \cong \bigoplus_{\mu} V^d_\mu \otimes [\mu].
\end{equation}
Here, the sum runs over all Young diagrams $\mu=(\mu_1,\ldots,\mu_d)$ with $\abs \mu := \sum_j \mu_j = m$ boxes and at most $d$ rows, $V^d_\mu$ is the irreducible representation of $\SU(d)$ with highest weight $X \mapsto i \sum_j X_j \mu_j$, and $[\mu]$ is the corresponding irreducible representation of $S_m$ (see \cite{fulton97} for details). We shall freely identify Young diagrams and the corresponding highest weights.

In particular, we can realize the irreducible representation $V^{d_1 \cdots d_N}_{k \tilde\lambda}$ in $(\CC^{d_1} \otimes \ldots \otimes \CC^{d_N})^{\otimes \abs{k \tilde\lambda}}$. Comparing the Schur--Weyl decomposition \eqref{schur weyl decomposition} for the full Hilbert space with the tensor product of the decompositions for the individual subsystems, we find that
\begin{equation*}
  V^{d_1 \cdots d_N}_{k \tilde\lambda}
  \cong
  \bigoplus_{\lambda_1,\ldots,\lambda_N}
  V^{d_1}_{\lambda_1} \otimes \ldots \otimes V^{d_N}_{\lambda_N} \otimes
  \Hom_{S_{\abs{k\tilde\lambda}}}([k \tilde\lambda], [\lambda_1] \otimes \ldots \otimes [\lambda_N]),
\end{equation*}
where the sum runs over the Young diagrams $\lambda_i$ with $\abs{k\tilde\lambda}$ boxes and at most $d_i$ rows. Therefore,
\begin{equation}
\begin{aligned}
  \label{generalized kroneckers}
  \mu^K_{M,k}
  = &\frac 1 {k^{n-R}}
  \sum_{\lambda_1,\ldots,\lambda_N} \dim \Hom_{S_{\abs{k\tilde\lambda}}}([k \tilde\lambda], [\lambda_1] \otimes \ldots \otimes [\lambda_N]) ~ \delta_{(\lambda_1/k,\ldots,\lambda_N/k)}\\
  = &\frac 1 {k^{n-R}}
  \sum_{\lambda_1,\ldots,\lambda_N} \dim \Hom_{S_{\abs{k\tilde\lambda}}}([k \tilde\lambda]^*, [\lambda_1] \otimes \ldots \otimes [\lambda_N]) ~ \delta_{(\lambda_1/k,\ldots,\lambda_N/k)},
\end{aligned}
\end{equation}
where the latter identity holds due to the self-duality of the representations of the symmetric group.
In particular, the rational points of the non-Abelian moment polytope $\Delta_K(\mathcal O_{\tilde\lambda})$ are precisely
\begin{align}
  \label{rational points QMP moment polytope}
  %\Delta_K(M) \cap \mathfrak t^*_{\QQ} =
  \bigcup_k \left\{ (\lambda_1/k,\ldots,\lambda_N/k) : \begin{array}{l}
    [k\tilde\lambda] \subseteq [\lambda_1] \otimes \ldots \otimes [\lambda_N], \text{ where the}\\
    \text{$\lambda_i$ have $\abs{k\tilde\lambda}$ boxes and at most $d_i$ rows}
  \end{array} \right\}.
\end{align}
See \autoref{hedgehog figure} for an illustration of the multiplicity measures corresponding to the mixed-state quantum marginal problem for two qubits discussed in \autoref{bravyi example}.

\begin{figure}
  \centering
  \includegraphics[width=3.5cm]{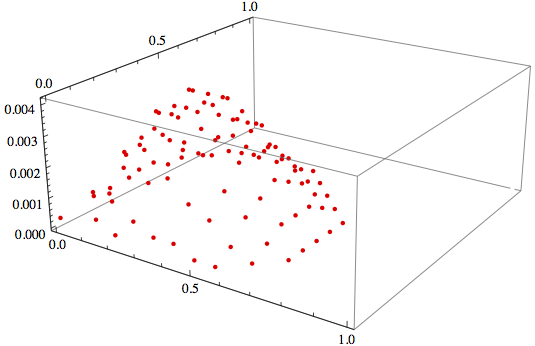}
  \includegraphics[width=3.5cm]{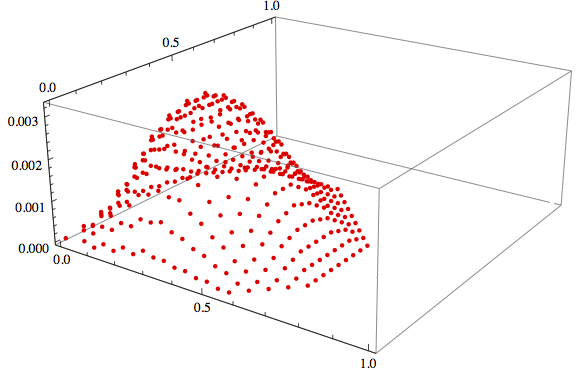}
  \includegraphics[width=3.8cm,trim=0cm 0cm 0cm 0cm]{bravyi_polytope_4210.png}
  \caption{(a) and (b) Illustration of the multiplicity measures $\mu^K_{M,k}$ for the mixed-state quantum marginal problem of two qubits with global spectrum $(4/7,2/7,1/7,0)$ and $k=28, 56$, which have been computed by the algorithm described in \autoref{kronecker coefficients}. (c) Their semi-classical limit, i.e., the corresponding Duister\-maat--Heck\-man measure as computed in \autoref{bravyi example}.}
  \label{hedgehog figure}
\end{figure}

% \begin{exl}
%   \label{lloyd pagels reprtheo}
%   By Schur's lemma,
%   \begin{equation*}
%     \dim \Hom_{S_k}([\lambda], [\mu]) = \delta_{\lambda,\mu}.
%   \end{equation*}
%   Therefore, we have
%   \begin{equation*}
%     \Sym^k(\CC^a \otimes \CC^b) = \bigoplus_\lambda V^a_\lambda \otimes V^b_\lambda,
%   \end{equation*}
%   where the sum runs over all Young diagrams with $k$ boxes and at most $\min(a,b)$ rows. Suppose that $a \leq b$. Restricting to the action of $\SU(a)$, we find that
%   \begin{equation*}
%     \mu^{\SU(a)}_{\PP(\CC^a \otimes \CC^b),k}
%     = \frac 1 {k^{n-R}} \sum_\lambda \dim V^b_\lambda ~ \delta_{\lambda/k}
%     = \frac 1 {k^{a-1}} \sum_\lambda \frac {\dim V^b_\lambda} {k^{a(a-1)/2 + a(b-a)}} ~ \delta_{\lambda/k}
%   \end{equation*}
%   which by Weyl's dimension formula converges to the marginal eigenvalue density for the bipartite pure-state quantum marginal problem derived in \autoref{lloyd pagels}.
%   % Proof: We have
%   %   n - R
%   % =           (ab-1) - a(a-1)/2
%   % = (a-1)  +  (ab-1) - a(a-1)/2 - (a-1)
%   % = (a-1)  +  (ab-1) - a(a-1)   - (a-1) + a(a-1)/2
%   % = (a-1)  +    ab   - a(a-1)   -   a   + a(a-1)/2
%   % = (a-1)  +    ab   -        aa        + a(a-1)/2
%   % = (a-1)  +    a(b-a)                  + a(a-1)/2
%   % = (a-1)  +  [ a(a-1)/2 + a(b-a) ]
%   % The right-hand side [ ... ] is just the number of positive roots of SU(b) which are not orthogonal to a generic \lambda.
% \end{exl}

\begin{rem}
We can write the multiplicities in \eqref{generalized kroneckers} in the following symmetric form:
\begin{equation*}
  \dim \Hom_{S_{\abs{k\tilde\lambda}}}([k \tilde\lambda]^*, [\lambda_1] \otimes \ldots \otimes [\lambda_N]) =
  \dim \left(
    [\lambda_1] \otimes \ldots \otimes [\lambda_N] \otimes [k \tilde\lambda]
  \right)^{S_{\abs{k\tilde\lambda}}}.
\end{equation*}
Observe that the right-hand side is a multiplicity for the pure-state quantum marginal problem for $\CC^{d_1} \otimes \ldots \otimes \CC^{d_N} \otimes \CC^{d_1 \cdots d_N}$. Indeed, the homogeneous coordinate ring of a projective space is just the symmetric algebra (\autoref{multiplicities for projective space}), whose graded parts correspond to the trivial representations of the respective symmetric groups.
This is the representation-theoretic perspective on purification (cf.\ \autoref{purification}, in particular \autoref{purification measure}).
\end{rem}

For the tripartite pure-state quantum marginal problem (equivalently, the mixed-state bipartite quantum marginal problem), the relevant multiplicities are the well-known \emph{Kronecker coefficients} of the symmetric group,
\begin{equation*}
  g_{\lambda,\mu,\nu} = \dim \left( [\lambda] \otimes [\mu] \otimes [\nu] \right)^{S_k}.
\end{equation*}
They are the symmetric group analogue of the Littlewood--Richardson coefficients of the unitary group (in fact, the latter can be considered as a special case) but much harder to compute in general, since there is no combinatorial description like the Littlewood--Richardson rule.

The corresponding characterization \eqref{rational points QMP moment polytope} of the non-Abelian moment polytope has already been observed in \cite{christandlmitchison06,klyachko04,christandlharrowmitchison07}, as well as in \cite{daftuarhayden04} for the projection onto two of the subsystems. The semi-classical limit refines this characterization: Not only can one read off the existence of quantum states with given marginal eigenvalue spectra from the asymptotic non-vanishing of the corresponding Kronecker coefficients $g_{k\lambda,k\mu,k\nu}$, but their growth also encodes the probability of finding these eigenvalue spectra when the global state is chosen according to the invariant probability measure.
Explicitly, \eqref{semiclassicallimit} states that
\begin{equation}
  \label{kroneckerlimit}
  \frac 1 {k^p} \sum_{\lambda,\mu,\nu} g_{\lambda,\mu,\nu} \, \delta_{\lambda/k, \mu/k, \nu/k}
  \rightarrow \DuHe^{\PP(\CC^a \otimes \CC^b \otimes \CC^c)}_{\SU(a) \times \SU(b) \times \SU(c)},
\end{equation}
where $p = n - R = {abc - 1 - a(a-1)/2 + b(b-1)/2 + c(c-1)/2}$, and where the sum runs over all Young diagrams $\lambda, \mu, \nu$ with $k$ boxes and at most $a$, $b$ and $c$ rows, respectively.

\medskip

The method described in \autoref{multiplicities for projective space} in particular provides a novel algorithm for computing the Kronecker coefficients which is efficient for Young diagrams of bounded height: Using the finite-difference formula of \autoref{steinberg lemma}, we can reduce to the computation of a bounded number of weight multiplicities \eqref{weight multiplicities for projective space}, which using Barvinok's algorithm can be evaluated in polynomial time in the input size, i.e., in time $O(\mathrm{poly}(\log k))$, where $k$ is the number of boxes of the Young diagrams. As mentioned in \autoref{future remark}, we will elaborate on this algorithm in a forthcoming article \cite{christandldoranwalter12}.

\subsection{Plethysm Coefficients}
\label{plethysms}

While the quantum marginal problem for distinguishable particles is connected to (generalized) Kronecker coefficients, it is for indistinguishable particles related to certain plethysm coefficients. Indeed, if $M = \PP(V_\lambda)$ for an irreducible $\SU(d)$-representation $V_\lambda$ then its coordinate ring consists of the \emph{plethysms}
\begin{equation*}
  \CC[M] = \bigoplus_{k=0}^\infty \Sym^k(V_\lambda^*).
\end{equation*}
See e.g.\ \cite{macdonald95} for more information on plethysms, which are in general defined as the composition of Schur functors.
In particular, the bosonic and fermionic pure-state marginal problem are related to the asymptotics of $\Sym^k(\Sym^N(\CC^d)^*)$ and $\Sym^k(\Alt^N(\CC^d)^*)$, respectively, as $k \rightarrow \infty$.

\medskip

Let us illustrate this by giving an alternative derivation of the Duister\-maat--Heck\-man measures for $N$ bosonic qubits (cf.\ \autoref{pure states bosonic qubits}). We will explicitly compute the asymptotic weight multiplicity distribution of the plethysm $\Sym^k(\Sym^N(\CC^2)^*) \cong \Sym^k(\Sym^N(\CC^2))$ as $k \rightarrow \infty$, and then apply the derivative principle.
The main combinatorial tool we shall employ are the $q$-binomial coefficients
\begin{equation*}
  \qbinom{n}{k}_q = \frac{[n]_q !}{[k]_q ! [n-k]_q !}
  .
\end{equation*}
Recall that these are defined in terms of the $q$-integers $[n]_q = \frac{1-q^n}{1-q}$ %$[n]_q = (1+q+ \ldots +q^{n-1})$
and $q$-factorials $[n]_q ! = [n]_q [n-1]_q\ldots [1]_q$.
We start with the following description of the character of $\Sym^k(\Sym^N(\CC^2))$ in terms of $q$-binomial coefficients:

\begin{prp}[\protect{\cite[pp. 53]{springer77}}]
\label{springer character qbinomial}
  Let $k,N \in \NN$ and $q=e^{\omega_1}$. Then,
  \begin{equation*}
    \mathrm{ch} \left( \Sym^k \left( \Sym^N \left( \CC^2 \right) \right) \right) = \qbinom{k+N}{N}_{q^2} q^{-kN}.
  \end{equation*}
\end{prp}

\begin{prp}
  \label{asy eq q-stuff}
  As functions on the open unit disk $\{ q \in \CC : |q|<1 \}$ one has for fixed $N \in \NN$ and $k \rightarrow \infty$ the following asymptotic equivalence
  \[
    [N]_q ! \qbinom{k+N}{N}_q \sim [k]_q^N
    .
  \]
%     and consequently
%     \[
%       \qbinom{k+N}{N}_{q^2} \cdot q^{-kN} \sim \frac{[k]_{q^2}^N \cdot q^{-kN}}{[N]_{q^2} !}        .
%     \]
\end{prp}
\begin{proof}
  %We will prove the first identity which implies immediately the second.
  Following \cite[(9.1)]{kaccheung02}, for any fixed $c \in \NN$ and $|q|<1$ one has
  \[
    \lim_{k \rightarrow \infty} \frac{1-q^{k+c}}{1-q} = \frac{1}{1-q}.
  \]
  By applying this identity both to the numerator and the denominator,
  \[
    \lim_{k \rightarrow \infty} \frac{[k+c]_q}{[k]_q} = 1
    .
  \]
  Hence
  \[
    \lim_{k \rightarrow \infty} \frac{ [N]_q !\qbinom{k+N}{N}_q}{[k]_q^N} =
    \lim_{k \rightarrow \infty} \frac{[k + N]_q [k+N-1]_q \ldots [k+1]_q}{[k]_q^N} =
    1
    .
    \qedhere
  \]
\end{proof}

The following corollary is an easy application of Osgood's theorem \cite{osgood01} to \autoref{asy eq q-stuff} (see e.g.~\cite{beardonminda03}).

\begin{cor}
  \label{osgood}
  Fix $N$, and define $f_k(q) = [N]_q ! \qbinom{k+N}{N}_q$ and $g_k(q) = [k]_q^N$. Then, the sequences $f_k$ and $g_k$ converge, as $k \rightarrow \infty$, pointwise to the same holomorphic function on some open dense subset in the open unit disk $\{ q \in \CC : |q|<1 \}$. In particular, the limits are equal as power series.
\end{cor}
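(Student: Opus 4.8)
The plan is to feed the pointwise asymptotic equivalence of Proposition~\ref{asy eq q-stuff} into Osgood's theorem, exploiting that the simpler sequence $g_k$ visibly converges. First I would record that for $|q|<1$ one has $g_k(q)=[k]_q^N=\bigl(\tfrac{1-q^k}{1-q}\bigr)^N$, and since $q^k\to 0$ uniformly on compact subsets of the open unit disk, $g_k$ converges there, locally uniformly and in particular pointwise, to the holomorphic function $g(q):=(1-q)^{-N}$. I would also note that $g_k$ has no zeros in the open unit disk: the roots of $[k]_q=1+q+\dots+q^{k-1}$ are the nontrivial $k$-th roots of unity and hence lie on the unit circle. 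Consequently the ratio $f_k/g_k$ is a well-defined holomorphic function on the disk, Proposition~\ref{asy eq q-stuff} applies verbatim, and $f_k(q)/g_k(q)\to 1$ pointwise for $|q|<1$. Writing $f_k=(f_k/g_k)\cdot g_k$ then gives $f_k\to g$ pointwise on the open unit disk as well.

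Next, $f_k$ and $g_k$ are holomorphic (indeed polynomial, recalling that the Gaussian binomial $\qbinom{k+N}{N}_q$ is a polynomial) on the open unit disk and converge there pointwise to $g$. Osgood's theorem therefore yields an open dense subset $U$ of the disk on which both $f_k\to g$ and $g_k\to g$ converge locally uniformly; in particular the common limit is holomorphic on $U$. That is exactly the first assertion of the corollary. Since $g_k$ in fact converges locally uniformly on the whole disk, it is enough to invoke Osgood for the sequence $f_k$ and then intersect.

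For the ``equal as power series'' clause I would argue that locally uniform convergence of holomorphic functions on $U$ forces convergence of all derivatives, hence of all Taylor coefficients, at every point of $U$; combined with the fact that $g$ extends holomorphically to the whole disk, this pins down $\lim_k f_k$ and $\lim_k g_k$ as one and the same power series, namely the expansion $(1-q)^{-N}=\sum_{j\ge 0}\binom{N+j-1}{N-1}q^j$ of $g$.

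I do not expect a serious obstacle: the mathematical substance lies entirely in Proposition~\ref{asy eq q-stuff} and in Osgood's theorem, both of which may be invoked. The only point needing care is the passage from ``pointwise, and locally uniform on a dense open set'' to the coefficientwise statement about power series, since pointwise convergence of holomorphic functions by itself does not force convergence of Taylor coefficients; Osgood's theorem is precisely what repairs this. As an independent sanity check one can verify the whole corollary by hand, since a brief manipulation of $q$-factorials gives $f_k(q)=\prod_{i=1}^N(1-q^{k+i})/(1-q)^N$, which converges to $(1-q)^{-N}$ both coefficientwise and locally uniformly on the disk, exactly as $g_k$ does.
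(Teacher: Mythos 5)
Your proposal is correct and follows the same route the paper intends: the paper offers no written proof beyond declaring the corollary ``an easy application of Osgood's theorem'' to \autoref{asy eq q-stuff}, and your argument supplies exactly the missing details (the explicit limit $(1-q)^{-N}$, the nonvanishing of $[k]_q$ on the open disk, and the passage from locally uniform convergence on the dense open set to equality of Taylor coefficients). Your closing observation that $f_k(q)=\prod_{i=1}^{N}(1-q^{k+i})/(1-q)^{N}$ converges locally uniformly on the \emph{whole} disk is a nice bonus that in fact renders Osgood's theorem unnecessary, but it does not change the substance of the argument.
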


We can use this result to extract asymptotic multiplicity information.

\begin{prp}
  \label{asymptotic uniformness}
  The discrete measures $\mu_{(N),k}^{\U(1)} := \mu_{\PP(\Sym^N(\CC^2)),k}^{\U(1)}$ as defined in \eqref{discrete irrep measure} tend in the limit $k \rightarrow \infty$ to $\frac 1 {N!}$ times the probability distribution of the sum of $N$ independent random variables uniformly distributed on $[-1,1]$.
\end{prp}
\begin{proof}
  By \autoref{springer character qbinomial}, $\mu_{(N),k}^{\U(1)}$ is a finite measure with generating function
  \begin{equation*}
    \int q^x \, d\mu_{(N),k}^{\U(1)}(x \, \omega_1) =  \frac 1 {k^N} \qbinom{k+N}{N}_{q^{2/k}} q^{-N}.
  \end{equation*}
  Let $\nu_k$, $\omega_k$ be finite measures with generating functions $[N]_{q^{2/k}}!$ and
  \begin{equation*}
    \frac 1 {k^N} [k]^N_{q^{2/k}} q^{-N} = \left( \frac{q^{-1} + q^{-1+2/k} + \ldots + q^{1-2/k}}{k} \right)^N,
  \end{equation*}
  respectively. Obviously, $\omega_k$ is asymptotically distributed like the sum of $N$ independent random variables uniformly distributed on the interval $[-1,1]$, and by \autoref{osgood} so is $\mu^{\U(1)}_{(N),k} \star \nu_k$.
  Since $\nu_k \rightarrow N! \, \delta_0$ as $k \rightarrow \infty$, this implies our assertion.
\end{proof}

By using the semi-classical limit \eqref{semiclassicallimit}, we conclude once again that the Abelian Duister\-maat--Heck\-man measure is given by the formula that was established in \autoref{sym N qubits abelian}. The non-Abelian Duister\-maat--Heck\-man measure is obtained as in \autoref{sym N qubits non-abelian} by applying the derivative principle.

\begin{exl}[$N=2$]
  For the plethysms $\Sym^k(\Sym^2(\CC^2))$ we can also illustrate the semi-classical limit for the $\SU(2)$-action, since the decomposition into irreducible $\SU(2)$-representations is explicitly known \cite[\S 1.5, Example 6 (a)]{macdonald95}:
  \begin{equation*}
    \Sym^k(\Sym^2 (\CC^2)) \cong
    \Sym^{2k}(\CC^2) \oplus
    \Sym^{2k-4}(\CC^2) \oplus
    \ldots \oplus
    \Sym^{2 / 0}(\CC^2)
  \end{equation*}
  The last summand is $\Sym^2 (\CC^2)$ for odd $k$, and $\Sym^0 (\CC^2)$ for even $k$.
  % _S_U is important here
  Therefore, the discrete measures as defined in \eqref{discrete irrep measure} are given by
  \begin{equation*}
    % dim_C M = dim_C Sym^2(C^2) - 1 = (2+2-1 2) - 1 = 3 - 1 = 2
    % R = 1
    \mu^{\SU(2)}_{\PP(\Sym^2(\CC^2)),k} = \sum_{l=2k, 2k-4, \ldots, 2 \vert 0} \frac 1 {k^{2-1}} \delta_{\frac l k}.
  \end{equation*}
  In the limit $k \rightarrow \infty$, they converge to the non-Abelian Duister\-maat--Heck\-man measure as computed in \autoref{sym N qubits non-abelian},
  %\begin{equation*}
  %  \DuHe^{\U(1)}_{(2)} =
  %  \frac 1 8 \left( (x+2)_+ - 2 x_+ + (x-2)_+ \right) dx =
  %  \begin{cases}
  %    \frac {2+x} 8 & x \in [-2,0]\\
  %    \frac {2-x} 8 & x \in [0,2]\\
  %    0 & \text{otherwise}
  %  \end{cases}
  %\end{equation*}
  \begin{equation*}
    \DuHe^{\SU(2)}_{\PP(\Sym^2(\CC^2))} = - \tfrac 1 4 \left( (x+2)^0_+ - 2 x^0_+ + (x-2)^0_+ \right) dx =
    \tfrac 1 4 \Id_{[0,2)}(x) dx.
  \end{equation*}
  See \autoref{sym two qubits figure} for an illustration.
\end{exl}

\begin{rem}
	The description of the character of $\Sym^k(\Sym^N(\CC^2))$ via $q$-binomial coefficients has the additional advantage that one is able to compute all higher cumulants and moments of the associated distribution for any fixed $k$ and $N$ (see \cite{MR2019639}). This is due to a method by Panny \cite{MR845446}.
\end{rem}

%%%%%%%%%%%%%%%%%%%%%%%%%%%%%%%%%%%%%%%%%%%%%%%%%%%%%%%%%%%%%%%%%%%%%%%%%%%%%%%%
\begin{acknowledgement}
We would like to thank Alonso Botero, Emmanuel Briand, Peter B\"urgisser, Beno\^{i}t Collins, David Gross, Christian Ikenmeyer, Eckhard Meinrenken, Markus P.\ M\"uller, Mercedes Rosas, and Volkher Scholz for helpful discussions. We thank Graeme Mitchison for joint initial discussions on the topic of asymptotics of Kronecker coefficients. The second author would like to express his particular gratitude to Frances Kirwan for many fruitful discussions regarding moment maps and invariant theory and their many uses.
We acknowledge financial support by the German Science Foundation (grant CH 843/2-1), the Swiss National Science Foundation (grants PP00P2\_128455, 20CH21\_138799 (CHIST-ERA project CQC), 200021\_138071), the Swiss National Center of Competence in Research 'Quantum Science and Technology (QSIT)', the Swiss State Secretariat for Education and Research supporting COST action MP1006 and the European Research Council under the European Union's Seventh Framework Programme (FP/2007-2013) / ERC Grant Agreement no. 337603. We also acknowledge support by the Excellence Initiative of the German Federal and State Governments through the Junior Research Group Program within the Institutional Strategy ZUK 43.
\end{acknowledgement}
%%%%%%%%%%%%%%%%%%%%%%%%%%%%%%%%%%%%%%%%%%%%%%%%%%%%%%%%%%%%%%%%%%%%%%%%%%%%%%%%

%\clearpage
%\bibliographystyle{amsplain}
\bibliographystyle{unsrt}
\bibliography{dhmeasure}

\end{document}